\begin{document}

\numberwithin{equation}{section}

\newtheorem*{theorem*}{Theorem} 
\newtheorem{theorem}{Theorem}[section] 
\newtheorem{lemma}[theorem] {Lemma}
\newtheorem{proposition}[theorem] {Proposition} 
\newtheorem{corollary}[theorem] {Corollary} 
\theoremstyle{definition} 
\newtheorem{definition}[theorem] {Definition} 
\newtheorem{example}[theorem] {Example} 
\newtheorem{notation}[theorem] {Notation} 
\newtheorem{conjecture}[theorem] {Conjecture}
\newtheorem{remark}[theorem] {Remark} 
\theoremstyle{remark} 

\makeatletter
\@addtoreset{figure}{section}
\makeatother
\renewcommand{\thefigure}{\arabic{section}.\arabic{figure}}
%\newcounter

%\newtheorem{procLaim}{}
\def\Rem#1{\noindent Remark~\ref{#1}}
\def\Def#1{\noindent Definition~\ref{#1}}
\def\Lem#1{\noindent Lemma~\ref{#1}} 
\def\Thm#1{\noindent Theorem~\ref{#1}} 
\def\Sec#1{\noindent Section~\ref{#1}} 
\def\Fig#1{\noindent Figure~\ref{#1}} 

\def\Prop{\noindent Proposition}
\def\Ex{\noindent Example}
\def\Cor{\noindent Corollary}

\long\def\BTHMm#1#2{\begin{theorem*}[#1]#2\end{theorem*}}
\long\def\BTHM#1#2{\begin{theorem}\LL{#1}#2\end{theorem}}
\long\def\BDF#1#2{\begin{definition}\LL{#1}#2\end{definition}}
\long\def\BPROP#1#2{\begin{proposition}\LL{#1}#2\end{proposition}}
\long\def\BLEM#1#2{\begin{lemma}\LL{#1}#2\end{lemma}}
\long\def\BNT#1#2{\begin{notation}\LL{#1}#2\end{notation}}
\long\def\BREM#1#2{\begin{remark}\LL{#1}#2\end{remark}}
\long\def\BEX#1#2#3{\begin{example}[#2]\LL{#1}#3\end{example}}
\long\def\BCOR#1{\begin{corollary}#1\end{corollary}}

\long\def\BEQ#1#2{\begin{equation}\LL{#1}#2\end{equation}}
\long\def\BB#1#2#3{\begin{#1}\LL{#2}#3\end{#1}} 
\long\def\BBF#1#2#3{\begin{figure}[h]#3\caption{#2}\LL{#1}\end{figure}} 

\def\Pr{\begin{proof}}
\def\rP{\end{proof}} 
\def\bydef#1{{\itshape\bfseries\rmfamily#1\,}}
\def\bydef#1{#1}

\def\i{iii}
\def\p{\mathcal P}
\def\X{X}%{\mathcal X}
\def\A{\mathcal A}
\def\E{\mathcal E}
\def\J{\mathcal J}
\def\dd{\mathcal D}
\def\varpsi{\varPsi}
\def\uv{U}
\def\ud{D}
\def\udd{D^*}
\def\h{\mathcal H}
\def\L{\mathcal L}
\def\C{\mathbb C}
\def\R{\mathbb R}
\def\fone{\xi_R^{\phantom{y}}}
\def\ftwo#1{\psi_{R,#1}^{\phantom{y}}}
\def\zetaa#1#2{\zeta_{#2,#1}^{\phantom{y}}}

\def\bb#1#2{\begin{#1}\LL{#2}} 
\def\B#1{$\discretionary{}{}{}$#1$\discretionary{}{}{}$}
\newcounter{tmp}
\def\blist{\begin{list}{\hskip-19pt(\arabic{tmp})}{\usecounter{tmp}}}
\def\sitem{\parskip\par\item}

\def\boxx#1{\leavevmode\vbox{\hbox to 0pt{\hss\raise1.8ex\vbox 
to 0pt{\vss\hrule\hbox{\vrule\kern.75pt\vbox{\kern.75pt\hbox{\tiny #1}\kern.75pt}\kern.75pt\vrule}\hrule}}}\relax} 

\def\LL#1{\label{#1}\protect\boxx{#1}} 
\def\LL#1{\label{#1}}

\def\e#1{(\reF{#1})}
\def\reF#1{\ref{#1}}%*

\newenvironment{procLaim}{\it}{\par\smallskip}
\def\proclaim#1{\par\smallskip\noindent {#1}\bb{procLaim}{?}}
\def\endproclaim{\end{procLaim}} 

\def\zf{zeta function } 
\def\szf{spectral zeta function } 
\def\zfoap{zeta function of a polynomial} 
\def\zfotp{zeta function of the polynomial } 
\def\Str{Strichartz} 
\def\Edef{\overset{\textup{\textsf{\tiny def}}}=}
\def\Si{Sier\-pi\'n\-ski} 
\def\Sig{Sier\-pi\'n\-ski gasket} 
\def\Sil{Sierpi\'nski lattice}
\def\iSig{infinite Sierpi\'nski gasket} 
\def\Sip{Sierpi\'nski pre-gasket}
\def\Lp{Laplacian}
\def\bfit#1{{\bfseries\itshape#1\,}}

\def\<{\langle}
\def\>{\rangle}

\def\iiii{}
\def\iii{}
\def\ii{}
\def\ih{}
\def\ps{}
\def\pms{}
\def\lala{}
\def\ms{}
\def\sm#1{\mbox{$#1$}}\def\sm{}
\def\lg#1{\mbox{$#1$}}
\def\Lg#1{\mbox{$#1$}}
\def\LG#1{\mbox{$#1$}}

\def\split{}
\def\endsplit{}
\def\ndt{\noindent}\def\ndt{}
\def\np{\newpage}
\def\pb{\pagebreak}

\long\def\B{\par}

\def\<{\langle} 
\def\>{\rangle} 
\def\({\<\<} 
\def\){\>\>}
\def\tri{| \hskip-0.02in |\hskip-0.02in |}
\def\e{\alpha}
\def\p{\mathcal P}
\def\E{\mathcal E}
\def\J{\mathcal J}
\def\dd{\mathcal D}\def\D{\mathcal D}
\def\ddD#1{#1,#1}
\def\h{\mathcal H}
\def\Z{\mathcal Z}
\def\J{\mathcal J}
\def\A{\mathcal A}
\def\L{\mathcal Z}
\def\M{\mathcal M}
\def\N{\mathcal N}
\def\C{\mathbb C}
\def\R{\mathbb R}

\def\Edef{\overset{\textup{\textsf{ def}}}=}
\def\Si{{Sierpi\'nski}} 
\def\Sig{Sierpi\'nski gasket} 
\def\Sil{Sierpi\'nski lattice}
\def\iSig{infinite Sierpi\'nski gasket} 
\def\Sip{Sierpi\'nski pre-gasket}
\def\Lp{Laplacian}
\def\uv{U}
\def\ud{D}
\def\udd{D^ *}
\def\X{%\mathcal 
X}
\def\bX{\bar %\mathcal 
X}

\def\dim{\textup{\textsf{dim}}}
\def\supp{\textup{\textsf{supp}}}
\def\mult{\textup{\textsf{mult}}}
\def\deg#1{\textup{\textsf{deg}}_\ii {\raise-.2ex\hbox{$_\ii #1$}}}
\def\degg{\textup{\textsf{deg}}}
\def\ind#1{_\ii {\raise-.1ex\hbox{$_\ii #1$}}}
\def\inds#1{
%_\ii 
{\raise-.25ex
\hbox{$
_\ii 
#1$}}}

\def\z{^\ii {\text{\bfseries (0)}}}
\def\ellz{\ell^\ii {2\text{$(0)$}}}
\def\C{{\bf C}_0} %declare R'(0) to be boldfaced C_0
\def\D{{\bf D}_0} %This denotes p(2+q)/(2q(2-p))
\def\Di{{\bf D}_1} %This denotes 3pq/(2+pq)^2

\def\1{\mbox{$1$%
\hskip-.38em\rule{.44em}{.05ex}%
\hskip-.18em\rule{.06em}{1.5ex}%
\hskip-.16em{\raise1.48ex\hbox{\rule{.27em}{.05ex}}}\hskip0.1em%
}}

\long\def\NEWSECTION#1#2#3{\section{{#2}}
\B\centerline{#3}\B\B}
\long\def\NEWSECTION#1#2#3{\section{{#2}}\LL{#3}}

\title[Wave equation on fractals]
{Wave equation on one-dimensional fractals with spectral decimation and the complex dynamics of polynomials} 

\author[Andrews]{Ulysses Andrews}
\email{ulysses.andrews@uconn.edu}
\author[Bonik]{Grigory Bonik}
\email{gregory@bonik.org}
\author[Chen]{Joe P. Chen}
\email{joe.p.chen@uconn.edu$^1$, jpchen@colgate.edu}
\author[Martin]{Richard W. Martin}
\email{richard.w.martin@uconn.edu$^2$}
\author[Teplyaev]{Alexander Teplyaev}
\email{teplyaev@member.ams.org}
\thanks{Research supported in part by the NSF Grants DMS-1106982 
	and DMS-1262929.}
\address{\noindent Department of Mathematics, 
University of Connecticut, Storrs, CT 06269, USA.}
\address{\noindent $^1$\emph{Current address:} 
	Mathematics Department,
	Colgate University,	
    Hamilton, NY 13346, USA.}
\address{\noindent $^2$\emph{Current address:} Mathematics Department,
SUNY Broome Community College, Binghamton, NY 13902,~USA.}
\date{\today}
\begin{abstract}
We study the wave equation on one-dimensional self-similar fractal structures that 
can be analyzed by the spectral decimation method. We develop efficient 
numerical approximation techniques and also provide uniform estimates obtained by analytical methods. 
{\tableofcontents}
\end{abstract} 

\keywords{Wave equation, fractals, spectral decimation, infinite speed of propagation, time change.}
\subjclass[2010]{35L05, 42C15, 58J45 (primary); 
28A80, 
35P10, 
35P15, 
60J25, 
60J45,  
81Q35 (secondary)} 

\maketitle

\section{Introduction}\label{section1intro} 
The purpose of this paper is to study, 
both analytically and numerically, 
the wave equation on the unit interval 
endowed with a self-similar fractal measure. 
%[Add 
Previous studies of wave equation on fractals,
%Infinite propagation speed and other questions, 
including numerical approximations, 
were published in 
\cite{CCNT,CDS,CSW,DSV,GRS,St1}. All these works have some, although not direct, relation to the classical paper \cite{St}, 
but are more directly related to 
 the fractal Fourier analysis, see \cite{Str03,Str-gaps}. 
 Our computational methods mostly come from the theoretical papers \cite{eigen1,eigen2,T,T04} 
that develop so-called spectral decimation method in the form 
applicable for to  numerical analysis. 

In general, there is a large 
literature dealing with analysis and probability on fractals in mathematical terms, such as  \cite[and references therein]{ASST,Ba,BBKT,BGH,BST,BNT,DRS,FS,Go,IPRRS,Ki1,Ki,LaL,KL,KL2,MT,merozeta,S,Sh}, 
and also extensive  mathematical 
physics literature, including  \cite{ADT1,ADT2,AH,AO,CT,CMT,D,DABK,Go,R,RT}. 
Of particular interest are the works 
studying the appearance of fractals in quantum gravity, including 
\cite{AJL2,AJL3,ACOS,Ca,Ca3,CM,CPR2,Englert,LR7,MPM,RS1,ReS}.

We consider a situation in which a good enough (fractal) Laplacian $\Delta$ is defined 
on $L^2(K,\mu)$, where 
a compact set $K$ (the unit interval in our case) equipped with a (fractal) Borel measure~$\mu$.
This Laplacian $\Delta$ 
is
 a point-wise limit or as the generator of a Kigami's resistance form (see Proposition~\ref{prop16}), 
and one can extend some of the classical numerical techniques to approximate some (intrinsically smooth) solutions of the wave equation initial value problem 
\begin{equation}
	\label{eq:waveeqn}
	\left\{ \begin{array}{ll} \partial_{tt} u = -\Delta u & \text{on}~ K \times [0,T],\\
		u(\cdot,0) = \phi &\text{on}~K,\\
		\partial_t u(\cdot,0)=\psi & \text{on} ~K.
	\end{array}\right.
\end{equation}
As is well known, if the spectrum of the Laplacian is discrete, then the solution of the wave equation can be represented in terms of $L^2(\mu)$-eigensolutions $\{\lambda_k, f_k\}_{k=0}^\infty$ of the Laplacian $\Delta$, with $\lambda_0 \lneq \lambda_1 \leq \lambda_2 \leq \cdots \uparrow \infty$ and $\Delta f_k = \lambda_k f_k$. Writing
\begin{equation}
	\phi = \sum_{k=0}^\infty \alpha_k f_k \quad \text{and} \quad \psi=\sum_{k=0}^\infty \beta_k f_k, 
\end{equation}
where $\alpha_k = \langle \phi, f_k\rangle_{L^2}$ and $\beta_k = \langle \psi, f_k\rangle_{L^2}$, one finds that $u$ admits the series representation
\begin{equation}
	u(x,t) = \sum_{k=0}^\infty \alpha_k f_k(x) \cos\left(t\sqrt{\lambda_k}\right) + \sum_{k=k_{\min}}^\infty \frac{\beta_k}{\sqrt{\lambda_k}} f_k(x) \sin\left(t\sqrt{\lambda_k}\right),
\end{equation}
where $k_{\min} := \min\{j\in \mathbb{N}\cup\{0\}: \lambda_j >0\}$. It is known that the series point-wise converges poorly and the numerical approximations are very unstable unless the smoothness of solutions can be controlled. 

In our setup, $K=I$ and $\mu$ is the fractal measure defined in Section \ref{section2eigenvalues}. For simplicity we assume that the initial velocity $\psi \equiv 0$, and so  the solution to (\ref{eq:waveeqn}) is
\begin{equation}
	u(x,t) = \sum_{k=0}^\infty \alpha_k f_k(x) \cos\left(t\sqrt{\lambda_k}\right).
\end{equation}
If we theoretically  assume that $\phi$ is given by a $\delta$-impulse at point $0$,  $\phi=\delta_0$, then we have that 
  $\alpha_k := \int_I\, f_k(x) \delta_0(x) \, \mu(dx)= f_k(0)$. 
Note that $\delta_0(x)$ is not a function by the unit atomic 
measure 
at zero, and so  the integral in this definition of $\alpha_k$ is to be understood as a formal expression, as in the theory of distributions (for the classical version, see \cite{StDistr}, and for the fractal version, see \cite{RoStr}).  This approach on a fractal space does not allow an accurate numerical approximation of the solutions. 

Therefore we concentrate on a situation where the initial condition is highly localize function, but is smooth in intrinsic sense, and we can show that the approximating series converges uniformly. This is an illustration of the general principle of Stricharz \cite{Str-gaps}: \emph{Laplacians on fractals with spectral gaps have nicer Fourier series}. However, the abstract result 
\cite{Str-gaps} does not include the estimate of the remainder which we obtain in our work. 

Numerically, we can only compute the eigensolutions of the fractal Laplacian up to a finite level, so in practice we solve the ``approximate'' wave equation
\begin{equation}
\label{eq:approxwaveeqn-}
\left\{ \begin{array}{ll} \partial_{tt} u_n = -\Delta_n u_n & \text{on}~ V_n \times [0,T],\\
u_n(\cdot,0) = \delta_0^{(n_0,n)} &\text{on}~V_n,\\
\partial_t u_n(\cdot,0)=0 & \text{on} ~V_n,
\end{array}\right.
\end{equation}
where $\displaystyle \delta_0^{(n,n_0)}=\sum_{k=0}^{3^{n_0}} \alpha_k f_{n,k}$ is the approximate $\delta$-function built up from the first $|V_{n_0}|=(3^{n_0}+1)$ eigenfunctions of $\Delta_n$ (with $\Delta_n f_{n,k} = \lambda_{n,k} f_{n,k}$), and $\alpha_k := \alpha_{n_0,k}\geq 0$ are the coefficients found in Section \ref{sec:specdecom}. Throughout the section $n_0$ will be fixed, and we will not mention $n_0$ explicitly unless the context demands it.
The solution to (\ref{eq:approxwaveeqn-}) has the series representation
\begin{equation}
\label{eq:unseries-}
u_n(x,t) = \sum_{k=0}^{3^{n_0}} \alpha_k f_{n,k}(x) \cos\left(t\sqrt{\lambda_{n,k}}\right) \quad\text{for all}~x\in V_n ~\text{and}~t\in [0,T].
\end{equation}
For each $t$, we harmonically extend the function $x\mapsto u_n(x, t)$ from $V_n$ to $I$. This procedure allows us to compare $u_n(x,t)$ with
\begin{equation}
\label{eq:useries-}
\tilde{u}(x,t) = \sum_{k=0}^{3^{n_0}} \alpha_k f_k(x) \cos\left(t\sqrt{\lambda_k}\right) \quad\text{for all}~ x\in I ~\text{and}~t\in [0,T],
\end{equation}
the solution of the wave equation on $(I,\mu)$ whose initial condition is the truncated series representation of the $\delta$-impulse.
We note that $\tilde{u}$ is 
differentiable in $t$ and continuous in $x$. 
However it is highly localized function 
at $t=0$, and therefore it 
mimics wave propagation from a delta function initial values.

Our paper is organized as follows. 
Section \ref{section2eigenvalues} contains the construction of the unit interval 
as a p.c.f. fractal, definition of the Dirichlet energy form, the definition of the corresponding Laplacian and its associated eigenvalues.  In Section \ref{section3eigenfunctions} we use spectral decimation to construct the eigenfunctions of the discrete Laplacian and prove that their limit is continuous.  The section concludes with the spectral decomposition of the delta function.  Section \ref{section5estimates} contains various technical estimates needed to show the convergence of solutions of the wave equation.  In Section \ref{sectionwaveequation} we give theoretical bounds on the approximations to the wave equations solutions and convergence information.  Section \ref{section4numerical} contains the numerical computation of the wave equation solutions, their associated eigenfunctions, and the Fourier approximations for the delta function.

\BREM{remHKE}
{Theoretically, the infinite propagation 
	speed for wave equation solutions was established in \cite{L}  on some p.c.f.
	fractals with heat kernel estimates  \begin{align}\label{hke}
	\frac{c_1}{V(x,t^{1/\beta})} \exp\left(-c_2 \left(\frac{d(x,y)^\beta}{t}\right)^{1/(\beta-1)}\right) &
	\leq p(t,x,y) \\ 
	\nonumber &\leq \frac{c_3}{V(x,t^{1/\beta})} \exp\left(-c_4 \left(\frac{d(x,y)^\beta}{t}\right)^{1/(\beta-1)}\right)
	\end{align}
for positive constants $c_1, c_2, c_3, c_4$, $x,y \in I$, $t\in (0,1]$, where $\beta=2/d_S$ and $V(x,r)=\mu(B_r(x))$.
 Kigami	in \cite{Ki04} obtained such estimates in a situation which 
 resembles, but is technically different, from  ours. 
 We conjecture that 
 an analogue \eqref{hke} holds in our 
 situation,   
 %with $d(x,y)=|x-y|$, 
 but proving this would lie outside of the scope of our paper. 
}

%We study wave equation on one-dimensional self-similar fractal structures that 
%can be analyzed by the spectral decimation method... 
%
%We develop efficient 
%numerical approximation techniques ... 
%
%and also provide uniform estimates obtained by analytical methods... 
%

\subsubsection*{Acknowledgement} 
The authors are very grateful to 
Daniel Kelleher, Hugo Panzo and Antoni  Brzoska 
for many helpful discussions, and   to 
Luke Rogers for explaining the eigenfunction estimates based on his paper \cite{Ro}. 
\,A.T.~also thanks Sze-Man Ngai and 
Alexander Grigor'yan  
for very valuable advice. 
The authors thank anonymous referees 
for corrections and a substantial list of constructive 
suggestions leading to improvements in the first version of 
our paper, and for the 
suggestion to include the infinite wave propagation speed 
Remark~\ref{remHKE}.

\section{Eigenvalues of the fractal Laplacian on an interval}\label{section2eigenvalues}

\def\x{\sm{(x)}}
\def\ih#1{{{\huge\mbox{$#1$}}}}
\def\ih#1{$#1$}

%We begin with defining the usual energy form of the unit interval 
%by introducing a self-similar structure. 
%In this way the interval can be 
%considered as a p.c.f. fractal (see \cite{Ba,BNT,Ki1,Ki}). 
%In particular, in these papers the reader can find proofs of the propositions given below, 
%as well as the general theory of Dirichlet forms on fractals and further references. 

In this section we define a 
particular self-similar structure on the unit interval.  In this way, it can be seen as a p.c.f fractal (see \cite{Ba,BNT,Ki1,Ki,T04,T07}).
%{\color{blue} It may be helpful to give specific locations for each result within these citations}.  
In these papers the reader can find these definitions and an exposition of the general theory of Dirichlet forms on fractals, as well as further references on the subject.
%
%We use three contractions because it will allow us to easily modify the construction in order to obtain a fractal \Lp\ on the unit interval. One can make the same construction using  any number of contractions. 
%
Herein we will use three contractions for simplicity.  However, one could perform the same construction using any number of contractions in order to obtain a fractal Laplacian on the unit interval.

To define the standard Laplacian, we can use 
 three contractions 
$\,F_1,F_2,F_3:\mathbb R\to\mathbb R\,$ 
$
F_j(x)=\tfrac13x+\tfrac23p_j
$
with respective fixed points $p_1=0$, $p_2=\frac{1}{2}$, $p_3=1$. 
Then the interval $I{=}[0,1]$ 
is a unique compact 
 set such that 
$
I=
\bigcup\limits_{\sm{\sm{{j{=}1,2,3}}}}\hskip-.975ex
F_j(I).
$
The \bydef{discrete approximations} to $I$ are 
defined inductively by 
$
V_n = 
\hskip-.975ex\bigcup\limits_{\sm{\sm{{j{=}1,2,3}}}}\hskip-.975ex
F_j(V_{n\sm{\sm{-}}1})
=\big\{\tfrac k{3^n}\big\}_{k{=}0}^{3^n},
$
where $V_0=\partial I=\{0,1\}$ is the
\bydef{boundary} of $I$. For $x,y\in V_\ii {n}$ we write $y \sim x$ if $|x-y|=3^{-n}$. 
%\vskip1ex\centerline{{}and%\vskip1ex
%\BDF{df07}
{Then the \bydef{standard discrete Dirichlet (energy) form} on $V_n$ is 
$$
\mathcal E_n (\ddD{f}) = 3^ n
{\sum\limits_ {\genfrac{}{}{0pt}{2}{x,y\in V_\ii {n}}{y \sim x}}}
(f\sm{(y)} {-} f\x )^\ii 2,
$$
and the \bydef{standard Dirichlet (energy) form} on $I$ is 
$
\mathcal E (\ddD{f}) = \lim\limits_\ii {n{\to}\infty} \mathcal E_n (\ddD{f}) 
$
if this limit exists.} 
%} \B \BDF{df08}{
%
%A function $h$ is \bydef{harmonic} if it minimizes the 
%energy given the boundary values.} 
%
We call a function $h$ harmonic if it minimizes the energy subject to the constraint of the given boundary values.
%\BPROP{prop09}
{Then we have that 
$
\mathcal E_{n{+}1} (\ddD{f})\geqslant\mathcal E_n (\ddD{f})
$ 
 for any function $f$, and 
$
\mathcal E_{n{+}1}(\ddD{h})= 
\mathcal E_n (\ddD{h})=\mathcal E(\ddD{h})
$
for a harmonic $h$. A function $h$ is harmonic if and only if it is linear. If $f$ is continuously differentiable 
then 
$$
\mathcal E (\ddD{f}) =\int_0^1|f'(x)|^2 dx.
$$
The domain $\mathcal F$ of this 
standard Dirichlet (energy) form $\mathcal E$ on $I$ 
coincides with the usual 
Sobolev space $H^1[0,1]$. 
%Also we have $\Delta h = 0$.
%} \B \BPROP{prop10}{\hskip-2ex\hskip-1ex
Moreover $\mathcal E$ on $I$ is \bydef{self-similar} 
in the sense that 
$$
\mathcal E(\ddD{f})=3
\sum_{\sm{\sm{j=1,2,3}}}
\mathcal E(\ddD{f\sm{\circ}F_j}).
$$} 
%\BDF{df11}
{The corresponding \bydef{standard discrete \Lp s} on $V_n$ are 
$$
\Delta_n f\x = \tfrac12 
{\sum\limits_ {\genfrac{}{}{0pt}{2}{y\in V_\ii {n}}{y \sim x}}}
 (f(x) - f(y)) , \quad x \in V_n \backslash V_0,
$$
and the (renormalized) \Lp\ on $I$ is 
$$
\Delta f\x = \lim\limits_\ii {n{\to}\infty} 9^ n
%\mathcal 
\Delta_n f\x =-\frac12f''\x 
%, \quad x{\in}I\backslash \partial I
$$
for any twice differentiable function.}
In our convention the Laplacian is a nonnegative operator.
For any twice differentiable function $f$, the \bydef{Gauss--Green (integration by parts) formula applies}
$$
 \mathcal E (\ddD{f})=2\int_0^1f\Delta f dx + ff'\Big|^1_0. 
$$

%In what follows we modify this construction to obtain a one parameter family 
%of fractal \Lp s on the unit interval. We introduce a parameter $0<p<1$ and define $q=1-p$, which later will be shown to have a meaning of transition probabilities of a random walk.
We can   modify the above construction with the introduction of the parameter $p$, where $0 < p < 1$, and write $q = 1-p$. Later we will show that these parameters give the transition probabilities of a random walk on the unit interval. 
Now, we define contraction factors (or resistance weights) 
\begin{equation}
r_1=r_3=\frac{p}{1+p}\text{ \ \ and \ \ } 
r_2=\frac{q}{1+p}, 
\end{equation}
and 
measure weights
\begin{equation}
m_1=m_3=\frac{q}{1+q}\text{ \ \ and \ \ } 
m_2=\frac{p}{1+q}. 
\end{equation}
Note that in general the choices of resistance and measure weights are 
essentially free, up to constant multiples, according to Kigami's 
theory of 
 Harmonic calculus on p.c.f.\ self-similar sets \cite{Ki1,Ki}, 
 but we make a unique choice that leads to 
 a manageable spectral analysis, as explained in 
 \cite{eigen1,S,Sh,T,T04,T07}. We do not give a complete explanation here because it would require too much space. In short, the spectral decimation requires a symmetry 
 $m_1=m_3$. Moreover, 
 the spectral decimation also requires   that   the resistance weights are, up to a constant, reciprocals 
of the measure weights, and 
\begin{equation} m_1+m_2+m_3=r_1+r_2+r_3=1.\end{equation}
Thus, our system essentially has one independent parameter, which we denote $p$ and express everything else in terms of this parameter.

We may now define the three  contractions:
$\,F_1,F_2,F_3:\mathbb{R}\to\mathbb{R}$ 
with respective fixed points $p_1=0$, $p_2=\frac 12$, $p_3=1$ in terms of resistances which depend on our parameter $p$
\begin{equation}F_j(x)=r_jx+(1-r_j)p_j.\end{equation} 
Then the interval $I{=}[0,1]$ 
is the unique compact 
 set such that 
\begin{equation}
I=
\bigcup\limits_{\sm{\sm{{j{=}1,2,3}}}}\hskip-.975ex
F_j(I).\end{equation}
The \bydef{discrete approximations} to $I$ are 
defined inductively by 
\begin{equation}
V_n = 
\hskip-.975ex\bigcup\limits_{\sm{\sm{{j{=}1,2,3}}}}\hskip-.975ex
F_j(V_{n\sm{\sm{-}}1}),
%=\big\{\tfrac k{3^n}\big\}_{k{=}0}^{3^n}
\end{equation}
where $V_0=\partial I=\{0,1\}$ is the
\bydef{boundary} of $I$. 
%\vskip1ex\centerline{{}and%\vskip1ex

The following definitions and results come directly 
from the more general theory in \cite{Ba,BNT,Ki1,Ki},
so we omit the proofs.

\BDF{denergyform}{The \bydef{discrete Dirichlet (energy) form} on $V_n$ is 
defined inductively 
\begin{equation}
\mathcal E_n (\ddD{f})=
\sum_{\sm{\sm{j=1,2,3}}}
\tfrac1{r_j} \mathcal E_{n-1}(f\sm{\circ}F_j).
\end{equation}
with $
\mathcal E_0 (\ddD{f}) = 
(f(1)-f(0))^2 $, 
and the \bydef{Dirichlet (energy) form} on $I$ is 
\begin{equation}
\mathcal E (\ddD{f}) = \lim\limits_\ii {n{\to}\infty} 
\mathcal E_n (\ddD{f}) =\int_0^1|f'(x)|^2 dx
\end{equation}
The domain $\mathcal F$ of $\E$ consists of continuous functions for which the limit is finite, 
and 
coincides with the usual 
Sobolev space $H^1[0,1]$. 
}

The existence of this limit is justified by the next proposition.

%A function $h$ is \bydef{harmonic} if it minimizes the 
%energy given the three boundary values.} 

\BPROP{prop099}{We have that 
$
\mathcal E_{n{+}1} (\ddD{f})\geqslant\mathcal E_n (\ddD{f})
$ 
 for any function $f$, and 
\begin{equation}
\mathcal E_{n{+}1}(\ddD{h})= 
\mathcal E_n (\ddD{h})=\mathcal E(\ddD{h})
\end{equation}
for a harmonic function $h$.}

\BPROP{prop16}{%Also we have $\Delta h = 0$.
%} \B \BPROP{prop10}{\hskip-2ex\hskip-1ex
The Dirichlet (energy) form $\mathcal E$ on $I$ is local and regular, and is \bydef{self-similar} 
in the sense that 
\begin{equation}
\mathcal E(\ddD{f})=
\sum_{\sm{\sm{j=1,2,3}}}
\tfrac1{r_j} \mathcal E(\ddD{f\sm{\circ}F_j}).
\end{equation}
The domain of $\mathcal E$, see Definition~\ref{denergyform}, is dense in the space of continuous functions on $I$. 

The \bydef{$\mu$--\Lp} $\Delta_\mu$, satisfying  
the following 
Gauss--Green (integration by parts) formula 
\begin{equation}
 \mathcal E (\ddD{f})=
%\int_0^1|f'(x)|^2 dx=
C\int_0^1f\Delta_\mu f d\mu + ff'\big|^1_0, 
\end{equation}
%for functions in the domain of $\Delta_{\mu}$ (to be defined later.)
where $\mu$ is a unique probability \bydef{self-similar} 
measure 
with weights $m_1,$ $m_2,$ $m_3$, that is 
\begin{equation}
\mu =\sum_{\sm{\sm{j=1,2,3}}}m_j \mu \sm{\circ} F_j.
\end{equation}
can be defined by 
\begin{equation}\label{edL}
\Delta_\mu f\x 
%= \frac{f'' }{\mu} 
= 
\hskip-.21em
\lim\limits_\ii {n{\to}\infty} 
\hskip-.21em
\big(1\sm{+}\tfrac2{pq}\big)^ n 
%\mathcal 
\Delta_n f\x ,
\end{equation}
where the discrete \Lp s 
\begin{equation}\label{Lp}
\Delta_n f(x_k)=\left\{
\begin{aligned}
\mbox{$f(x_k) - pf(x_{k-1}) - qf(x_{k+1}) $}&\\ 
& \ \ \text{or} \\
\mbox{$f(x_k) - qf(x_{k-1}) - pf(x_{k+1}) $}& \end{aligned}
\right. \ %\ \mbox{\bf or}
\end{equation}
 are defined as the 
generators of the nearest neighbor random walks on $V_n$ 
with transition 
probabilities $p$ and $q$ assigned according to the 
weights of the corresponding intervals. 
The domain of the corresponding continuous \Lp\ $\Delta_\mu $, 
defined  to be the set of all continuous function $f$ for which the limit 
\eqref{edL} exists and is continuous, 
is dense in the space of continuous functions on $I$. 
} 

Note that by definition 
$p=\frac{m_2}{m_1{+}m_2}$ and $q=\frac{m_1}{m_1{+}m_2}$. 
The transition 
probabilities $p$ and $q$ can be assigned inductively
as shown on \Fig{figRW}. 

\begin{proposition}[Self-similarity of the Laplacian]
\label{prop:Lapss}
\begin{equation}
\Delta_\mu(u\circ F_w) = \left(1+\frac{2}{pq}\right)^{-|w|}(\Delta_\mu u)\circ F_w.
\label{eq:Lapss}
\end{equation}
\end{proposition}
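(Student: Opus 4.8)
The plan is to reduce the identity for a general word $w$ to the case $|w|=1$ by induction on $|w|$, and to prove that base case directly from the limit definition~\eqref{edL} of $\Delta_\mu$ together with a compatibility property of the discrete Laplacians $\Delta_n$ across levels.

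The crucial ingredient — and the step I expect to be the main obstacle — is the following \emph{discrete self-similarity}: for each $j\in\{1,2,3\}$, each $n\ge 0$, and each function $u$ on $V_{n+1}$,
\begin{equation*}
\Delta_{n+1}u\bigl(F_j(y)\bigr)=\Delta_n(u\circ F_j)(y),\qquad y\in V_n\setminus V_0 .
\end{equation*}
This is where the inductive assignment of the transition probabilities $p,q$ in Figure~\ref{figRW} does the work: by construction, the nearest-neighbour random walk generating $\Delta_{n+1}$, restricted to a cell $F_j(I)$, is exactly the $F_j$-pushforward of the walk generating $\Delta_n$ on all of $I$ (this is how the labels $p,q$ propagate, accounting for the two cases in~\eqref{Lp}). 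Since $F_j$ is an increasing homeomorphism carrying $V_n$ onto the level-$(n{+}1)$ vertices lying in $F_j(I)$ and preserving adjacency, and since an \emph{interior} vertex $F_j(y)$ (i.e.\ $y\notin V_0$) of the cell has all of its $V_{n+1}$-neighbours inside $F_j(I)$, the generator $\Delta_{n+1}$ at $F_j(y)$ only sees the values of $u$ at $F_j(y)$ and at the $F_j$-images of the $V_n$-neighbours of $y$, with matching weights; this yields the displayed equality. Getting the $p\leftrightarrow q$ bookkeeping exactly right, cell by cell, is the one genuinely delicate point.

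Granting this, the case $|w|=1$ is immediate. Fix $v$ in the domain of $\Delta_\mu$ and $j\in\{1,2,3\}$. The sets $V_n$ increase with dense union in $I$ and $F_j(V_n)\subset V_{n+1}$, so for any $y\in\bigcup_n V_n$ with $y\notin V_0$ we may use~\eqref{edL}, then the discrete self-similarity, then~\eqref{edL} again:
\begin{align*}
\Delta_\mu(v\circ F_j)(y)
&=\lim_{n\to\infty}\bigl(1+\tfrac{2}{pq}\bigr)^{n}\Delta_n(v\circ F_j)(y)
=\lim_{n\to\infty}\bigl(1+\tfrac{2}{pq}\bigr)^{n}\Delta_{n+1}v\bigl(F_j(y)\bigr)\\
&=\bigl(1+\tfrac{2}{pq}\bigr)^{-1}\lim_{n\to\infty}\bigl(1+\tfrac{2}{pq}\bigr)^{n+1}\Delta_{n+1}v\bigl(F_j(y)\bigr)
=\bigl(1+\tfrac{2}{pq}\bigr)^{-1}(\Delta_\mu v)\bigl(F_j(y)\bigr).
\end{align*}
Since $v$ is continuous so is $v\circ F_j$, and the right-hand side is the restriction to $\bigcup_n V_n$ of the continuous function $\bigl(1+\tfrac{2}{pq}\bigr)^{-1}(\Delta_\mu v)\circ F_j$; hence the limit defining $\Delta_\mu(v\circ F_j)$ exists and is continuous, so $v\circ F_j$ lies in the domain of $\Delta_\mu$ and $\Delta_\mu(v\circ F_j)=\bigl(1+\tfrac{2}{pq}\bigr)^{-1}(\Delta_\mu v)\circ F_j$ on $I$. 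As a sanity check, $\bigl(1+\tfrac{2}{pq}\bigr)^{-1}=r_jm_j$ for every $j$, since $(1+p)(1+q)=2+pq$; this is the scaling factor $r_jm_j$ predicted by Kigami's general theory.

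Finally I would run the induction on $|w|$. The case $|w|=0$ is trivial. If $w=jw'$ with $|w'|=m$, then $F_w=F_j\circ F_{w'}$, and applying the induction hypothesis to the function $u\circ F_j$ and then the base case,
\begin{equation*}
\Delta_\mu(u\circ F_w)
=\bigl(1+\tfrac{2}{pq}\bigr)^{-m}\bigl(\Delta_\mu(u\circ F_j)\bigr)\circ F_{w'}
=\bigl(1+\tfrac{2}{pq}\bigr)^{-(m+1)}\bigl((\Delta_\mu u)\circ F_j\bigr)\circ F_{w'}
=\bigl(1+\tfrac{2}{pq}\bigr)^{-|w|}(\Delta_\mu u)\circ F_w,
\end{equation*}
which is~\eqref{eq:Lapss}. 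Alternatively one could bypass the discrete picture and derive the case $|w|=1$ from the weak formulation: the Gauss--Green formula of Proposition~\ref{prop16} characterizes $\Delta_\mu$ in terms of $\mathcal E$, and combining the self-similarity of $\mathcal E$ with that of $\mu$ forces the scaling factor $r_jm_j=\bigl(1+\tfrac{2}{pq}\bigr)^{-1}$; but the discrete route above is cleaner given that~\eqref{edL} is already in place.
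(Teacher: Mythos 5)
Your proposal is correct, but it takes a genuinely different route from the paper, which offers no argument for Proposition~\ref{prop:Lapss} at all: it is one of the facts imported from Kigami's harmonic calculus on p.c.f.\ self-similar sets \cite{Ki1,Ki}, where the general scaling factor is $r_{w_1}m_{w_1}\cdots r_{w_n}m_{w_n}$, which here collapses to $\bigl(1+\tfrac{2}{pq}\bigr)^{-|w|}$ because $r_jm_j=\frac{pq}{(1+p)(1+q)}=\frac{pq}{2+pq}$ for every $j$ --- exactly the sanity check you ran. What you give instead is a self-contained verification from the pointwise limit definition \eqref{edL}. The one substantive ingredient, the discrete self-similarity $\Delta_{n+1}u\bigl(F_j(y)\bigr)=\Delta_n(u\circ F_j)(y)$ for $y\in V_n\setminus V_0$, is indeed true: the transition probability at an interior vertex of $V_n$ is the ratio of the measures of the two adjacent level-$n$ cells (this is what the inductive labelling of Figure~\ref{figRW} and the two cases of \eqref{Lp} encode), and at $F_j(y)$ both adjacent level-$(n+1)$ cells carry the common factor $m_j$, which cancels, so the weights $a,b\in\{p,q\}$ at $F_j(y)$ match those at $y$; stating this cancellation explicitly would be better than leaving it at ``by construction,'' since it is the only place where the specific choice of measure weights enters. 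Granting that lemma, your passage to the limit is sound --- including the domain issue, since the limit on $\bigcup_nV_n\setminus V_0$ is the restriction of the continuous function $\bigl(1+\tfrac{2}{pq}\bigr)^{-1}(\Delta_\mu v)\circ F_j$ --- and the induction on $|w|$ is routine. The trade-off: the paper's appeal to the general theory buys brevity and covers arbitrary p.c.f.\ structures, while your argument buys an elementary, transparent proof tailored to this family that makes visible where the factor $\frac{2+pq}{pq}$ comes from and meshes directly with the spectral decimation framework used elsewhere in the paper.
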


\BBF{figRW}
{Random walks corresponding to the discrete \Lp s $\Delta_n$.}
{\begin{center}
\begin{picture}(246,30)(0,-20) \setlength{\unitlength}{.45pt}
\thicklines
\put(0,0){\circle*{12}}
\put(540,0){\circle*{12}}
\put(0,0){\line(1,0){540}}
\put(3,-12){\vector(1,0){19}}
\put(537,-12){\vector(-1,0){19}}
\put(7,-33){{\scriptsize\ih1}}
\put(523,-33){{\scriptsize\ih1}}
\end{picture}

\begin{picture}(246,30)(0,-20) \setlength{\unitlength}{.45pt}
\thicklines
\put(0,0){\circle*{12}}
\multiput(0,0)(60,0){9}{\line(1,0){60}}
\multiput(180,0)(180,0){3}{\circle*{12}}
\multiput(3,-12)(180,0){3}{\vector(1,0){19}}
\multiput(177,-12)(180,0){3}{\vector(-1,0){19}}
\put(7,-33){{\scriptsize\ih1}}
\put(70,11){\ih{m_1}}
\put(250,11){\ih{m_2}}
\put(430,11){\ih{m_3}}
\put(163,-33){\ih q}
\put(187,-33){\ih p}
\put(343,-33){\ih p}
\put(367,-33){\ih q}
\put(523,-33){{\scriptsize\ih1}}
\end{picture}

\begin{picture}(246,30)(0,-20) \setlength{\unitlength}{.45pt}%
%{.375pt} (205,20)(0,-20)\thicklines
\put(0,0){\circle*{12}}
\multiput(0,0)(60,0){9}{\line(1,0){60}}
\multiput(60,0)(60,0){9}{\circle*{12}}
\multiput(3,-12)(60,0){9}{\vector(1,0){19}}
\multiput(57,-12)(60,0){9}{\vector(-1,0){19}}
\put(7,-33){{\scriptsize\ih1}}
\put(43,-33){\ih q}
\put(67,-33){\ih p}
\put(103,-33){\ih p}
\put(127,-33){\ih q}
\put(163,-33){\ih q}
\put(187,-33){\ih p}
\put(223,-33){\ih q}
\put(247,-33){\ih p}
\put(283,-33){\ih p}
\put(307,-33){\ih q}
\put(343,-33){\ih p}
\put(367,-33){\ih q}
\put(403,-33){\ih q}
\put(427,-33){\ih p}
\put(463,-33){\ih p}
\put(487,-33){\ih q}
\put(523,-33){{\scriptsize\ih1}}
\end{picture}
\end{center}
}

%Note that the previous construction of the standard \Lp\ and the 
%Dirichlet form on $I=[0,1]$ 
%corresponds to 
%$p=\frac12$. If $p\not=\frac12$ then 
%we can make a change of variable on the unit interval $I$ so that 
%either the Dirichlet form becomes the standard one, or the measure 
%becomes the Lebesgue measure, but not both. Thus for different values of 
%$p$ the $\mu$--\Lp s are different even up to a change of variable. 

The above construction of the standard Laplacian and the associated Dirichlet form on $I$ corresponds to the case $p  = \frac{1}{2}$. In the $p \ne \frac{1}{2}$ case, a change of variables can either turn the Dirichlet form into the standard one, or turn the $\mu$-measure into Lebesgue measure, but \emph{not both} at the same time. For this reason, different values of $p$ give different $\mu$-Laplacians even up to a change of variable.

We can apply the classical result of Kigami and Lapidus \cite{KL} to show that  
both the Dirichlet and the Neumann \Lp s $\Delta_\mu$ 
satisfy the \bydef{spectral asymptotics}
\begin{equation}
0<
\liminf_{\lambda{\to}\infty}
\frac{\rho(\lambda)}{\lambda^{d_s/2}}
\leqslant
\limsup_{\lambda{\to}\infty}
\frac{\rho(\lambda)}{\lambda^{d_s/2}}
<\infty,
\end{equation}
where as before $\rho(\lambda)$ is the eigenvalue counting function, and the spectral dimension is 
\begin{equation}
d_s=\dfrac{\log9}{\log\big(1\sm{+}\tfrac2{pq}\big)}\leqslant1,
\end{equation}
where the inequality is strict if and only if $p\neq q$. 

{ %\color{blue} 
In the lemma below,   $\sigma(\Delta_n)$ is the spectrum  n of the level $n$ Laplacian $\Delta_n$.}

%\vfil
%\begin{enumerate}
%\item \ 

\BLEM{prop17}{If $z\neq 1\pm p$,
then $ R(z)\in\sigma(\Delta_\ii {n})$ 
if and only if 
$z\in\sigma(\Delta_\ii {n\ps+1})$, with the same multiplicities. 
Here 
\begin{equation} R(z)\ps=\frac{z( z^2 \pms-3z\pms+2\pms+pq)}{pq}.\end{equation}
Moreover,  the Neumann discrete 
\Lp s have  simple spectrum with 
$\sigma(\Delta_\ii {0})=\{0,2\}$ 
and
\begin{equation}
\sigma(\Delta_\ii {n})=\{0,2\}\bigcup_{m=0}^{n-1}R^{-m}\{1\pm q\}
\end{equation}
for 
all $n>0$. 
In particular, for all $n>0$ we have  $0,1\pm q,2\in\sigma(\Delta_\ii {n})$. 
Also, for 
all $n>0$ we have  $1\pm p\in\sigma(\Delta_\ii {n})$ 
if and only if $p=q$.} 

\Pr 
%The proof is an elementary computation, similar to \Lem{lem06}, 
%or can be deduced from the results of \cite{T01}. 
In this case, according to \cite[Lemma 3.4]{T}, \cite[(3.2)]{MT}, we have that $R(z)=\dfrac{\varphi_1(z)}{\varphi_0(z)}$, where 
$\varphi_0$ and $\varphi_1$ solve the matrix equation 
\begin{equation}
S - z I_0 -  \bar   X    (Q-z I_1)^{-1}   X    =\varphi_0 (z) H_0  
-\varphi_1 (z) I_0.
\end{equation}
with $S=I_0=I_1=I_{2\times 2}, \; X=-qI_{2\times 2}, \; \bar X = -I_{2\times 2}$,
\begin{equation}
Q=\left(\begin{array}{rr}
 1 &  -p\cr
 -p & 1\cr
\end{array} \right),
\end{equation}
and
\begin{equation}
H_0=\left(\begin{array}{rr}
 1 &  -1\cr
 -1 & 1\cr
\end{array} \right).
\end{equation}
Solving this we obtain 
\begin{equation}
\varphi_0(z)=\frac{pq}{z^2-2z+1-p^2}
\end{equation}
and 
\begin{equation}
\varphi_1(z)=\frac{z( z^2 \pms-3z\pms+2\pms+pq)}{z^2-2z+1-p^2}.
\end{equation}
 Then we use the abstract spectral self-similarity results (see \cite{T,MT}) to find that $\sigma(\Delta_\ii {n+1}) = R^{-1} \{\sigma(\Delta_\ii {n})\}$.
Note that $0$ and $2$ are fixed points of $R(z)$. 
The preimages of $0$ are $0$, $1+p$ and $1+q$. 
The preimages of  $2$ are $2$, $1-p$ and $1-q$. If $p\neq q$ then $1\pm p$ are not 
eigenvalues because they are poles of $\varphi_0(z)$ (see \cite{T,MT}). 
\rP

\BBF{figCD}
{Sketch of the cubic polynomial $R(z)$ 
associated with the fractal \Lp s on the interval.}
{\begin{picture}(120,133)(-50,-66) \setlength{\unitlength}{0.5pt}\small
\thicklines
\put(-120,-100){\vector(1,0){240}}
\put(-100,-120){\vector(0,1){240}}
\put(-80,-125){$\ii \max(p,q)$}
\put(-155,-125){$\ii{(0, 0)}$}
\put(110,107){$\ii{(2, 2)}$}
%\put(-20,107){$\ii{-1\pm p}$}
\put(100,100){\circle*{5}}
\put(-20,100){\circle*{5}}
\put(-20,-100){\circle*{5}}
\put(20,-100){\circle*{5}}
\put(100,-100){\circle*{5}}
\put(-100,-100){\circle*{5}}
\qbezier[35](100,-100)(100,0)(100,100)
\qbezier[35](-20,-100)(-20,0)(-20,100)
\qbezier[35](-100,-100)(0,-100)(100,-100)
\qbezier(-100,-100)(-45,307)(0,0)
\qbezier[63](-100,-100)(-45,241.5)(0,0)
\qbezier[35](100,100)(0,100)(-100,100)
\qbezier(100,100)(45,-307)(0,0)
\qbezier[63](100,100)(45,-241.5)(0,0)
\end{picture}}

\BREM{remFI}
{In \Fig{figCD} we give a sketch that describes 
the complex dynamics of the family of cubic polynomials 
associated with the fractal \Lp s on the interval (see  \cite{T,MT}). 
The curved dotted line corresponds to the case when $p=\frac12$ and 
the Julia set is the interval $[0, 2]$. For any other value of $p$ ($0<p<1$, $p\neq \frac{1}{2}$),
the graph of the polynomial $R(z)$ behaves like the shown solid curved line.
It is easy to see that then the 
Julia set of $R(z)$ is a Cantor set of Lebesgue measure zero. 
Note that the transformation $p\mapsto1-p$ does not change the polynomial 
$R(z)$, although the \Lp s ${\Delta_\mu}$ are different.} 

%By \Thm{thmPoly}, the set of poles The \Fig{figZFMT}.\newpage

\begin{figure}
\begin{center}
\includegraphics[width=0.95\textwidth]{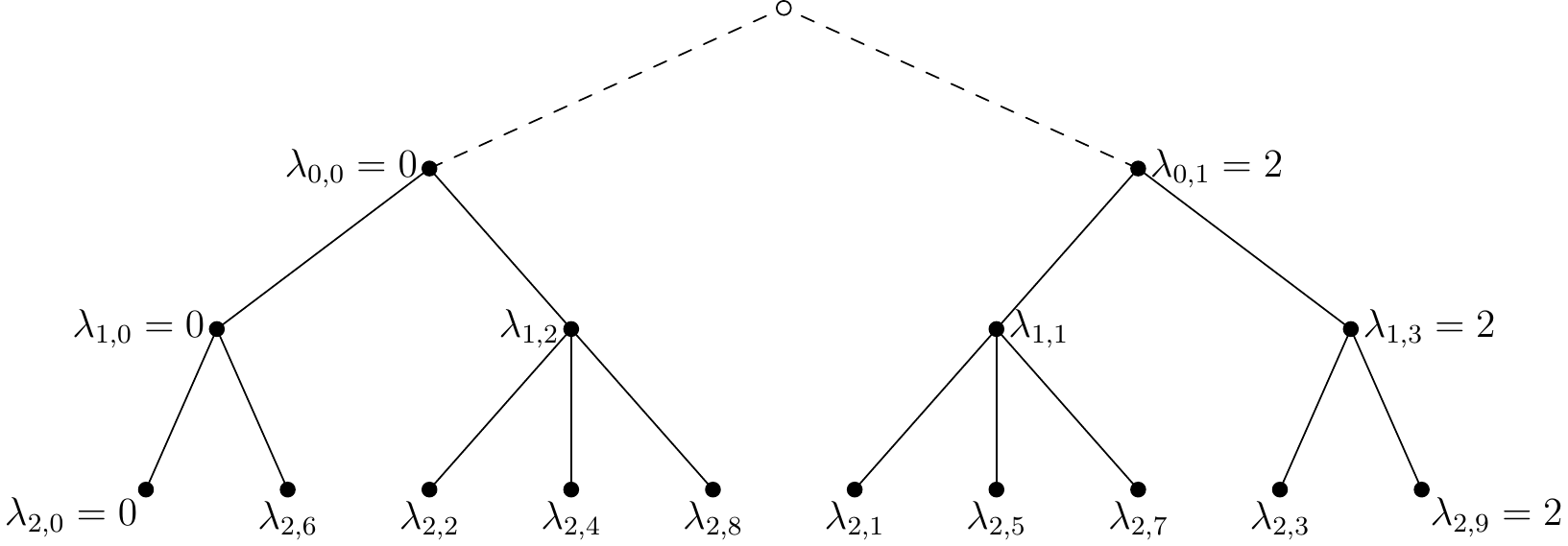}
\caption{Eigenvalues of the first three fractal levels arranged in a rooted tree
and numbered in increasing order.}
\label{fig:tree}
\end{center}
\end{figure}

\section{Spectral Decimation and Eigenfunction Approximations in the limit}\label{section3eigenfunctions}
 
 \begin{figure}[!]
 \includegraphics[width=\textwidth]{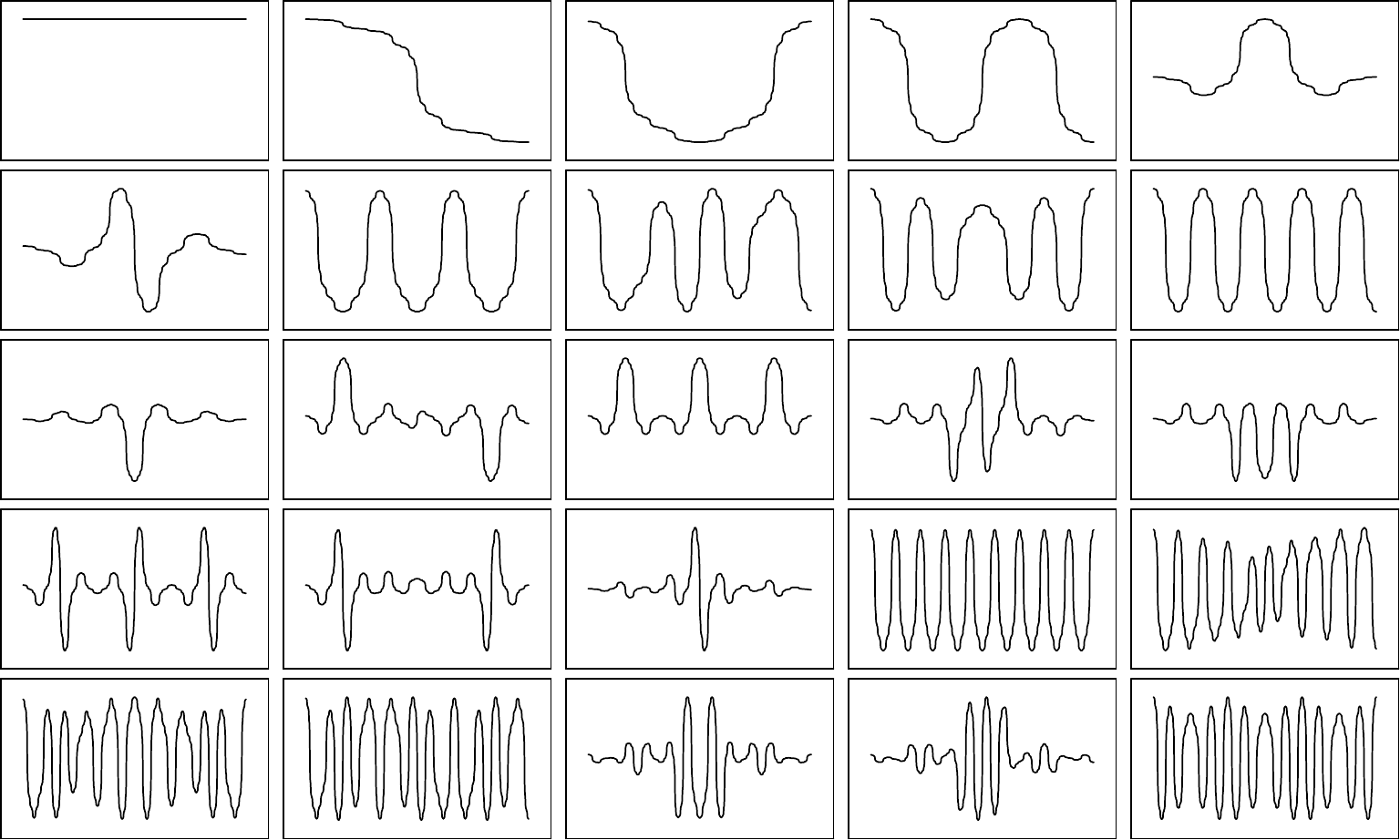}
 \caption{The first 25 eigenfunctions of the fractal Laplacian with $p = \frac{1}{5}$}\label{f1}
 \end{figure}
 
 \begin{figure}[!]
 \includegraphics[width=\textwidth]{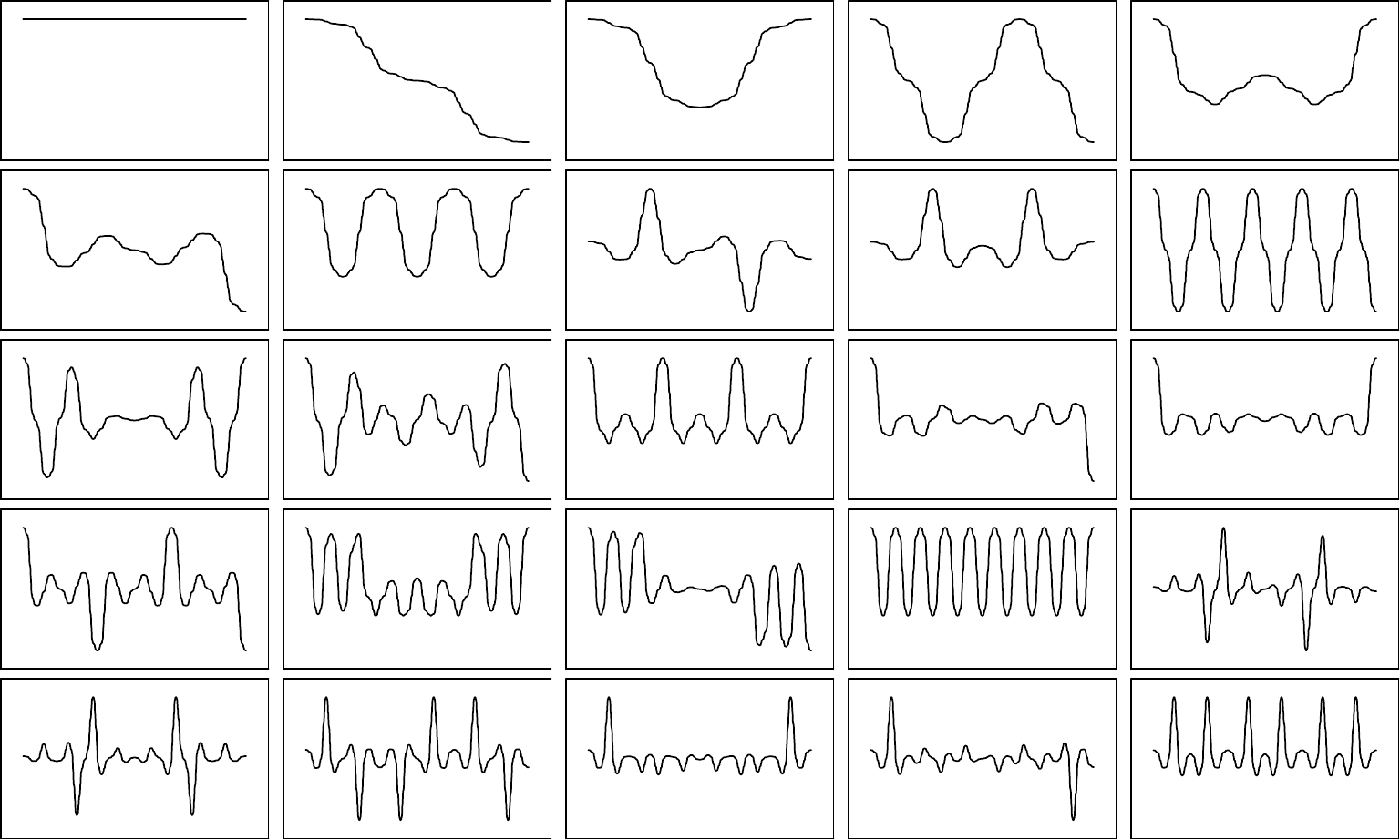}
 \caption{The first 25 eigenfunctions of the fractal Laplacian with $p = \frac{4}{5}$}\label{f2}
 \end{figure}
 
 %\newpage

 \begin{figure}[!]
  \begin{center}
 \includegraphics[scale=0.59]{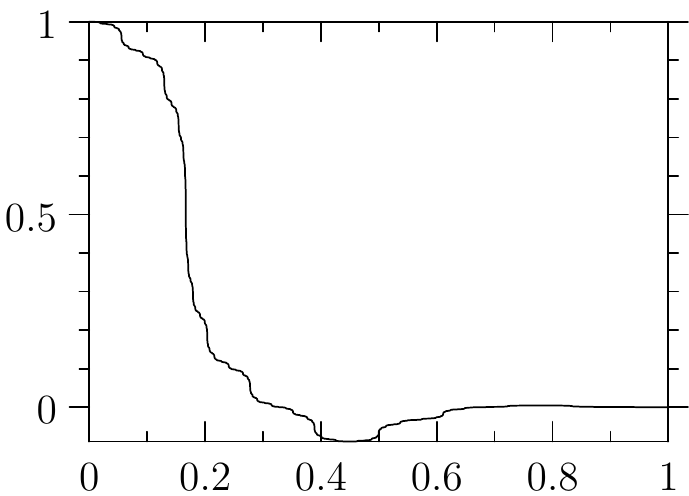}
 \includegraphics[scale=0.59]{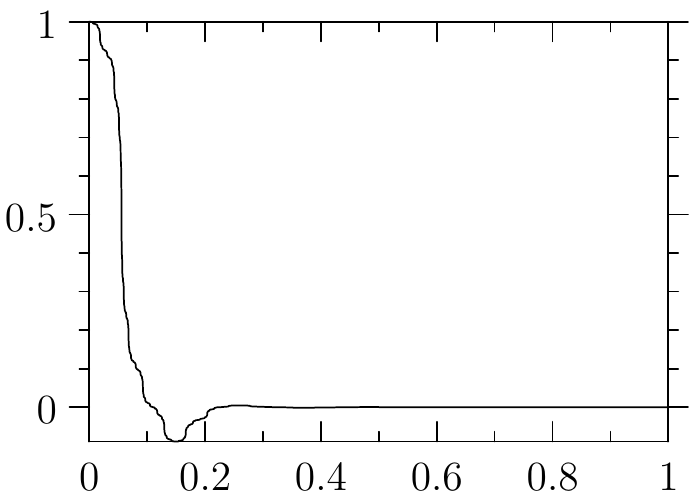}
 \includegraphics[scale=0.59]{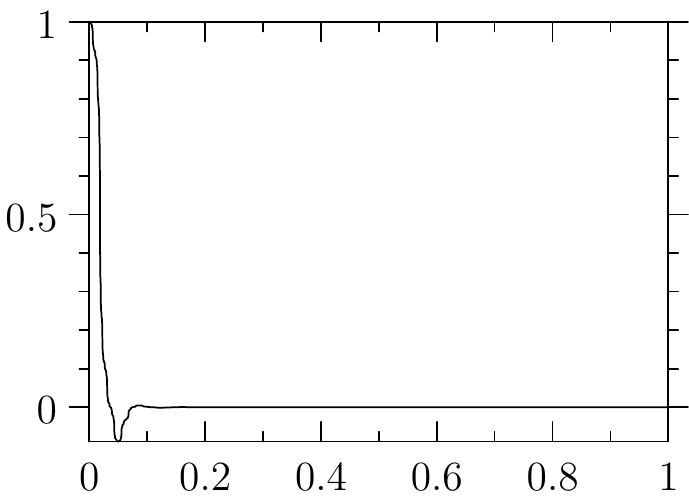}
 \end{center}
 \caption{Fourier approximations for delta function, $p = \frac{1}{5}$.
      Left to right: $n_0 = 2, 3, 4$.}\label{fdelta1}
 \end{figure}

 \begin{figure}[!]
 \begin{center}
 \includegraphics[scale=0.59]{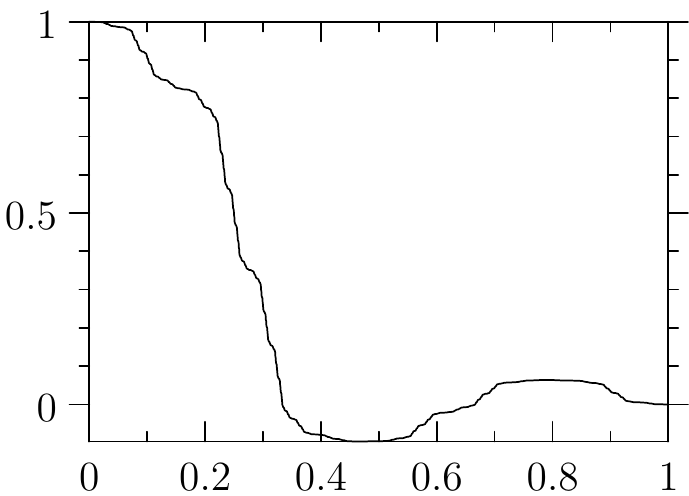}
 \includegraphics[scale=0.59]{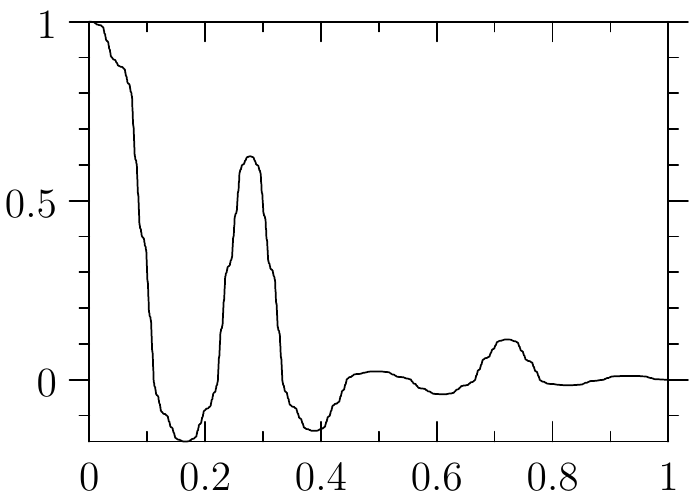}
 \includegraphics[scale=0.59]{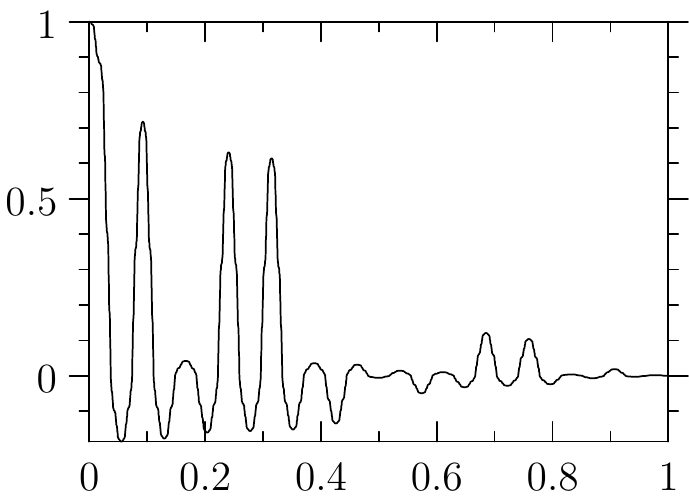}
 \end{center}
 \caption{Fourier approximations for delta function, $p = \frac{4}{5}$.
      Left to right: $n_0 = 2, 3, 4$.}
 \label{fdelta2}
 \end{figure}

Thus far we have described the spectral decimation which allows us to characterize the eigenvalues of the fractal Laplacian (Lemma \ref{prop17}). We now turn to the eigenfunctions.

\subsection{Eigenfunction extension}

In this subsection we demonstrate how to extend an eigenfunction $f_{n,*}$ to an eigenfunction $f_{n+1,*}$ using spectral decimation.

To fix notation, let $x_0<y_0<y_1<x_1$ be four consecutive vertices in $V_{n+1}$ with $x_0,x_1 \in V_n$ and $y_0, y_1\in V_{n+1}\setminus V_n$. Given an eigenfunction $f_{n,*}$ of $\Delta_n$ with eigenvalue $\lambda_{n,*}$, we define its extension $f_{n+1,*}$ to $V_{n+1}$ according to the formulas
\begin{equation}\label{ext0}
f_{n+1,*}(y_0)  =  \frac{q(1-z)f_{n,*}(x_0) + pqf_{n,*}(x_1)}{(1-p-z)(1+p-z)},
\end{equation}
\begin{equation}\label{ext1}
f_{n+1,*}(y_1)  =  \frac{q(1-z)f_{n,*}(x_1) + pqf_{n,*}(x_0)}{(1-p-z)(1+p-z)}.
\end{equation}
Here we assume $z \neq 1\pm p$. The claim is that $f_{n+1,*}$ is an eigenfunction of $\Delta_{n+1}$ with eigenvalue $z=R^{-1}(\lambda_{n,*})$, where $R$ is the cubic polynomial which appeared in Lemma \ref{prop17}. As explained in the proof of \ref{prop17}, the preimage $R^{-1}([0,2])$ has three branches, so each eigenvalue $\lambda_{n,*}$ on level $n$ generates three new eigenvalues $\lambda_{n+1,*}$ on level $(n+1)$. The only exceptions are the eigenvalues $0$ and $2$, each of which generates two new eigenvalues because $1\pm p$ are forbidden (see Figure \ref{fig:tree}). This means that each eigenfunction extends to either two or three eigenfunctions at the next level.

\begin{theorem}[Eigenfunction extension]
\label{thm:eigext}
Suppose $f_{n,*}: V_n\to\mathbb{R}$ is an eigenfunction of $\Delta_n$ with eigenvalue $\lambda_{n,*}$. Let $f_{n+1,*}: V_{n+1}\to\mathbb{R}$ be an extension of $f_{n,*}$ to $V_{n+1}$ defined via (\ref{ext0}) and (\ref{ext1}), with $z \neq 1\pm p$. If 
\begin{equation}\label{EVcubic}
\lambda_{n,*} = R(z) = \frac{z^3 - 3z^2 + (2+pq)z}{pq}, 
\end{equation}
then $f_{n+1,*}$ is an eigenfunction of $\Delta_{n+1}$ with eigenvalue $z = R^{-1}(\lambda_{n,*})$.
\end{theorem}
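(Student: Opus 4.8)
The plan is to verify the eigenvalue equation $\Delta_{n+1}f_{n+1,*}=z\,f_{n+1,*}$ vertex by vertex on $V_{n+1}\setminus V_0$, using that $f_{n+1,*}$ coincides with $f_{n,*}$ on $V_n$ by construction and splitting the vertices into the \emph{new} ones ($V_{n+1}\setminus V_n$) and the \emph{old} interior ones ($V_n\setminus V_0$). Each new vertex is an interior vertex of a unique level-$(n+1)$ cell $F_w(V_1)$ with $|w|=n$, whose four vertices are two endpoints $x_0,x_1\in V_n$ and two interior vertices $y_0,y_1$, say with $y_0$ adjacent to $x_0$. Since by self-similarity the transition probabilities inside every such cell replicate those at level~$1$ (see~(\ref{Lp}) and Figure~\ref{figRW}), the two equations $\Delta_{n+1}f_{n+1,*}(y_i)=z\,f_{n+1,*}(y_i)$, $i=0,1$, rearrange to the linear system
\begin{equation*}
(1-z)f_{n+1,*}(y_0)-p\,f_{n+1,*}(y_1)=q\,f_{n,*}(x_0),\qquad (1-z)f_{n+1,*}(y_1)-p\,f_{n+1,*}(y_0)=q\,f_{n,*}(x_1),
\end{equation*}
whose coefficient matrix is the matrix $Q-zI$ of Lemma~\ref{prop17}, with determinant exactly $(1-p-z)(1+p-z)$. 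This is nonzero precisely because $z\neq1\pm p$, and Cramer's rule then yields \emph{exactly} the extension formulas~(\ref{ext0}) and~(\ref{ext1}). So the eigenvalue equation holds at every new vertex by the very definition of the extension, and nothing further is needed there.

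It remains to check the equation at an old interior vertex $x_k\in V_n\setminus V_0$. Such a vertex is a common endpoint of exactly two level-$(n+1)$ cells, and $\Delta_{n+1}f_{n+1,*}(x_k)$ involves only $f_{n+1,*}(x_k)=f_{n,*}(x_k)$ together with the values of $f_{n+1,*}$ at the interior vertex of each of those two cells that is adjacent to $x_k$ (here the symmetry $m_1=m_3$ imposed for spectral decimation is used: it makes the level-$(n+1)$ walk at $x_k$ split between its two cells with the same left/right probabilities as the level-$n$ walk at $x_k$). I would substitute the already-established formulas~(\ref{ext0})/(\ref{ext1}) for those two values, clear the common denominator $(1-p-z)(1+p-z)$, and collect terms. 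Using $p+q=1$, the contributions from $f_{n,*}(x_{k-1})$ and $f_{n,*}(x_{k+1})$ combine into a single multiple of $q\,f_{n,*}(x_{k-1})+p\,f_{n,*}(x_{k+1})$ (or of its $p\leftrightarrow q$ reflection, depending on which of the two cases of~(\ref{Lp}) applies at $x_k$), which by the level-$n$ eigenvalue equation and $\lambda_{n,*}=R(z)$ equals $(1-R(z))\,f_{n,*}(x_k)$. What remains is a scalar identity in $z,p,q$ multiplying $f_{n,*}(x_k)$; inserting the explicit cubic~(\ref{EVcubic}) for $R$ and using $q=1-p$, it collapses to elementary consequences of $p+q=1$. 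Conceptually this is the ``folding'' recorded in the matrix identity $S-zI_0-\bar X(Q-zI_1)^{-1}X=\varphi_0(z)H_0-\varphi_1(z)I_0=\varphi_0(z)\bigl(H_0-R(z)I_0\bigr)$ from the proof of Lemma~\ref{prop17}: eliminating the interior of each cell turns the level-$(n+1)$ equation at $x_k$ into $\varphi_0(z)$ times the level-$n$ equation at $x_k$, with $\varphi_0(z)=pq/\bigl((1-p-z)(1+p-z)\bigr)\neq0$, so the two are equivalent.

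Combining the two cases gives $\Delta_{n+1}f_{n+1,*}=z\,f_{n+1,*}$ on $V_{n+1}\setminus V_0$, i.e.\ $f_{n+1,*}$ is an eigenfunction of $\Delta_{n+1}$ with eigenvalue $z=R^{-1}(\lambda_{n,*})$. I expect the only real difficulty to be the case bookkeeping in the second step: one must track, cell by cell, which case of~(\ref{Lp}) is in force and pair the formulas~(\ref{ext0})/(\ref{ext1}) correctly with the left and the right cell meeting at $x_k$. The invariance of the whole construction under the reflection exchanging the two endpoints of a cell reduces this to a single case, and once the orientation is fixed the algebra is short; one could also bypass the hand computation entirely by invoking the abstract spectral self-similarity results of~\cite{T,MT} already used for Lemma~\ref{prop17}.
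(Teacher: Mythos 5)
Your argument follows essentially the same two-step route as the paper: at the new vertices the eigenvalue equation is a $2\times 2$ linear system whose determinant $(1-p-z)(1+p-z)$ is nonzero precisely when $z\neq 1\pm p$, and its unique solution reproduces the extension formulas (\ref{ext0})--(\ref{ext1}); at the old vertices one substitutes those formulas, invokes the level-$n$ eigenvalue equation, and is left with a scalar identity that holds exactly when $\lambda_{n,*}=R(z)$ as in (\ref{EVcubic}). This is also how the paper proceeds, and your Schur-complement reading of the second step via $\varphi_0,\varphi_1$ is a correct gloss on the proof of Lemma~\ref{prop17}; the remaining algebra you only sketch, but it is exactly the computation the paper carries out.

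There is, however, one concrete omission: you verify the equation only on $V_{n+1}\setminus V_0$ and then conclude that $f_{n+1,*}$ is an eigenfunction of $\Delta_{n+1}$. In this paper the relevant operator is the \emph{Neumann} discrete Laplacian --- the random walk at an endpoint jumps to its unique neighbor with probability $1$ (Figure~\ref{figRW}), Lemma~\ref{prop17} computes the Neumann spectrum, and the whole iteration is seeded by the level-$0$ eigenfunctions, which live entirely on $V_0=\{0,1\}$ --- so the eigenvalue equation at $x=0$ and $x=1$ is a genuine constraint, not a boundary condition you may drop. At $x=0$, say, $\Delta_{n+1}f_{n+1,*}(0)=f_{n+1,*}(0)-f_{n+1,*}(y'_1)$ with $y'_1$ the single adjacent new vertex, and one must check this equals $zf_{n+1,*}(0)$ using the level-$n$ relation $f_{n,*}(0)-f_{n,*}(x'_1)=\lambda_{n,*}f_{n,*}(0)$. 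The paper folds this into the interior computation by allowing the transition parameter $a\in\{p,q,0,1\}$ with $b=1-a$, taking $a=1$ at $x=0$ and $a=0$ at $x=1$; the same algebra you outline for interior old vertices (with only one adjacent cell contributing) closes this case, so the gap is easily repaired, but as written your conclusion does not cover it.
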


\begin{proof}
We break the proof into two parts. Given $f_{n,*}$, we first show that the following are equivalent for an extension $f_{n+1,*}$ of $f_{n,*}$:
\begin{enumerate}
\item $f_{n+1,*}$ is defined via the extension formulas (\ref{ext0}) and (\ref{ext1}).
\item $f_{n+1,*}$ satisfies the eigenvalue equation $\Delta_{n+1}f_{n+1,*}=z f_{n+1,*}$ on $V_{n+1}\setminus V_n$.
\end{enumerate}
After establishing this equivalence, we proceed to show that $f_{n+1,*}$ is an eigenfunction of $\Delta_{n+1}$ on all of $V_{n+1}$, provided that (\ref{EVcubic}) holds.

First we show the equivalence of (1) and (2). Assuming (2), we apply the eigenvalue equation $\Delta_{n+1}f_{n+1,*} = zf_{n+1,*}$ at the points $y_0,y_1 \in V_{n+1}\setminus V_n$ to obtain, by using both formulae in 
(\ref{Lp}) depending on the point $x$ (in fact, to cover these cases as well as the case $x\in V_0$, i.e. $x$ is a boundary vertex, below we   use the parameters $a\in\{p,q,0,1\}$, and $b=1-a$, instead of $p$ and $q=1-p$), 
\begin{eqnarray}
 \label{twoeig}    (1-z)f_{n+1,*}(y_0) &=& pf_{n+1,*}(y_1) +qf_{n+1,*}(x_0), \\
    (1-z)f_{n+1,*}(y_1) &=& qf_{n+1,*}(x_1) +pf_{n+1,*}(y_0). 
\end{eqnarray}
This is a linear system of 2 equations with 2 unknowns ($f_{n+1,*}(x_0)$ and $f_{n+1,*}(x_1)$ are known, $f_{n+1,*}(y_0)$ and $f_{n+1,*}(y_1)$ are unknown), which has a unique solution. After some elementary calculation, and using the fact that $f_{n+1,*}|_{V_n} = f_{n,*}$, it is easy to verify that $f_{n+1,*}(y_0)$ and $f_{n+1,*}(y_1)$ are uniquely expressed in terms of $f_{n,*}(x_0)$ and $f_{n,*}(x_1)$ according to the extension formulas (\ref{ext0}) and (\ref{ext1}), which shows (1). The reverse implication (1) $\Rightarrow$ (2) is straightforward.

At this point we have proved that the eigenvalue equation $\Delta_{n+1} f_{n+1,*}(x) = zf_{n+1,*}(x)$ holds for $x\in V_{n+1}\setminus V_n$. However, we have neither used the property of the eigenfunction $f_{n,*}$, nor related $z$ to the eigenvalue $\lambda_{n,*}$. To do so we must check the $\Delta_{n+1}$-eigenvalue equation on $V_n$. 

We introduce some additional notation. Fix an $x\in V_n$. Let $x'_0, x'_1 \in V_n$ be adjacent to $x$ on level $n$, and $y'_0, y'_1 \in V_{n+1}\setminus V_n$ be adjacent to $x$ on level $(n+1)$, as shown in Figure \ref{celldiagram}. (If $x\in V_0 =\{0,1\}$, then there is only one adjacent vertex on level $n$. This will be taken care of in the next argument.) We also label the transition probabilities according to (\ref{Lp}); see also Figure \ref{figRW}. The parameter $a$ can be one of $\{p,q,0,1\}$ depending on $x$. In particular, to take into account that $x\in V_0$ has only 1 adjacent vertex, we set $a=0$ if $x=1$ and $a=1$ if $x=0$. The parameter $b$ is set to equal $1-a$.

%Then there are either one or two $n+1$ cells with $x_0$ as a boundary point.  If $x_0 \notin V_0$ then there are two.  Let $x'_0, x_0, x_1 \in V_n$, and $y'_1, y_0 \in V_{n+1}\setminus V_n$ be adjacent to $x_0$ with probabilities a and b as in Figure \ref{celldiagram}.
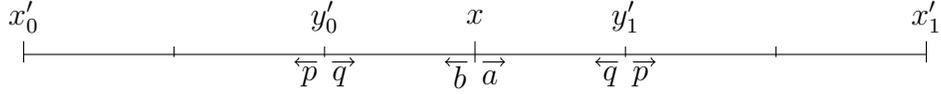
\begin{figure}
\begin{center}
\begin{tikzpicture}
% a straight line segment
\draw (1,0) -- (13,0);
% the ticks and their labels
\foreach \x  in {1,7,13}
\draw[xshift=\x cm] (0pt,5pt) -- (0pt,-3pt); %node[below,fill=white] {\the\numexpr\x +112\relax};
\foreach \x  in {3,5,9,11}
\draw[xshift=\x cm] (0pt,3pt) -- (0pt,-1pt);
% the labels
\node at (1, 0.5) {$x'_0$};
\node at (7, 0.5) {$x$};
\node at (13, 0.5) {$x'_1$};
\node at (5, 0.5) {$y'_0$};
\node at (9, 0.5) {$y'_1$};
\node at (7.2,-.3){$a$};
\node at (6.8,-.3){$b$};
\node at (5.2,-.3){$q$};
\node at (4.8,-.3){$p$};
\node at (9.2,-.3){$p$};
\node at (8.8,-.3){$q$};
%arrows
\draw[->](7.1,-0.1) -- (7.4,-0.1);
\draw[<-](6.6,-0.1) -- (6.9,-0.1);
\draw[->](5.1,-0.1) -- (5.4,-0.1);
\draw[<-](4.6,-0.1) -- (4.9,-0.1);
\draw[->](9.1,-0.1) -- (9.4,-0.1);
\draw[<-](8.6,-0.1) -- (8.9,-0.1);
\end{tikzpicture}
\end{center}
\caption{A diagram of two adjacent $(n+1)$-cells used in the proof of Theorem \ref{thm:eigext}.}
\label{celldiagram}
\end{figure}

Now we show that if $z\neq 1\pm p$ and (\ref{EVcubic}) holds, then $\Delta_{n+1}f_{n+1,*}(x) = zf_{n+1,*}(x)$ for $x\in V_n$. By (\ref{Lp}),
\begin{eqnarray}
\label{eq:n+1} \Delta_{n+1} f_{n+1,*}(x) &=& f_{n+1,*}(x) - a f_{n+1,*}(y'_1) - b f_{n+1,*}(y'_0),\\
\label{eq:n} \Delta_n f_{n,*}(x) &=& f_{n,*}(x) - a f_{n,*}(x'_1) - b f_{n,*}(x'_0).
\end{eqnarray}
Using the extension formulas (\ref{ext0}), $(\ref{ext1})$, $f_{n+1,*}(x)=f_{n,*}(x)$, and (\ref{eq:n+1}), we find
\begin{eqnarray}
\nonumber &&(\Delta_{n+1}-z)f_{n+1,*}(x) \\
\nonumber &=& (1-z) f_{n,*}(x) - a\left(\frac{q(1-z)f_{n,*}(x) + pqf_{n,*}(x'_1)}{(1-p-z)(1+p-z)} \right) \\
\nonumber && \qquad \qquad \qquad - ~b\left( \frac{q(1-z)f_{n,*}(x) + pqf_{n,*}(x'_0)}{(1-p-z)(1+p-z)}\right) \\
\nonumber &=& \frac{(1-z)(1-p-z)(1+p-z)-q(1-z)}{(1-p-z)(1+p-z)} f_{n,*}(x)\\
&& \label{eq:long} \quad -~\frac{pq\left(a f_{n,*}(x'_1) + b f_{n,*}(x'_0)\right)}{(1-p-z)(1+p-z)}.
\end{eqnarray}
Using (\ref{eq:n}) we can write 
\begin{equation}
af_{n,*}(x'_1) + bf_{n,*}(x'_0) = -(\Delta_n-1)f_{n,*}(x) = - (\lambda_{n,*}-1) f_{n,*}(x).
\end{equation}
This allows us to replace the second term of (\ref{eq:long}), so that the entire (\ref{eq:long}) equals
\begin{eqnarray}
\nonumber && \frac{(1-z)[(1-p-z)(1+p-z)-q]-pq (1-\lambda_{n,*})}{(1-p-z)(1+p-z)} f_{n,*}(x) \\
&=& \frac{-(z^3 -3z^2 +(2+pq)z )+pq -pq(1-\lambda_{n,*})}{(1-p-z)(1+p-z)} f_{n,*}(x).
\end{eqnarray}
Infer that $(\Delta_{n+1}-z)f_{n+1,*}=0$ on $V_n$, and in turn on $V_{n+1}$, if $z\neq 1\pm p$ and
\begin{equation}
\lambda_{n,*} = \frac{z^3-3z^2+(2+pq)z}{pq} = R(z).
\end{equation}
%From (\ref{EVcubic}) that 
%\begin{align*}
%pqR(z) &= z^3 - 3z^2 + (2+pq)z \\
%&= -(1-p - z)(1+p-z)(1-z)+q(1-z)+pq
%\end{align*}  
%Which implies the following.
%\begin{align*}
%pq(1-R(z)) + q(1-z) &= (1-p-z)(1+p-z)(1-z) \\
%pq(1-R(z)f_{n,*}(x_0) + q(1-z)f_{n,*}(x_0) &= (1-p-z)(1+p-z)(1-z)f_{n,*}(x_0) 
%\end{align*}
%and since 
%\begin{align}
%x_0 \in V_n, \Delta_nf_{n,*}(x_0) = R(z)f_{n,*}(x_0) \Rightarrow (1-R(z))f_{n,*}(x_0) = af_{n,*}(x_1) + bf_{n,*}(x'_0).
%\end{align}
%This allows the following substitution:
%\begin{align}
%(1-p-z)(1+p-z)(1-z)f_{n,*}(x_0)=pq(af_{n,*}(x_1) + bf_{n,*}(x'_0)) + q(1-z)f_{n,*}(x_0)
%\end{align}
%%&=pq(a\bar{f}(x_1) + b\bar{f}(x'_0)) + aq(1-z)\bar{f}(x_0)+ bq(1-z)\bar{f}(x_0) \\
%%&=b(q(1-z)\bar{f}(x_0)+pq\bar{f}(x'_0)) + a(q(1-z)\bar{f}(x_0)+pq\bar{f}(x_1)) \\
%From which we can obtain the following,
%\begin{align*}
%(1-z)f_{n,*}(x_0)  &= (1-z)f_{n+1,*}(x_0) \\
% &= b\left(\frac{q(1-z)f_{n,*}(x_0)+pqf_{n,*}(x'_0)}{(1-p-z)(1+p-z)}\right)+a\left(\frac{q(1-z)f_{n,*}(x_0)+pqf_{n,*}(x_1)}{(1-p-z)(1+p-z)}\right) \\
%&= bf_{n+1,*}(y'_1) + af_{n+1,*}(y'_0)
%\end{align*}
%We now conclude that $\Delta_{n+1}f_{n+1,*}(x_0) = zf_{n+1,*}(x_0)$.  Letting $a \in \{0,1,p,q\}$ proves all cases for $x_0 \in V_n$.
\end{proof}

\subsection{Continuity in the limit}%continuity

In this subsection, we show that the eigenfunction extension algorithm (Theorem \ref{thm:eigext}) produces a continuous eigenfunction of the fractal Laplacian in the limit $n\to\infty$, provided that one always chooses the lowest branch of the inverse map $R^{-1}$ at all levels $n \geq n_0$.

\begin{lemma}\label{bddlemma}
Fix $n_0,k \in \mathbb{N} \cup \{0\}$. Let $f_{n_0,k}:V_{n_0} \to \mathbb{R}$ be an eigenfunction of $\Delta_{n_0}$ with eigenvalue $\lambda_{n_0,k}$. Let $\{f_{n_0+i,k}\}_{i=1}^\infty$ be the sequence of $\Delta_{n_0+i}$-eigenfunctions extended from $f_{n_0,k}$ via successive applications of Theorem \ref{thm:eigext}, where one always chooses the lowest branch of the inverse cubic polynomial $R^{-1}(z)$ [see (\ref{EVcubic})]. Then 
\begin{equation}
\limsup_{i\to\infty} \max_{x\in V_{n_0+i}} |f_{n_0+i,k}(x)|
\end{equation}
 is bounded.
\end{lemma}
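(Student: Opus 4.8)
The plan is to control the extension formulas \eqref{ext0}--\eqref{ext1} quantitatively along the lowest branch. Introduce the transfer operator on a pair of adjacent-cell boundary values: if we write $v_i = \big(f_{n_0+i,k}(x_0),\, f_{n_0+i,k}(x_1)\big)^{\mathsf T}$ restricted to a level-$(n_0+i)$ cell, then \eqref{ext0}--\eqref{ext1} express the four level-$(n_0+i+1)$ values in that cell as the images of $v_i$ under a $2\times 2$ matrix with entries of the form $\frac{q(1-z_{i+1})}{(1-p-z_{i+1})(1+p-z_{i+1})}$ and $\frac{pq}{(1-p-z_{i+1})(1+p-z_{i+1})}$, where $z_{i+1}$ is the lowest-branch preimage of $\lambda_{n_0+i,k}$ under $R$. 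So the heart of the matter is: (i) show $z_{i+1}\to 0$ geometrically along the lowest branch, and (ii) show that the operator norm of that $2\times 2$ matrix is $\le 1 + C\,|z_{i+1}|$ (or at worst $\le 1 + C\rho^{i}$ for some $\rho<1$), so that $\prod_{i}(1+C\rho^i) < \infty$ and the maximum stays bounded.

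The key steps, in order. First I would analyze the lowest branch of $R^{-1}$ near $z=0$: since $R(z) = \frac{z^3 - 3z^2 + (2+pq)z}{pq}$ has $R(0)=0$ and $R'(0) = \frac{2+pq}{pq} = 1 + \frac{2}{pq} > 1$, the point $0$ is a repelling fixed point, hence its lowest-branch preimages satisfy $|z_{i+1}| \le \theta\,|\lambda_{n_0+i,k}|$ for small eigenvalues with $\theta = \frac{pq}{2+pq}<1$; one checks that once $i$ is large enough the iterates $\lambda_{n_0+i,k}$ enter the regime where this linearized bound applies, so $\lambda_{n_0+i,k}$ and hence $z_{i+1}$ decay geometrically at rate $\theta$. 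Second, with $z_{i+1}$ small, expand the matrix entries: the diagonal-type entry is $\frac{q(1-z)}{(1-p-z)(1+p-z)} = \frac{q(1-z)}{(1-z)^2 - p^2} = q + O(z)$ and the off-diagonal entry is $\frac{pq}{(1-z)^2-p^2} = \frac{pq}{q(1+q)\cdots}$— more precisely at $z=0$ the two entries are $\frac{q}{1-p^2} = \frac{1}{1+p}\cdot\frac{q}{q} $ and $\frac{pq}{1-p^2} = \frac{p}{1+p}$, whose row sums are $\frac{q + pq}{1-p^2} = \frac{q(1+p)}{(1-p)(1+p)} = \frac{q}{1-p} = \frac{q}{q} = 1$. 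So at $z=0$ the transfer matrix is row-stochastic (a convex-combination/averaging operator), giving sup-norm exactly $1$; for $z$ near $0$ the entries perturb by $O(z)$, so the $\ell^\infty$-operator norm is $\le 1 + C|z_{i+1}| \le 1 + C\theta^{i}$. Third, since each level-$(n_0+i+1)$ value in a cell is a (perturbed) convex combination of the two level-$(n_0+i)$ endpoint values of that cell,
\begin{equation}
\max_{x\in V_{n_0+i+1}} |f_{n_0+i+1,k}(x)| \le (1 + C\theta^{i}) \max_{x\in V_{n_0+i}} |f_{n_0+i,k}(x)|,
\end{equation}
and iterating gives $\max_{x\in V_{n_0+i}} |f_{n_0+i,k}(x)| \le \Big(\prod_{j\ge 0}(1+C\theta^{j})\Big) \max_{x\in V_{n_0}}|f_{n_0,k}(x)| < \infty$, which bounds the $\limsup$.

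The main obstacle I expect is step one: rigorously showing that along the lowest branch the eigenvalues $\lambda_{n_0+i,k}$ actually decay geometrically to $0$ rather than, say, getting stuck near the other fixed point $z=2$ or near one of the forbidden values $1\pm p$. This requires a careful description of the branch structure of $R^{-1}$ on $[0,2]$ — identifying which branch is the "lowest," checking that it maps a neighborhood of $0$ into a strictly smaller neighborhood of $0$ (using $R'(0) = 1 + \tfrac{2}{pq} > 1$), and verifying that the forbidden points $1\pm p$ are bounded away from $0$ so the denominators $(1-p-z_{i+1})(1+p-z_{i+1})$ in \eqref{ext0}--\eqref{ext1} stay bounded below. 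A secondary technical point is handling the boundary vertices $x\in V_0$ and the eigenvalues $\lambda_{n_0,k}\in\{0,2\}$ separately, since there the extension produces only two new values and the parameter $a$ takes the values $0$ or $1$; but for $\lambda=0$ the eigenfunction is constant and the claim is trivial, and for $\lambda=2$ one tracks the analogous linearization at the fixed point $z=2$ (where $|R'(2)|>1$ as well), so these cases add only bookkeeping rather than a genuine difficulty.
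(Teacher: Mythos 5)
Your proposal follows essentially the same route as the paper's proof: a per-level sup-norm bound of the form $1+O(\lambda_{n+1,k})$ extracted from the extension formulas (\ref{ext0})--(\ref{ext1}), geometric decay of the eigenvalues along the lowest branch of $R^{-1}$, and convergence of the resulting infinite product. Two comments on the step you yourself flag as the main obstacle. First, your row-sum observation can be made exact rather than perturbative: since $q(1-z)+pq=q(1+p-z)$, the two coefficients in (\ref{ext0})--(\ref{ext1}) are positive for $0\le z<q$ and sum to exactly $\frac{q}{q-z}$, which is how the paper obtains $M_{n+1,k}\le \frac{q}{q-\lambda_{n+1,k}}M_{n,k}$ in one line (this also requires $\lambda_{n+1,k}<q$, which the decay estimate supplies). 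Second, the paper dispenses with the ``eventually enters the linearization regime'' issue altogether: because $R''(z)=(6z-6)/(pq)<0$ on $[0,1]$, the polynomial $R$ is concave on $[0,\min(p,q)]$, and since $R(p)=R(q)=2$ its graph lies above the chord joining $(0,0)$ to $(\min(p,q),2)$; hence $R(z)\ge \frac{2}{\min(p,q)}\,z$ there, which gives $\lambda_{n+1,k}\le \frac{\min(p,q)}{2}\,\lambda_{n,k}$ at \emph{every} level $n\ge n_0$, not just asymptotically. Be careful with your proposed rate $\theta=\frac{pq}{2+pq}$: the lowest branch $\varphi$ of $R^{-1}$ is convex with $\varphi(0)=0$, so $\varphi(\lambda)\ge \frac{pq}{2+pq}\lambda$; the linearized slope is a \emph{lower} bound, not an upper bound, so as stated that inequality goes the wrong way and you would need $\theta+\epsilon$ on a small interval together with an argument that the iterates reach it --- or simply the global chord bound above. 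Finally, your side remark about linearizing at the fixed point $z=2$ is unnecessary: under the lowest-branch convention an eigenvalue $2$ is mapped after a single step into $[0,\min(p,q)]$, after which the same decay estimate applies.
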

\begin{proof}
From Lemma \ref{prop17} we know that $\lambda_{n,*} = R(\lambda_{n+1,*})$. Assume that the lowest branch of $R^{-1}$ is chosen to generate $\lambda_{n+1,k} = R^{-1}(\lambda_{n,k})$ from $\lambda_{n,k}$. Observe that $R$ is concave on $[0,\min(p,q)]$; therefore the graph of $R$ on $[0,\min(p,q)]$ lies above the secant line connecting $(0,0)$ and $(\min(p,q),2)$ (see Figure \ref{figCD}). This implies the inequality
\begin{equation}
\lambda_{n,k} = R(\lambda_{n+1,k})\geq \frac{2}{\min(p,q)}\lambda_{n+1,k}.
\end{equation}
By iterating this inequality, we see that the $i$-fold preimage $\lambda_{n_0+i,k}=R^{-i}(\lambda_{n_0,k})$, where the lowest branch of $R^{-1}$ is always chosen, satisfies
\begin{equation}\label{evineq}
\lambda_{n_0+i,k} \le \left(\frac{\min(p,q)}{2}\right)^i \lambda_{n_0,k}.
\end{equation}
The corresponding eigenfunctions $f_{n_0+i,k}$ are generated via Theorem \ref{thm:eigext}.

%Since each level of graph approximation is finite, we know we will have a maximum value for our function on that level.  
Let $M_{n,k}=\max_{x\in V_n}|f_{n,k}(x)|$. For each $n> n_0$ and each $y\in V_{n+1}\setminus V_n$, we use the eigenfunction extension algorithm (\ref{ext0}) and (\ref{ext1}) to arrive at the following estimate: there exist $x_0, x_1 \in V_n$ such that
\begin{eqnarray}
|f_{n+1,k}(y)| &=&\left|\frac{q(1-\lambda_{n+1,k})f_{n+1,k}(x_0) + pqf_{n+1,k}(x_1)}{(1-p-\lambda_{n+1,k})(1+p-\lambda_{n+1,k})}\right| \\
&\leq& \frac{q(1-\lambda_{n+1,k})|f_{n+1,k}(x_0)| + pq|f_{n+1,k}(x_1)|}{(1-p-\lambda_{n+1,k})(1+p-\lambda_{n+1,k})} \\
&\leq& \frac{q}{q-\lambda_{n+1,k}} M_{n,k}.
\end{eqnarray} 
In the second line we used the triangle inequality and the bound $\lambda_{n+1,k} <1$, which can be seen from (\ref{evineq}). This then implies the estimate
\begin{equation}\label{maxIEQ}
M_{n+1,k} \le \frac{q}{q-\lambda_{n+1,k}}M_{n,k}.
\end{equation}
for all $n\geq n_0$. Applying (\ref{maxIEQ}) inductively and using (\ref{evineq}), we see that for all $i\in \mathbb{N}$,
\begin{equation}
M_{n_0+i,k} \le M_{n_0,k} \prod_{j=1}^i \frac{q}{q-\lambda_{n_0+j,k}} \leq M_{n_0,k}\prod_{j=1}^i \left(1-\left(\frac{\min(p,q)}{2}\right)^j \frac{\lambda_{n_0,k}}{q}\right)^{-1}.
\end{equation}
Setting $\gamma_j = \left(\frac{\min(p,q)}{2}\right)^j \frac{\lambda_{n_0,k}}{q}$ and taking the limit, we obtain
\begin{equation}
\limsup_{i\to\infty} M_{n_0+i,k} \leq M_{n_0,k} \limsup_{i\to\infty} \prod_{j=1}^i \left(1-\gamma_j\right)^{-1}.
\end{equation}

It remains to verify the convergence of the infinite product $\prod_{j=1}^\infty (1-\gamma_j)^{-1}$, which is equivalent to showing the convergence of the series $\sum_{j=1}^\infty \log(1-\gamma_j)^{-1}$. Observe that if we set $\epsilon_j$ to satisfy $(1-\gamma_j)^{-1}=1+\epsilon_j$, then
\begin{equation}
\log(1-\gamma_j)^{-1} = \log(1+\epsilon_j) \leq \epsilon_j 
\end{equation}
by the inequality $1+x\leq e^x$. Moreover, since $\gamma_j = K r^j$ for suitable positive constants $K$ and $r\leq \frac{1}{2}$, we can always find a constant $K_0$ such that
\begin{equation}
\epsilon_j = \frac{\gamma_j}{1-\gamma_j} = \frac{K r^j}{1-K r^j} \leq K_0 r^j
\end{equation}
for all sufficiently large $j$. Since the geometric series $\sum_j K_0 r^j$ converges, this implies that the series $\sum_{j=1}^\infty \log(1-\alpha_j)^{-1}$ converges.
%$$ \lim \limits_{i \to \infty} M_{n_0+i} \le M_{n_0}\prod_{j=1}^\infty \frac{1}{1-\left(\frac{q}{2}\right)^j}.$$
%Taking the natural log, we obtain:
%\begin{align*}
%\ln \left( \prod_{j=1}^\infty \frac{1}{1-\left(\frac{q}{2}\right)^j}\right) &= \sum \limits_{j=1}^\infty \ln\left(\frac{1}{1-\left(\frac{q}{2}\right)^j}\right) \\
%&= \sum \limits_{j=1}^\infty -\ln\left(1-\left(\frac{q}{2}\right)^j\right) \\
%&\le \sum \limits_{j=1}^\infty \left(\frac{q}{2}\right)^j - \left(\frac{q}{2}\right)^{2j}
%\end{align*}
%This is a difference of two convergent geometric series and thus converges, which leads to the following inequality.
%$$\lim \limits_{i \to \infty} \max |f_{n_0+i,k}| \le \left(\max|f_{n_0,k}|\right)e^{\frac{2q}{4-q^2}}$$   
\end{proof}

Now we prove the continuity of the eigenfunction in the limit.

\begin{theorem}
\label{thm:eigfconti}
Let $\{f_{n_0+i,k}\}_{i=1}^\infty$ be the sequence of $\Delta_{n_0+i}$-eigenfunctions extended from $f_{n_0,k}$ as in Lemma \ref{bddlemma}. Then the limit $f_k := \lim_{i\to\infty} f_{n_0+i,k}$ is uniform on $I$, and can be extended to a continuous function on $I$.
% Then $f_k(x) = \lim \limits_{i \to\infty} f_{n_0+i,k}(x)$ extends to a continuous function. 
\end{theorem}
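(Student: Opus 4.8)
The plan is to promote the pointwise/uniform-on-$V_n$ control from Lemma~\ref{bddlemma} to genuine uniform convergence of harmonically extended functions on $I$. First I would fix the setup: for each $i$, let $g_i : I \to \mathbb{R}$ denote the piecewise-linear (harmonic) extension of $f_{n_0+i,k}$ from $V_{n_0+i}$ to $I$; these are the objects whose limit we must control, since the values on $V_{n_0+i}$ alone do not see all of $I$. The key quantitative input is the extension step: when passing from level $n$ to level $n+1$, the new vertices $y_0,y_1 \in V_{n+1}\setminus V_n$ receive values given by (\ref{ext0})--(\ref{ext1}). I would estimate the \emph{increment} $|f_{n+1,k}(y_j) - h_n(y_j)|$, where $h_n$ is the value at $y_j$ of the level-$n$ harmonic (linear) interpolant between the flanking vertices $x_0,x_1$; that is, $h_n(y_0) = \tfrac{2}{3}f_{n,k}(x_0) + \tfrac13 f_{n,k}(x_1)$ and similarly for $y_1$ (with weights determined by the $r_j$). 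A direct manipulation of (\ref{ext0}) shows
\begin{equation}
f_{n+1,k}(y_0) - h_n(y_0) = \frac{q(1-z)f_{n,k}(x_0) + pq f_{n,k}(x_1)}{(1-p-z)(1+p-z)} - \big(\tfrac{2}{3}f_{n,k}(x_0) + \tfrac13 f_{n,k}(x_1)\big),
\end{equation}
and since $z = \lambda_{n+1,k} = O\big((\min(p,q)/2)^{i}\big) \to 0$ geometrically by (\ref{evineq}), the right-hand side is $O(\lambda_{n+1,k})\cdot M_{n,k} = O(r^{i})$, where $r = \min(p,q)/2 \le \tfrac12$ and $M_{n,k}$ is uniformly bounded by Lemma~\ref{bddlemma}.

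Next I would convert this vertex-increment bound into a sup-norm bound on $\|g_{i+1} - g_i\|_{\infty,I}$. Since $g_{i+1}$ and $g_i$ are both harmonic (linear) on each $(n_0+i)$-cell and agree at the endpoints $x_0,x_1$ of that cell, the difference $g_{i+1} - g_i$ restricted to such a cell is the harmonic extension of a function that vanishes at $x_0,x_1$ and equals $f_{n_0+i+1,k}(y_j) - h_{n_0+i}(y_j)$ at the two interior points $y_0,y_1$; hence $\|g_{i+1}-g_i\|_{\infty,I} \le C \max_{j}|f_{n_0+i+1,k}(y_j) - h_{n_0+i}(y_j)| = O(r^{i})$. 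Summing the geometric series, $\sum_i \|g_{i+1}-g_i\|_{\infty,I} < \infty$, so $(g_i)$ is Cauchy in $C(I)$ with the sup norm; call the limit $f_k$. Being a uniform limit of continuous functions, $f_k$ is continuous on $I$, and it restricts on each $V_{n_0+i}$ to (the limit of) the eigenfunction values, so this $f_k$ is exactly the $\lim_{i\to\infty} f_{n_0+i,k}$ asserted in the statement, with the convergence uniform on $I$.

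The main obstacle I anticipate is bookkeeping rather than conceptual: one must check that the ``harmonic interpolant'' weights $\tfrac23,\tfrac13$ (or, in the $p\neq\tfrac12$ case, the $r_j$-dependent weights) are precisely the $z\to 0$ limit of the extension coefficients in (\ref{ext0})--(\ref{ext1}), so that the increment really is $O(z)$ and not $O(1)$ — this is where the specific algebra of $R$ and the forbidden values $1\pm p$ matter, and where one uses that the lowest branch of $R^{-1}$ forces $z = \lambda_{n+1,k} \downarrow 0$ geometrically. A secondary point is the boundary cells $x \in V_0$, handled exactly as in the proof of Theorem~\ref{thm:eigext} via the auxiliary parameter $a \in \{0,1\}$; and one should note the estimate is uniform over the finitely many cells at each level because $M_{n,k}$ and the geometric rate $r$ do not depend on the cell. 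Once the telescoping bound $\|g_{i+1}-g_i\|_{\infty} = O(r^i)$ is in hand, the rest is immediate.
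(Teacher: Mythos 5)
Your proposal follows essentially the same route as the paper: compare the spectral-decimation extension (\ref{ext0})--(\ref{ext1}) with the harmonic extension (which is exactly the $z=0$ case of those formulas, with weights $\tfrac{1}{1+p},\tfrac{p}{1+p}$ rather than $\tfrac23,\tfrac13$ unless $p=\tfrac12$), and bound the discrepancy by $O(\lambda)\,M_{n,k}$ using the boundedness from Lemma~\ref{bddlemma} and the geometric decay of the eigenvalues from (\ref{evineq}). Your explicit telescoping and Cauchy-in-$C(I)$ step, summing the geometric increments $\|g_{i+1}-g_i\|_\infty$, is a correct and slightly more detailed completion of the convergence argument that the paper states more briefly.
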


\begin{proof}
The key argument is that since the eigenvalues $\lambda_{n_0+i,k} = R^{-i}(\lambda_{n_0,k})$ tend to $0$ as $i\to\infty$, the eigenfunction extension (\ref{ext0}) and (\ref{ext1}) of $f_{n_0+i,k}$ to $V_{n_0+i+1}$ can be approximated by the harmonic extension of $f_{n_0+i,k}$ to $V_{n_0+i+1}$ as $i\to\infty$, uniformly on $I$. Since a harmonic extension on $I$ is continuous in the limit, we  deduce that the limit $f_k$ can also be extended to a continuous function.

Define, for each $n\geq n_0$ and each $k$, the harmonic extension $\tilde{f}_{n+1,k}$ of $f_{n,k}$ to $V_{n+1}$. Using the coordinates $x_0$, $x_1$, $y_0$, $y_1$ introduced before Theorem \ref{thm:eigext},
\begin{eqnarray}
\tilde{f}_{n+1,k}(y_0)  &=& \frac{f_{n,k}(x_0) + p f_{n,k}(x_1)}{1+p}, \\
\tilde{f}_{n+1,k}(y_1)  &=& \frac{f_{n,k}(x_1) + p f_{n,k}(x_0)}{1+p}.
\end{eqnarray}
 Note that these are (\ref{ext0}) and (\ref{ext1}) with $z=0$. 

%{\color{red}[Here was the original argument, which I found very vague. The argument has been rewritten above; please check if everything looks okay.
%
%``Consider an eigenfunction $f_{n_0,k}$ on an ($n_0$)-cell.  Let $\lambda_{n_0,k}$ be the corresponding eigenvalue.  Let $\lambda_i = \lambda_{n_0+i,k}.$  We can consider a reparametrization of the interval so that harmonic extension is the same as linear.  It is then sufficient to show that (below statement)'']}

Let us now estimate $|f_{n+1,k}(y_0)-\tilde{f}_{n+1,k}(y_0)| $, which equals
\begin{eqnarray}\label{showlin}
&&\left|\frac{q(1-\lambda_{n,k})f_{n,k}(x_0)+pq f_{n,k}(x_1)}{(1-p-\lambda_{n,k})(1+p-\lambda_{n,k})}-\frac{f_{n,k}(x_0)+p f_{n,k}(x_1) }{1+p}\right|\\
&=& \left|\frac{\lambda_{n,k}(1+p-pq-\lambda_{n,k}) f_{n,k}(x_0) +p\lambda_{n,k}(2-\lambda_{n,k}) f_{n,k}(x_1)}{(q-\lambda_{n,k})(1+p-\lambda_{n,k})(1+p)} \right| \label{fdiff}
\end{eqnarray}
Using the triangle inequality and then replacing $|f_{n,k}(x_0)|$ and $|f_{n,k}(x_1)|$ by the sup $M_{n,k} = \sup_{x\in V_n}|f_{n,k}(x)|$, we can bound (\ref{fdiff}) from above by
\begin{equation}
\frac{\lambda_{n,k} (1+p-pq-\lambda_{n,k}) M_{n,k} + p\lambda_{n,k}(2-\lambda_{n,k}) M_{n,k}}{(q-\lambda_{n,k})(1+p-\lambda_{n,k})(1+p)}\\
= \frac{\lambda_{n,k}}{q-\lambda_{n,k}} M_{n,k}
\end{equation}
%where $A_i, B_i$ are values of $f_{n_0+i-1,k}(x).$  Let $M_i = \max |f_{n_0+i,k}(x)|.$  Then $A_{i-1},B_{i-1}\le M_{i-1}$ and therefore
%\begin{equation}
%\left|\frac{q(1-\lambda_i)A_{i-1}+pqB_{i-1}}{(1-p-\lambda_i)(1+p-\lambda_i)}-\frac{qA_{i-1}+pqB_{i-1}}{(1-p)(1+p)}\right| \le M_{i-1}\frac{\lambda_i}{1-p-\lambda_i}
%\end{equation}
Since $\lim_{n\to\infty} M_{n,k}$ is bounded by Lemma \ref{bddlemma} and $\lambda_{n,k} \to 0$ as $n\to\infty$, the right-hand side of this inequality tends to 0. The same estimate holds for $|f_{n+1,k}(y_1)-\tilde{f}_{n+1,k}(y_1)| $. Since $y_0$ and $y_1$ are arbitrary, we conclude that $|f_{n,k} - \tilde{f}_{n,k}|$ converges to $0$ uniformly on $I$.
\end{proof}

\subsection{Spectral decomposition of the delta function} \label{sec:specdecom}%spectral decomp

Let $\{f_{n,k}\}_{k=0}^{3^n}$ be a complete set of eigenfunctions of $\Delta_n$ with corresponding eigenvalues $\{\lambda_{n,k}\}_{k=0}^{3^n}$. Consider the level-$n$ delta function $\delta_0^{(n)}: V_n \to\mathbb{R}$ defined by
\begin{equation}
\delta^{(n)}_0(x) = \left\{\begin{array}{ll} 1, & \text{if}~x=0,\\ 0, & \text{if}~x \in V_n \backslash \{0\}.\end{array}\right.
\end{equation}
By the spectral theorem, we can find a set of real numbers (or weights) $\{\alpha_{n,k}\}_{k=0}^{3^n}$ such that 
\begin{equation}
\delta_0^{(n)}(x) = \sum_{k=0}^{3^n}\alpha_{n,k}f_{n,k}(x).
\end{equation}
The sequence $\delta_0^{(n)}$ approximates a delta function at $0$ in the limit $n\to\infty$.

In order to study the wave equation in Section \ref{sectionwaveequation}, we need estimates on the eigensolutions $(f_{n,*}, \lambda_{n,*})$, as well as information about the weights $\alpha_{n,*}$. We will address the former in Section \ref{section5estimates}, and the latter in the following proposition.

\begin{proposition}
The weights $\alpha_{n+1,*}$ can be obtained inductively from $\alpha_{n,*}$.
\label{prop:weights}
\end{proposition}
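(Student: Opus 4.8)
The plan is to express the level-$(n+1)$ delta function in terms of the level-$n$ delta function and the eigenfunction extension formulas, and then use orthogonality of eigenfunctions to solve for the new weights. First I would observe that the level-$n$ delta function is the restriction of the level-$(n+1)$ delta function to $V_n$, since both vanish everywhere except at $0$ and equal $1$ there. So if $\delta_0^{(n+1)} = \sum_{k=0}^{3^{n+1}} \alpha_{n+1,k} f_{n+1,k}$, then restricting to $V_n$ gives $\delta_0^{(n)} = \sum_{k=0}^{3^{n+1}} \alpha_{n+1,k}\, f_{n+1,k}\big|_{V_n}$. By Theorem \ref{thm:eigext}, each $f_{n+1,k}|_{V_n}$ is (up to the exceptional $z = 1\pm p$ cases, which do not occur by Lemma \ref{prop17}) an eigenfunction $f_{n,k'}$ of $\Delta_n$, for the appropriate index $k' = k'(k)$; conversely each $f_{n,k'}$ arises as the restriction of exactly two or three of the $f_{n+1,k}$, corresponding to the branches of $R^{-1}$. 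Grouping the sum over $V_n$ by the parent eigenfunction then yields
\begin{equation}
\alpha_{n,k'} = \sum_{k \,:\, k'(k) = k'} \alpha_{n+1,k},
\end{equation}
i.e. the parent weight is the sum of the child weights over the (two or three) branches.

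This gives $|V_n|$ linear relations, but we need $|V_{n+1}|$ equations to determine the $\alpha_{n+1,k}$. The remaining $|V_{n+1}| - |V_n|$ relations come from pairing $\delta_0^{(n+1)}$ against the eigenfunctions $f_{n+1,k}$ in $L^2(V_{n+1},\mu_{n+1})$: by the spectral theorem, $\alpha_{n+1,k} = \langle \delta_0^{(n+1)}, f_{n+1,k}\rangle / \|f_{n+1,k}\|^2 = f_{n+1,k}(0)\, m(\{0\}) / \|f_{n+1,k}\|^2$, where $m(\{0\})$ is the mass the discrete measure places at the boundary vertex $0$. Since $f_{n+1,k}(0) = f_{n,k'}(0)$ is known from the previous level, the only genuinely new quantity is the normalization $\|f_{n+1,k}\|^2_{L^2(\mu_{n+1})}$. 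So the concrete content of the induction is: (i) the parent-weight relations above, and (ii) an inductive formula for $\|f_{n+1,k}\|^2$ in terms of $\|f_{n,k'}\|^2$ and the values $f_{n,k'}(x_0), f_{n,k'}(x_1)$ at the coarse endpoints of each cell, obtained by splitting $\|f_{n+1,k}\|^2 = \sum_{\text{cells}} (\cdots)$ over the level-$(n+1)$ cells and substituting the extension formulas (\ref{ext0}), (\ref{ext1}); this is an elementary but somewhat lengthy computation using the self-similarity of $\mu$ with weights $m_1 = m_3 = \tfrac{q}{1+q}$, $m_2 = \tfrac{p}{1+q}$.

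Alternatively, and perhaps more cleanly, one can combine (i) and (ii) into a single step: write $\delta_0^{(n+1)} = \delta_0^{(n)} \circ \pi + g$, where $\delta_0^{(n)}\circ\pi$ denotes the \emph{harmonic extension} of $\delta_0^{(n)}$ from $V_n$ to $V_{n+1}$ and $g$ is supported on $V_{n+1}\setminus V_n$; expand $\delta_0^{(n)}\circ\pi$ using the $z = 0$ extension formulas (which, as noted before Theorem \ref{thm:eigfconti}, are exactly the harmonic extension), expand $g$ in the level-$(n+1)$ eigenbasis directly, and collect. In either formulation, the main obstacle is bookkeeping rather than conceptual: one must carefully track the branch structure of the spectral decimation (which eigenfunction at level $n$ spawns which at level $n+1$, including the exceptional behavior at the eigenvalues $0$ and $2$ where only two branches survive), and correctly handle the discrete measure weights at the boundary vertices versus interior vertices when computing the $L^2$ norms. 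I expect no analytic difficulty: all sums are finite, all eigenfunctions are determined by the level-$n_0$ data through Theorem \ref{thm:eigext}, and the inductive formula for $\alpha_{n+1,*}$ is purely algebraic in $p, q$ and the previously computed eigendata.
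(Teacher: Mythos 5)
Your proposal is correct in outline but takes a genuinely different route from the paper. The paper never uses inner products: after the sum rule $\sum_i \alpha_i = \alpha$ (your parent--child relation, which is exactly (\ref{specdecomp1})), it imposes the pointwise matching condition (\ref{wsplit}) on each pair of adjacent $(n+1)$-cells; via the extension formulas (\ref{ext0})--(\ref{ext1}) this yields two further linear relations (\ref{leftsum})--(\ref{rightsum}), and the resulting $3\times 3$ system is solved in closed form, giving the explicit recursion (\ref{rep13})--(\ref{decompform}) (with a degenerate $2\times 2$ system when $\lambda\in\{0,2\}$), from which nonnegativity of the weights is read off. You instead compute Fourier coefficients, $\alpha_{n+1,k}=f_{n+1,k}(0)\,\mu_{n+1}(\{0\})/\|f_{n+1,k}\|^2$, reducing the induction to a recursion for eigenfunction norms. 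This is viable, and it even buys something the paper only gets a posteriori: since by the eigenvalue count in Lemma \ref{prop17} every eigenfunction is an iterated extension of $f_{0,1}$ or $f_{0,2}$, one has $f_{n+1,k}(0)=1$, so positivity of all weights is immediate. However, two steps you leave implicit carry the real content: (i) orthogonality is not automatic, because $\Delta_{n+1}$ as defined in (\ref{Lp}) is not a symmetric matrix; you must exhibit the reversible measure of the random walk (the discretized self-similar measure, with the correct boundary masses) and check self-adjointness in that weighted inner product before invoking $\alpha_{n+1,k}=\langle\delta_0^{(n+1)},f_{n+1,k}\rangle/\|f_{n+1,k}\|^2$; and (ii) the norm recursion is not a direct ``substitute and sum'': expanding $\|f_{n+1,k}\|^2$ over cells produces cross terms of the form $\sum_{\text{cells}} f_{n,k'}(x_0)f_{n,k'}(x_1)$, which are not functions of $\|f_{n,k'}\|^2$ alone and must be eliminated using the level-$n$ eigenvalue equation. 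Both points are standard and fixable, so your approach would work; the paper's matching-condition argument is shorter, avoids any measure-theoretic bookkeeping, and produces the explicit weight formulas that are used for the numerics.
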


\begin{proof}
First we fix our convention at $n=0$. The two (non-$\ell^2$-normalized) eigenfunctions $f_{0,1}$ and $f_{0,2}$ of $\Delta_0$ are 
\begin{equation}
(f_{0,1}(0), f_{0,1}(1))=(1,1) ~\text{and}~ (f_{0,2}(0), f_{0,2}(1))=(1,-1),
\end{equation}
with corresponding eigenvalue $0$ and $2$, respectively. It is then easy to see that $\delta_0^{(0)} = \frac{1}{2} f_{0,1} + \frac{1}{2} f_{0,2}$, \emph{i.e.,} $\alpha_{0,1}=\alpha_{0,2}=\frac{1}{2}$. 

For the iteration step, suppose the weights $\alpha_{n,*}$ are known at level $n$, and we want to determine the weights $\alpha_{n+1,*}$. The idea is to write each contribution $\alpha_{n,k}f_{n,k}$ in terms of a linear combination $\sum_j \alpha_{n+1,k_j} f_{n+1,k_j}$ of the (2 or 3) eigenfunctions $f_{n+1,k_j}$ which are extensions of $f_{n,k}$ given by Theorem \ref{thm:eigext}.

To make this idea precise without adding too much notation, we fix $n$ and $k$, and write $f$, $\lambda$, and $\alpha$ as respective shorthands for $f_{n,k}$, $\lambda_{n,k}$, and $\alpha_{n,k}$. If $\lambda \notin \{0,2\}$, then spectral decimation (Theorem \ref{thm:eigext}) implies that $f$ has 3 extensions $f_1$, $f_2$, and $f_3$ to $V_{n+1}$ which are eigenfunctions of $\Delta_{n+1}$ with respective eigenvalues $\lambda_1 \leq \lambda_2 \leq \lambda_3$. We would like to find the corresponding weights $\alpha_1$, $\alpha_2$, and $\alpha_3$ by imposing the following matching condition: For any four consecutive vertices $x_0< y_0<y_1< x_1$ in $V_{n+1}$ with $x_0, x_1 \in V_n$ and $y_0, y_1 \in V_{n+1}\setminus V_n$,
%
%suppose $f_1 = f_{n+1,k_1}, f_2 = f_{n+1,k_2}$ and $f_3 = f_{n+1,k_3}$ are the extensions into $V_{n+1}$ with corresponding eigenvalues $\lambda_1 = \lambda_{n+1,k_1}, \lambda_2=\lambda_{n+1,k_2},$ and $\lambda_3 = \lambda_{n+1,k_3}$.  We will find weights so that 
\begin{equation}\label{wsplit}
\sum_{i=1}^3 \alpha_i f_i(x) =
\begin{cases}
 \alpha f(x_0), & \text{if}~x=x_0,\\
\alpha f(x_1), & \text{if}~x\in \{x_1,y_0,y_1\}.
\end{cases} 
\end{equation}
An explicit calculation verifies that with the weights $\alpha_{n+1,*}$ generated from this matching condition, we have
\begin{equation}
\sum_{k=0}^{3^{n+1}} \alpha_{n+1,k}f_{n+1,k} =  \delta_0^{(n+1)}.
\end{equation}

We now determine the weights. Observe that $f_i$ and $f$ agree on $V_n$ by construction. This together with the matching condition (\ref{wsplit}) at $x_0$ (or at $x_1$) implies that
\begin{equation}\label{specdecomp1}
\sum_{i=1}^3 \alpha_i=\alpha.
\end{equation}
Next, using the eigenfunction extension formula (\ref{ext0}) and the matching condition at $y_0$ in (\ref{wsplit}), we get
\begin{equation}\label{leftsum}
\sum_{i=1}^3\frac{q(1-\lambda_i)}{(1-p-\lambda_i)(1+p-\lambda_i)}\alpha_i = 0.
\end{equation}
Notice that there is no dependence on $f$. Similarly, using (\ref{ext1}) and the matching condition at $y_1$ in (\ref{wsplit}), we arrive at a third relation
\begin{equation}\label{rightsum}
\sum_{i=1}^3\frac{pq}{(1-p-\lambda_i)(1+p-\lambda_i)}\alpha_i = 0.
\end{equation}
Equations (\ref{specdecomp1}), (\ref{leftsum}), and (\ref{rightsum}) form a linear system of $3$ equations with $3$ unknowns $(\alpha_1,\alpha_2,\alpha_3)$. It has the unique solution
 
%   (\ref{leftsum}) and (\ref{rightsum}) we can obtain the following:
\begin{equation}\label{rep13}
\alpha_1 = \frac{(\lambda_3-\lambda_2)(q-\lambda_1)(1+p-\lambda_1)}{(\lambda_2-\lambda_1)(q-\lambda_3)(1+p-\lambda_3)}\alpha_3,
\end{equation}
\begin{equation}\label{rep23}
\alpha_2 = \frac{(\lambda_1-\lambda_3)(q-\lambda_2)(1+p-\lambda_2)}{(\lambda_2-\lambda_1)(q-\lambda_3)(1+p-\lambda_3)}\alpha_3,
\end{equation}
%Now using (\ref{rep23}),(\ref{rep13}), and (\ref{specdecomp1}), it can be shown that 
\begin{equation}\label{form1}
\alpha_3 = \frac{(1+p-\lambda_3)(q-\lambda_3)}{(\lambda_3-\lambda_1)(\lambda_3-\lambda_2)}\alpha.
\end{equation}

It is possible to find $\alpha_3$ from $\alpha$ and $\lambda_3$ only. From (\ref{EVcubic}) we know that $R(\lambda_i)=\lambda$ for $i\in \{1,2,3\}$, which means that
\begin{equation}\label{zeqn}
(z-\lambda_1)(z-\lambda_2)(z-\lambda_3) = pqR(z) - \lambda.
\end{equation}
By differentiating both sides of (\ref{zeqn}) with respect to $z$, and then evaluating at $z=\lambda_3$, we obtain
\begin{equation}
(\lambda_3-\lambda_1)(\lambda_3-\lambda_2) = pqR'(\lambda_3) 
=3\lambda_3^2-6\lambda_3+2+pq.
\end{equation}
This allows us to replace the denominator in the RHS of (\ref{form1}), which leads to
\begin{equation}\label{decompform}
\alpha_3 = \frac{(1+p-\lambda_3)(q-\lambda_3)}{3\lambda_3^2-6\lambda_3+2+pq}\alpha.
\end{equation}
This proves the induction from $\alpha_{n,*}$ to $\alpha_{n+1,*}$ in the case where the eigenvalue $\lambda \notin \{0,2\}$.

If $\lambda \in \{0,2\}$, then $f$ has 2 eigenfunction extensions to the next level. The matching condition stated in (\ref{wsplit}) remains the same, but degenerates to a linear system of 2 equations with 2 unknowns. We omit the details.

%
%Thus we have proved the following theorem:
%\begin{theorem}
%
%Let $\{\lambda_{n, k}\}$ be a complete set of eigenvalues of $\Delta_n : \ell(V_n) \to \ell(V_n)$ with associated eigenfunctions $\{f_{n,k}\}$.  Suppose that $\{\alpha_{n,k}\} \subset \mathbb{R}$ such that $\delta_0^{(n)}(x) = \sum_k \alpha_{n,k}f_{n,k}(x)$.  
%%Let $R(z)$ be the cubic polynomial in (\ref{EVcubic}).  
%For each $\lambda_{n+1,k} \in \sigma(\Delta_{n+1})$, determine $\alpha_{n+1,k}$ as in (\ref{decompform}) with d being the weight associated with R$(\lambda_{n,k})$.  Then $\delta_0(x) = \sum_k \alpha_{n+1,k}f_{n+1,k}(x)$, where $f_{n+1,k}$ is the associated eigenfunction to $\lambda_{n+1,k}$.  Also each $\alpha_{*,*}$ is positive.
%\end{theorem}

As a simple corollary, we now show that the weights $\alpha_{*,*}$ are all nonnegative in our convention. Recall that $\alpha_{0,1} = \alpha_{0,2} = \frac{1}{2}$. By the structure of the cubic polynomial $R(z)$ (see Figure \ref{figCD}), $\lambda_1 \in [0, \min(p,q)]$, $\lambda_2 \in [\max(p,q),\min(1+p,1+q)]$, and $\lambda_3 \in [\max(1+p,1+q),2]$. So if $\alpha$ is nonnegative, it is direct to verify using (\ref{rep13}) through (\ref{form1}) that $\alpha_1$, $\alpha_2$, and $\alpha_3$ are all nonnegative. By induction we deduce that all weights $\alpha_{*,*}$ are nonnegative.
\end{proof}

Using the aforementioned result, we now define the ``approximate delta functions.''  Based on Lemma \ref{prop17}, and the fact that the lowest branch of $R^{-1}(z)$ is increasing, we can deduce that the lowest $|V_n|$ eigenvalues of $\Delta_{n+1}$ are determined recursively by
\begin{equation}\label{eigvrec}
\lambda_{n+1,k} = (\text{The lowest branch of~}R^{-1})(\lambda_{n,k})\quad \text{for}~0\leq k \leq 3^n.
\end{equation}
Given the level-$n$ delta function $\delta_0^{(n)}$, we define its approximation at level $n_0 < n$ by
\begin{equation}
\delta_0^{(n_0,n)}(x) := \sum_{k=0}^{3^{n_0}} \alpha_{n_0,k} f_{n,k}(x)\quad \text{for}~x\in V_n.
\end{equation}
In other words, we consider a truncated series of the spectral representation at level $n_0$, fixing the coefficients $\alpha_{n_0,k}$, but taking the eigenfunctions $f_{n,k}$ to level $n$.
 
%Let $\alpha_k = \alpha_{{n_0},k}$ and $\sum \limits_0^{|V_{n_0}|} \alpha_k f_{n_0,k}$ be the spectral decomposition of the delta function restricted to $V_{n_0}$  found above.  For each k, let $f_{n_0+1, k}$ be the eigenfunction extension of $f_{n_0, k}$ corresponding to $\lambda_{n_0+1,k} = R^{-1}(\lambda_{n_0,k})$.  We then define the approximate delta function as follows 
%$$\delta_0^{(n_0+1)}(x) \approx \delta_0^{(n_0, n_0+1)}(x) = \sum \limits_0^{|V_{n_0}|} \alpha_k f_{n_0+1,k}(x),\hspace{.25in} \forall x \in V_{n_0+1}$$
%Repeating this process to approximate the delta function on $V_n$ for $n > n_0$ yields the following definition., where $n_0$ is the level in which we fixed the coefficients.
%\begin{equation}\label{deltaapprox}
%\delta_0^{(n_0,n)}(x) = \sum \limits_{k=0}^{|V_{n_0}|}\alpha_kf_{n,k}(x), \hspace{.25in} \forall x \in V_n
%\end{equation}

%\input{sectioneigenestimates.tex} % joe,ulysses
\section{Estimates of eigenvalues and
eigenfunctions }\label{section5estimates}
In this section, we use the spectral decimation to derive finer estimates of the eigenvalues and the eigenfunctions, which will be used in Section \ref{sectionwaveequation}.
Of particular importance is the constant $\C := R'(0) = \frac{2+pq}{pq}$, the renormalization factor for the eigenvalues $\{\lambda_{n,k}\}_n$. Its significance derives from the following fact.

\begin{proposition}
\label{prop:renormeigv}
For each $k \in \mathbb{N}\cup\{0\}$, the limit $\lim_{n\to\infty} [R'(0)]^n \lambda_{n,k}$ exists.
\end{proposition}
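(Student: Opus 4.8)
The plan is to compare the recursion $\lambda_{n,k}=R(\lambda_{n+1,k})$ with the linearization $R(z)\approx R'(0)\,z$ near $z=0$. Since $\lambda_{n,k}\to 0$, the renormalized quantities $a_n:=[R'(0)]^n\lambda_{n,k}$ should have consecutive ratios of the form $1+O(\lambda_{n+1,k})$, and because the $\lambda_{n,k}$ decay geometrically, the telescoping product $\prod_n (a_{n+1}/a_n)$ converges; that is exactly the assertion. Concretely, I will show that $(a_n)$ is monotone and bounded.

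First I would dispose of the degenerate case: if $\lambda_{n_0,k}=0$ (which happens precisely for $k=0$), the lowest branch of $R^{-1}$ fixes $0$, so $\lambda_{n,k}\equiv 0$ and the limit is $0$. For the remaining $k$ I would record three facts, each valid once $n$ is large enough that $k\le 3^n$ — and hence, after discarding finitely many terms, for all $n\ge n_0$: (i) the recursion $\lambda_{n,k}=R(\lambda_{n+1,k})$ with $\lambda_{n+1,k}\in(0,\min(p,q)]$ the lowest-branch preimage, by (\ref{eigvrec}); (ii) the geometric bound $\lambda_{n_0+i,k}\le(\min(p,q)/2)^i\lambda_{n_0,k}$, already established in the proof of Lemma~\ref{bddlemma} [equation~(\ref{evineq})], which is summable since $\min(p,q)\le\tfrac12$; and (iii) $\lambda_{n,k}>0$ for every $n\ge n_0$, because the only $R$-preimages of $0$ are $0,1+p,1+q$ and the last two are never the lowest-branch choice. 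Then set $a_n:=[R'(0)]^n\lambda_{n,k}>0$.

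Next, using $R'(0)=(2+pq)/(pq)$ and the factorization $R(z)=z\,(z^2-3z+2+pq)/(pq)$, I would compute, for $z=\lambda_{n+1,k}$,
\[
\frac{a_{n+1}}{a_n}=R'(0)\,\frac{\lambda_{n+1,k}}{\lambda_{n,k}}=\frac{R'(0)\,\lambda_{n+1,k}}{R(\lambda_{n+1,k})}=\Bigl(1-\frac{3\lambda_{n+1,k}-\lambda_{n+1,k}^2}{2+pq}\Bigr)^{-1}.
\]
Since $0<\lambda_{n+1,k}\le\min(p,q)<1$, the subtracted term is strictly positive, so $a_{n+1}>a_n$ and $(a_n)$ is strictly increasing. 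For the upper bound, pick $N\ge n_0$ with $3\lambda_{n,k}\le\tfrac12(2+pq)$ for all $n\ge N$ (possible by (ii)); then the displayed denominator is $\ge\tfrac12$, whence $a_{n+1}/a_n\le 1+c\,\lambda_{n+1,k}$ with $c:=6/(2+pq)$, and therefore $a_n\le a_N\prod_{j\ge N}(1+c\,\lambda_{j+1,k})\le a_N\exp\bigl(c\sum_{j>N}\lambda_{j,k}\bigr)<\infty$, the series converging by (ii). A bounded increasing sequence converges, giving $\lim_{n\to\infty}[R'(0)]^n\lambda_{n,k}$.

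The computations are routine (the explicit ratio, the elementary inequality $\tfrac1{1-x}\le 1+2x$ for $x\le\tfrac12$, summing a geometric series). The only point that needs genuine care is the bookkeeping in the second paragraph: confirming that for a fixed $k$ the index-$k$ eigenvalue is indeed produced by the lowest branch of $R^{-1}$ for all large $n$, so that both the recursion and the geometric estimate of Lemma~\ref{bddlemma} are available, and that $\lambda_{n,k}$ never vanishes (so the ratios $a_{n+1}/a_n$ make sense). I expect that to be the only mild obstacle — there is no real analytic difficulty.
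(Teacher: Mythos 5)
Your proof is correct, but it proceeds by a genuinely different route than the paper. The paper's argument is a two-line appeal to complex dynamics: the lowest branch $\varphi$ of $R^{-1}$ has an attracting fixed point at $0$ with multiplier $\varphi'(0)=[R'(0)]^{-1}\in(0,1)$, so Koenigs' linearization theorem (\cite[\S 8]{Milnor}) gives locally uniform convergence of $[\varphi'(0)]^{-n}\varphi^n(z)$, and then, exactly as in your bookkeeping paragraph, one uses (\ref{eigvrec}) to find $n_0(k)$ with $\lambda_{n+1,k}=\varphi(\lambda_{n,k})$ for $n\ge n_0$. You instead avoid any external theorem: from the explicit factorization $R(z)=z(z^2-3z+2+pq)/(pq)$ you get the exact ratio
\begin{equation*}
\frac{a_{n+1}}{a_n}=\Bigl(1-\frac{3\lambda_{n+1,k}-\lambda_{n+1,k}^2}{2+pq}\Bigr)^{-1},
\end{equation*}
which shows $a_n=[R'(0)]^n\lambda_{n,k}$ is increasing, and the geometric decay (\ref{evineq}) from Lemma~\ref{bddlemma} makes the telescoping product bounded; your identification of the only delicate point (that for fixed $k$ the recursion eventually follows the lowest branch and never hits $0$, $1\pm p$) matches the paper's use of $n_0(k)$. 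The trade-off: the paper's Koenigs argument is shorter, gives a holomorphic, locally uniform limit, and fits the paper's complex-dynamics theme, but is a black box; your elementary monotone-and-bounded argument is self-contained and quantitative -- in fact the monotonicity $a_n\uparrow\lambda_k$ and the bound $a_n\le a_N\exp\bigl(c\sum_j\lambda_{j,k}\bigr)$ are close in spirit to the two-sided estimates the paper later derives separately in Theorem~\ref{thm:eigv} by comparing $\varphi$ with quadratic functions, so your method essentially previews that refinement.
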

\begin{proof}
Let $\varphi$ be the lowest branch of $R^{-1}$, which we regard as a function on $\mathbb{C}$. Via a power series expansion, we see that $\varphi(z)$ has an attracting fixed point at $z=0$, with $\varphi(0)=0$ and $\varphi'(0)= [R'(0)]^{-1}= \frac{pq}{2+pq}<1$. By Koenigs' theorem (see for example \cite[\S 8]{Milnor}), the renormalized iterates $\{z\mapsto [\varphi'(0)]^{-n} \varphi^n(z)\}_n$ converge uniformly on compact subsets of a local neighborhood of $0$.

Now given a fixed $k$, the recurrence relation (\ref{eigvrec}) implies that there exists $n_0=n_0(k)$ such that $\lambda_{n+1,k}= \varphi(\lambda_{n,k})$ for all $n\geq n_0$. Combine this with the foregoing result and we conclude that the limit
\begin{equation}
\lim_{n\to\infty} R'(0)^n \lambda_{n,k}=[\varphi'(0)]^{-n_0} \lim_{n\to\infty} [\varphi'(0)]^{-n+n_0}\varphi^{n-n_0}(\lambda_{n_0,k})
\end{equation}
exists.
\end{proof}

In what follows we denote $\lambda_k := \lim_{n\to\infty} \C^n \lambda_{n,k}$. Our next result gives an upper and a lower bound on $\lambda_k$.
\begin{theorem}
\label{thm:eigv}
Fix $p \in (0,\frac{1}{2})$, and let $k$ and $n_0$ be as in the proof of Proposition \ref{prop:renormeigv}.
%Let $R(z)  = \frac{z}{pq}(z^2-3z+2+pq)$, and we consider $R$ restricted to $z \in (0,p)$.  Let $\varphi(z) = R^{-1}(z)$, and $\lambda_{n,k}$ is the $k$th eigenvalue on $\Delta_n$. 
%Assume $\lim_{n \to \infty} \C^n \lambda_{n,k} \to \lambda_k$. Finally fix $n_0$, the level of the graph approximation, and fix $k$, the desired eigenvalue subscript.  Then
Then
\begin{equation}
\label{eigvest}
 \lambda_{n_0,k} \left(1+\frac{3(pq)}{(2+pq)^2}\lambda_{n_0,k}\right)\le \frac{\lambda_k}{\C^{n_0}} \le \lambda_{n_0,k} \exp\left(\lambda_{n_0,k}\frac{p(2+q)}{2q(2-p)}\right).
\end{equation} 
\end{theorem}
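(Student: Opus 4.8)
The plan is to track the product that expresses $\lambda_k/\C^{n_0}$ as an infinite product of renormalization ratios, and then bound each factor from above and below. Recall from the proof of Proposition~\ref{prop:renormeigv} that for $n \geq n_0$ we have $\lambda_{n+1,k} = \varphi(\lambda_{n,k})$ where $\varphi$ is the lowest branch of $R^{-1}$, and $\varphi'(0) = \C^{-1} = \frac{pq}{2+pq}$. Writing $\mu_j := \lambda_{n_0+j,k}$, the quantity $\lambda_k/\C^{n_0}$ equals $\lim_{i\to\infty}\C^i \mu_i = \mu_0 \prod_{j=0}^{\infty} \C\,\frac{\mu_{j+1}}{\mu_j}$. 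Since $\mu_{j+1} = \varphi(\mu_j)$, each factor is $\C\,\varphi(\mu_j)/\mu_j$, which tends to $\C\varphi'(0)=1$ as $j\to\infty$ because $\mu_j\downarrow 0$. So the task reduces to: (i) obtaining two-sided bounds on the function $t \mapsto \C\varphi(t)/t$ for $t$ in the relevant range $(0,\mu_0]$, of the form $1 + c_{\mathrm{low}} t \leq \C\varphi(t)/t \leq 1 + c_{\mathrm{high}} t$ (or $\leq \exp(c_{\mathrm{high}} t)$ directly); and (ii) summing the geometric-type decay $\mu_j \leq (\min(p,q)/2)^j \mu_0$ from \eqref{evineq}.

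For the concrete bounds I would work with $R$ rather than $\varphi$. Starting from $R(z) = \C z - \frac{3z^2 - z^3}{pq}\cdot\frac{1}{?}$—more precisely $R(z) = \frac{z^3 - 3z^2 + (2+pq)z}{pq}$, so $R(z) = \C z\left(1 - \frac{3z - z^2}{2+pq}\right)$. If $w = \varphi(t)$ is the lowest-branch preimage, then $t = R(w) = \C w\left(1 - \frac{3w-w^2}{2+pq}\right)$, hence
\begin{equation}
\frac{\C \varphi(t)}{t} = \frac{\C w}{R(w)} = \left(1 - \frac{3w - w^2}{2+pq}\right)^{-1},
\end{equation}
where $w = \varphi(t)$. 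Since $0 < w \leq \min(p,q) \leq \tfrac12$ on the lowest branch, one has $0 < 3w - w^2 < 1$, so this factor exceeds $1$, which immediately gives the qualitative monotonicity and lets me estimate $\log$ of the factor. Using $(1-x)^{-1} \geq 1+x$ gives the lower bound $\frac{\C\varphi(t)}{t} \geq 1 + \frac{3w-w^2}{2+pq} \geq 1 + \frac{2w}{2+pq}$ (using $w \leq \tfrac12$ so $w^2 \leq w$, hence $3w - w^2 \geq 2w$), and then I need to convert back from $w=\varphi(t)$ to $t$ via $w \geq t/\C$ (since $R(w) \leq \C w$ on the lowest branch, as $R$ lies below its tangent there — this is essentially the concavity noted in Lemma~\ref{bddlemma}). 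For the upper bound, $(1-x)^{-1} \leq e^{x/(1-x)}$, and with $x = \frac{3w-w^2}{2+pq}$ and the crude bound $w \leq t \cdot \frac{p(2+q)}{2q(2-p)}\cdot(\text{something})$ — here I'd need the sharper estimate $w = \varphi(t) \leq \frac{t}{\C}\cdot\frac{1}{1 - 3\min(p,q)/(2+pq)}$ or similar, and then optimize constants so that after summing $\sum_j \mu_j$ the exponent telescopes into the stated $\lambda_{n_0,k}\frac{p(2+q)}{2q(2-p)}$. For the lower bound, instead of taking the product's full lower bound I suspect the cleanest route is: keep only the $j=0$ factor's contribution to second order, i.e. $\prod_j(1+c\mu_j) \geq 1 + c\mu_0$ trivially, and note $\mu_0 = \lambda_{n_0,k}$ and the leading constant $\frac{3pq}{(2+pq)^2}$ matches $\frac{3}{2+pq}\cdot\frac{1}{\C} = \frac{3pq}{(2+pq)^2}$ coming from one factor $\frac{3w-w^2}{2+pq} \approx \frac{3}{2+pq}\cdot\frac{t}{\C}$.

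Assembling: for the lower bound, $\frac{\lambda_k}{\C^{n_0}} = \mu_0 \prod_{j\geq 0}\frac{\C\varphi(\mu_j)}{\mu_j} \geq \mu_0\left(1 + \frac{3\mu_0 - \mu_0^2}{2+pq}\cdot\frac{1}{\C(\mu_0)}\right) \cdot 1 \geq \mu_0\left(1 + \frac{3(pq)}{(2+pq)^2}\mu_0\right)$ after checking $\frac{3w - w^2}{2+pq} \geq \frac{3pq}{(2+pq)^2}\mu_0$ with $w = \varphi(\mu_0) \geq \mu_0 pq/(2+pq)$ and discarding the $-w^2$ term against slack (one should verify the inequality $3w \geq 3\mu_0 pq/(2+pq) + w^2$ holds, which follows since $w \geq \mu_0 pq/(2+pq)$ and $w$ is small). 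For the upper bound, $\log\frac{\lambda_k}{\C^{n_0} \mu_0} = \sum_{j\geq 0}\log\frac{\C\varphi(\mu_j)}{\mu_j} = \sum_j \log(1-x_j)^{-1} \leq \sum_j \frac{x_j}{1-x_j}$ with $x_j = \frac{3\varphi(\mu_j) - \varphi(\mu_j)^2}{2+pq} \leq \frac{3\varphi(\mu_j)}{2+pq}$, then bound $\varphi(\mu_j) \leq C' \mu_j$ and $\sum_j \mu_j \leq \mu_0 \sum_j (\min(p,q)/2)^j = \frac{\mu_0}{1 - \min(p,q)/2} = \frac{2\mu_0}{2-p}$ (using $p < \tfrac12$ so $\min(p,q) = p$), and finally match the product of constants to $\frac{p(2+q)}{2q(2-p)}$.

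The main obstacle will be getting the constants in the upper bound to come out \emph{exactly} as $\frac{p(2+q)}{2q(2-p)}$ rather than merely $O(\lambda_{n_0,k})$: this requires a careful, non-lossy bound on $\varphi(\mu_j)/\mu_j$ (equivalently on $R(w)/w = \C - \frac{3w-w^2}{pq}$, giving $\varphi(\mu_j) = \mu_j / (\C - \frac{3\varphi(\mu_j) - \varphi(\mu_j)^2}{pq})$, which must be bootstrapped), combined with the precise geometric sum $\frac{2}{2-p}$ and the factor $\frac{3}{2+pq}$ and some $\frac{1}{1-x_j}$ slack, all of which must multiply to $\frac{p(2+q)}{2q}$; tracking where the factor $q$ in the denominator and $(2+q)$ in the numerator enter (presumably from bounding $\frac{1}{1-x_j}$ using $w \leq \min(p,q) \le q$ and $x_j \leq \frac{3q}{2+pq} \leq$ something, plus the $pq$ vs $2+pq$ bookkeeping) is the delicate part. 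I would do this by writing everything in terms of a single small parameter, being generous only at the very last step, and checking the worst case $\mu_0$ small.
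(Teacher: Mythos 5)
Your reduction of the per-step ratio to the identity $\frac{R'(0)\varphi(t)}{t}=\bigl(1-\frac{3w-w^{2}}{2+pq}\bigr)^{-1}$ with $w=\varphi(t)$ is sound, and the lower bound can indeed be completed along your lines, though not for the reason you give: the inequality you flag, $\frac{3w-w^{2}}{2+pq}\ge\frac{3pq}{(2+pq)^{2}}\mu_{0}$, does \emph{not} follow from ``$w\ge\frac{pq}{2+pq}\mu_{0}$ and $w$ small'' (if $w$ were exactly $\frac{pq}{2+pq}\mu_{0}$ it would be false); however, substituting $\mu_{0}=R(w)$ and simplifying reduces it to $3w\le 7-pq$, which is trivially true on the lowest branch, so this gap is easily filled. (The paper gets the same bound by showing $\varphi$ dominates its second-order Taylor polynomial $a(z)=\frac{pq}{2+pq}z+\frac{3(pq)^{2}}{(2+pq)^{3}}z^{2}$, proved via $\varphi'''>0$.)

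The genuine gap is the upper bound. To reach the stated constant one needs the per-step estimate $\frac{R'(0)\varphi(t)}{t}\le 1+\frac{p(2+q)}{4q}\,t$ on $[0,2]$, equivalently $\varphi(t)\le h(t):=\frac{pq\,t}{2+pq}\bigl(1+\frac{p(2+q)}{4q}t\bigr)$, which the paper proves by comparing $\varphi$ with the quadratic fixed by $h(0)=0$, $h'(0)=\varphi'(0)$, $h(2)=p=\varphi(2)$; only then do the geometric decay $\lambda_{n+1,k}\le\frac{p}{2}\lambda_{n,k}$, the sum $\frac{2}{2-p}$, and $1+x\le e^{x}$ give exactly $\exp\bigl(\lambda_{n_0,k}\frac{p(2+q)}{2q(2-p)}\bigr)$. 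In your notation this per-step estimate reads $\frac{x}{1-x}\le\frac{p(2+q)}{4q}R(w)$ with $x=\frac{3w-w^{2}}{2+pq}$, i.e.\ $4q^{2}(3-w)\le(2+q)\bigl(2+pq-3w+w^{2}\bigr)^{2}$ for $w\in[0,p]$, and this inequality is \emph{tight} at $w=p$ (both sides equal $4q^{2}(2+q)$). Consequently none of the lossy intermediate steps you propose (bounding $w$ by a constant times $t/R'(0)$, $(1-x)^{-1}\le e^{x/(1-x)}$ with crude control of $x$, ``optimizing constants at the last step'') can yield the exact constant; the sharp polynomial inequality itself (or an argument like the paper's endpoint-interpolation comparison with $\varphi$) must be proved. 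Since you explicitly defer precisely this step as ``the main obstacle'' and ``the delicate part,'' the upper half of \eqref{eigvest} is not established by your proposal.
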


\begin{proof}
As in the previous proof, let $\varphi$ be the lowest branch of $R^{-1}$. The lower bound on $\lambda_k$ will come from the Taylor approximation to $\varphi$, while the upper bound will come from a quadratic function which is at least as large as $\varphi$.

\emph{Lower bound.} We compute the Taylor series expansion of $\varphi$ about $0$ to 2nd order in $z$. It is
\begin{equation}\label{e-az}
a(z) = \frac{pq}{2+pq}z +\frac{3(pq)^2}{(2+pq)^3} z^2.
\end{equation}
This is explained by the fact that 
the first  derivative of the inverse function $\varphi(z)$ is given by $\varphi'=1/R'$, 
and its second derivative  is given by $\varphi''=-R''/(R')^3$. 
Computing these derivatives at zero gives 
the quadratic function \eqref{e-az}. 

Furthermore   we claim that $a(z) < \varphi(z)$ for all $z \in (0,2)$. It is enough to check that $\varphi'''(z) > 0$.  Here we use the identity 
\begin{equation}
\frac{d^3y}{dx^3} =-\frac{d^3x}{dy^3}\left(\frac{dy}{dx}\right)^4+3\left(\frac{d^2x}{dy^2}\right)^2\left(\frac{dy}{dx}\right)^5, 
\end{equation}
which in our context reads
\begin{equation}\label{varphi}
\varphi'''(z) = -R'''(\varphi(z)) [\varphi'(z)]^4 + 3[R''(\varphi(z))]^2 [\varphi'(z)]^5.
 \end{equation}
Since $\varphi'(z)>0$, we can factor out $[\varphi'(z)]^5$ from (\ref{varphi}), and use the identity $\varphi'(z) = [R'(\varphi(z))]^{-1}$ so that we reduce the original sign question to checking the sign of
\begin{eqnarray}
&& -R'''(\varphi(z)) R'(\varphi(z)) + 3 [R''(\varphi(z))]^2  \\
&=& \left(\frac{1}{pq}\right)^4 \left[-6\cdot(3[\varphi(z)]^2-6\varphi(z)+(2+pq))+3 (6\varphi(z)-6)^2\right] \\
&=& \left(\frac{1}{pq}\right)^4 \cdot 6 \cdot (15 [\varphi(z)]^2-30 \varphi(z)+16-pq)\\
&=& \left(\frac{1}{pq}\right)^4 \cdot 6 \cdot \left[15 (\varphi(z)-1)^2 +(1-pq)\right],
\end{eqnarray}
which is always positive. This shows that $\varphi'''(z) >0$, and thus $a(z) < \varphi(z)$ for $z\in (0,2)$. Combined with the fact that the functions $z\mapsto a(z)$ and $z\mapsto \varphi(z)$ are both monotone increasing on $[0,2)$, this implies that for each $n\in \mathbb{N}$, $a^n(z) \le \varphi^n(z)$ for $z\in [0,2)$.  

Fix $k\in \mathbb{N}\cup \{0\}$ and $n_0=n_0(k)$ as in the proof of Proposition \ref{prop:renormeigv}. Put $z_0=\lambda_{n_0,k}$, and define the sequence of numbers $\{z_n\}_n$ inductively by $z_{n+1} = \varphi(z_n)$. Then, 
by the inequalities above,  
we have
\begin{align}
\varphi(z_n) \ge a(z_n) &= \frac{pq}{2+pq}z_n +\frac{3(pq)^2}{(2+pq)^3} z_n^2 \\
&= z_n\left( \frac{pq}{2+pq} \right) \left(1+\frac{3(pq)}{(2+pq)^2}z_n\right) \\
&\ge a(z_{n-1})\left( \frac{pq}{2+pq} \right) \left(1+\frac{3(pq)}{(2+pq)^2}a(z_{n-1})\right).
\end{align}
Iterating this process we arrive at the estimate
\begin{equation}
\varphi(z_n) \geq z_{n_0}\left(\frac{pq}{2+pq}\right)^{n-n_0} \prod^n_{j={n_0}}\left(1+\frac{3(pq)}{(2+pq)^2}z_j\right).
\end{equation}
Noting that $1+\frac{3(pq)}{(2+pq)^2}z > 1$, we obtain a slightly crude but still efficient estimate
\begin{equation}
\varphi(z_n) \geq z_{n_0}\left(\frac{pq}{2+pq}\right)^{n-n_0} \left(1+\frac{3(pq)}{(2+pq)^2}z_{n_0}\right),
\end{equation}
which is the claimed lower bound in (\ref{eigvest}).

\emph{Upper bound.} To bound $\varphi(z)$ from above, we construct a quadratic function $h(z)$ such that $h(0)=0$, $h(2)=p$ and $h'(0)=\varphi'(0)$. A simple calculation shows that
\begin{equation}
h(z) = \frac{pqz}{2+pq}\left(1+\frac{p(2+q)}{4q}z\right),
\end{equation}
and $h(z) \geq \varphi(z)$ for $z\in [0,2]$. Using this, along with the fact that $z_{n+1} = \varphi(z_n) \le \frac{pq}{2+pq}z_n\left(1+\frac{p(2+q)}{4q} z_n\right)$, and $z_{n+1} \le \frac{p}{2}z_n$, we get the following estimate:
\begin{eqnarray}
\varphi(z_n) &\le& z_{n_0} \left(\frac{pq}{2+pq}\right)^{n-n_0}\prod^{n}_{j={n_0}}\left(1+\frac{p(2+q)}{4q} z_j\right) \\
&\le& \lambda_{n_0,k} \left(\frac{pq}{2+pq}\right)^{n-n_0}\prod^{n}_{j={n_0}}\left(1+\frac{p(2+q)}{4q} \left(\frac{p}{2}\right)^{j-n_0} \lambda_{n_0,k}\right).
\end{eqnarray} 
Therefore, using  the inequality $1+x\leq e^x$, 
\begin{eqnarray}
\nonumber \lambda_k &=& \lim_{n \to \infty} \left(\frac{2+pq}{pq}\right)^{n}\varphi(z_n) \\
\nonumber &\le& \lim_{n \to \infty} \left(\frac{2+pq}{pq}\right)^{n}\left(\frac{pq}{2+pq}\right)^{n-n_0}\lambda_{n_0,k}\prod^{n}_{j={n_0}}\left(1+\frac{p(2+q)}{4q} \left(\frac{p}{2}\right)^{j-n_0} \lambda_{n_0,k}\right) \\
\nonumber &\le& \left(\frac{2+pq}{pq}\right)^{n_0}\lambda_{n_0,k} \exp\left(\frac{p(2+q)}{4q}\frac{1}{1-\frac{p}{2}} \lambda_{n_0,k} \right) \\
&=& \left(\frac{2+pq}{pq}\right)^{n_0}\lambda_{n_0,k} \exp\left(\frac{p(2+q)}{2q}\frac{1}{2-p}\lambda_{n_0,k}\right).
\end{eqnarray}
This gives the claimed upper bound in (\ref{eigvest}).
\end{proof}

For a function $f: V_n \to\mathbb{R}$, we denote its sup norm by $\|f\|_{n,\infty} = \sup\{|f(x)|: x\in V_n\}$. Likewise, the sup norm of $h: I\to\mathbb{R}$ is denoted by $\|h\|_\infty = \sup\{|h(x)| : x\in I\}$. Our next result is an estimate on the sup norms of the eigenfunctions of $\Delta_n$.

\begin{lemma}
\label{lem:eigf}
Fix $p\in (0,\frac{1}{2})$. Let $f_{n,k}$ be the eigenfunction corresponding to the $(k+1)$th lowest eigenvalue $\lambda_{n,k}$ of $\Delta_n$. Then for every $m>n \geq n_0$,
\begin{align}
\label{eigf1}
\| f_{m,k} \|_{m,\infty} \le  \|f_{n,k}\|_{n,\infty} \prod^m_{j= n+1}\left(1+\left(\frac{p}{2}\right)^{j-n}\frac{\lambda_{n,k}}{q-\lambda_{n+1,k}}\right).
\end{align}
In particular, if $f_k = \lim_{i\to\infty} f_{n_0+i,k}$ per Theorem \ref{thm:eigfconti}, then
\begin{equation}
\|f_k\|_\infty \leq  \|f_{n,k}\|_{n,\infty} \exp\left(\frac{\lambda_{n,k}}{q-\lambda_{n+1,k}}\frac{p}{2-p} \right).
\label{eigf2}
\end{equation}

\end{lemma}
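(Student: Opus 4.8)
The plan is to upgrade the sup-norm iteration already performed in the proof of Lemma \ref{bddlemma}, this time keeping track of the geometric decay of the eigenvalues instead of discarding it, and then to pass to the limit $m\to\infty$.

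\emph{Step 1: the one-step estimate.} Fix $n\ge n_0$ and $j\ge n+1$, and write $\lambda=\lambda_{j,k}$. Since $p\in(0,\tfrac12)$, and since by (\ref{evineq}) together with the structure of the lowest branch of $R^{-1}$ shown in Figure \ref{figCD} every eigenvalue $\lambda_{j,k}$ with $j\ge n_0+1$ lies in $[0,p]$, the three quantities $1-\lambda$, $1+p-\lambda$, and $1-p-\lambda=q-\lambda$ are all strictly positive. For each new vertex $y\in V_j\setminus V_{j-1}$ the extension formula (\ref{ext0}) (or (\ref{ext1})) expresses $f_{j,k}(y)$ as a ratio whose numerator is $q(1-\lambda)f_{j-1,k}(x_0)+pq\,f_{j-1,k}(x_1)$ and whose denominator is $(1-p-\lambda)(1+p-\lambda)$, for suitable $x_0,x_1\in V_{j-1}$. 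Applying the triangle inequality, bounding both $|f_{j-1,k}(x_i)|$ by $\|f_{j-1,k}\|_{j-1,\infty}$, and using $q(1-\lambda)+pq=q(1+p-\lambda)$ gives $|f_{j,k}(y)|\le\tfrac{q}{q-\lambda_{j,k}}\|f_{j-1,k}\|_{j-1,\infty}$; since $f_{j,k}$ agrees with $f_{j-1,k}$ on $V_{j-1}$ and the prefactor is $\ge1$, we obtain $\|f_{j,k}\|_{j,\infty}\le\tfrac{q}{q-\lambda_{j,k}}\|f_{j-1,k}\|_{j-1,\infty}$ (exactly the estimate used in Lemma \ref{bddlemma}).

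\emph{Step 2: inserting the eigenvalue bounds and iterating.} For $j\ge n+1$ the eigenvalue $\lambda_{j,k}$ is obtained from $\lambda_{n+1,k}$ by iterating the increasing, contractive lowest branch of $R^{-1}$, so $\lambda_{j,k}\le\lambda_{n+1,k}$ and hence $q-\lambda_{j,k}\ge q-\lambda_{n+1,k}>0$; moreover (\ref{evineq}) applied with base level $n$ (where $\min(p,q)=p$) gives $\lambda_{j,k}\le(p/2)^{\,j-n}\lambda_{n,k}$. Thus $\tfrac{q}{q-\lambda_{j,k}}=1+\tfrac{\lambda_{j,k}}{q-\lambda_{j,k}}\le 1+(p/2)^{\,j-n}\tfrac{\lambda_{n,k}}{q-\lambda_{n+1,k}}$, and multiplying the Step-1 inequalities for $j=n+1,\dots,m$ yields (\ref{eigf1}).

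\emph{Step 3: the limit.} Since $f_{m,k}$ restricts to $f_{n,k}$ on $V_n$ for every $m\ge n$, the numbers $\|f_{m,k}\|_{m,\infty}$ are nondecreasing in $m$, and $f_k$ agrees with $f_{n,k}$ on each $V_n$; as $\bigcup_n V_n$ is dense in $I$ and $f_k$ is continuous (Theorem \ref{thm:eigfconti}), this gives $\|f_k\|_\infty=\lim_{m\to\infty}\|f_{m,k}\|_{m,\infty}$. Letting $m\to\infty$ in (\ref{eigf1}), applying $1+x\le e^x$ factorwise, and summing $\sum_{j\ge n+1}(p/2)^{\,j-n}=\tfrac{p}{2-p}$ yields (\ref{eigf2}). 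The only point requiring genuine care is the strict positivity of all three denominators in (\ref{ext0})--(\ref{ext1}) along the lowest branch in Step 1, which is what makes the triangle inequality collapse to the clean telescoping factor $q/(q-\lambda_{j,k})$; everything else is bookkeeping on top of estimates already present in the proofs of Lemma \ref{bddlemma} and Theorem \ref{thm:eigv}.
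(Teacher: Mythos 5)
Your proof is correct and follows essentially the same route as the paper: a one-step sup-norm bound of the form $q/(q-\lambda_{j,k})$ coming from the extension formulas (\ref{ext0})--(\ref{ext1}), the eigenvalue decay (\ref{evineq}) to turn it into the product in (\ref{eigf1}), and $1+x\le e^x$ plus the geometric series $\sum_{j\ge n+1}(p/2)^{j-n}=\tfrac{p}{2-p}$ to obtain (\ref{eigf2}). The only differences are cosmetic: you get the factor $q/(q-\lambda_{j,k})$ directly via $q(1-\lambda)+pq=q(1+p-\lambda)$ where the paper estimates $|F(A,B,z)-F(A,B,0)|$ for $F(A,B,z)=\tfrac{q(1-z)A+pqB}{(q-z)(1+p-z)}$, and you justify $\|f_k\|_\infty=\lim_{m\to\infty}\|f_{m,k}\|_{m,\infty}$ by monotonicity, density of $\bigcup_n V_n$, and continuity of $f_k$ rather than by the uniform convergence statement of Theorem \ref{thm:eigfconti}.
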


\begin{proof}
Let us introduce the function
\begin{equation}
F(A,B,z) = \frac{q(1-z)A+pqB}{(q-z)(1+p-z)},
\end{equation}
which is derived from the eigenfunction extension algorithm (\ref{ext0}) and (\ref{ext1}). Note that $z\in [0,p)$ because the extension uses the lowest branch of $R^{-1}$ starting from level $n_0$. First we would like to control the linear growth of $z\mapsto F(A,B,z)-F(A,B,0)$:
\begin{align}
& |F(A,B,z) - F(A,B,0)| \nonumber \\ \le& |A| \left|\frac{q(1-z)}{(q-z)(1+p-z)} - \frac{q}{1-p^2}\right| + |B|\left|\frac{pq}{(q-z)(1+p-z)} - \frac{pq}{1-p^2} \right| \nonumber \\
\le& \max(|A|,|B|) q \left( \left| \frac{(1-z)(1-p^2) - 1+2z+p^2-z^2}{(q-z)(1+p-z)(1-p^2)} \right| + p \left| \frac{2z-z^2}{(q-z)(1+p-z)(1-p^2)} \right|\right) \nonumber\\
 =& \max(|A|,|B|)q|z| \left( \left| \frac{1+p^2 - z}{(q-z)(1+p-z)(1-p^2)}\right| + p\left|\frac{2-z}{(q-z)(1+p-z)(1-p^2)}\right| \right). \label{Fz}
\end{align}
Since $0 < z < p < \frac{1}{2}$, the absolute value terms in the RHS of (\ref{Fz}) are positive, so we can drop the absolute value signs and add the two terms in the bracket to get
\begin{align}
& |F(A,B,z) - F(A,B,0)| \nonumber \\ \le& \max(|A|,|B|)q|z| \left( \frac{1+2p+p^2 - z - pz}{(q-z)(1+p-z)(1-p^2)} \right) \nonumber\\
=& \max(|A|,|B|)q|z| (1+p) \left( \frac{1+p-z}{(q-z)(1+p-z)(1-p^2)} \right) \nonumber\\
=& \max(|A|,|B|)\frac{|z|}{(q-z)}, \label{Fdifference}
\end{align}
which implies that
\begin{equation}
|F(A,B,z)| \le \max(|A|,|B|)|z|  \frac{1}{(q-z)} + |F(A,B,0)|. \label{Fmax}
\end{equation}

%By spectral decimation we may represent the eigenfunctions of $\Delta_{n+1}$ in terms of the eigenfunctions of $\Delta_n$.  Write $x_0$, and $x_1$ for the left and right $x$-coordinates of the $n$th level approximation of the graph, and write $y_0$ and $y_1$ for the $(n+1)$st level $x$-coordinates.
%\begin{eqnarray}
%f_{n+1,k}(y_0) &=& \frac{q(1-z)f_{n,k}(x_0)+pqf_{n,k}(x_1)}{(1-p-z)(1+p-z)} \\
%f_{n+1,k}(y_1) &=& \frac{q(1-z)f_{n,k}(x_1)+pqf_{n,k}(x_0)}{(1-p-z)(1+p-z)}
%\end{eqnarray}
%Combining this with \ref{Fdifference} and using the triangle inequality gives \begin{align}
%\sup |f_{m,k} - f_{n,k}| &= \sup |f_{m,k} - f_{m-1,k} + f_{m-1,k} - f_{m-2,k} + f_{m-2,k} + \cdots + f_{n+1,k} - f_{n,k}| \nonumber\\
%&\le \sup |f_{m,k} - f_{m-1,k}| + |f_{m-1,k} - f_{m-2,k}| +  \cdots + |f_{n+1,k} - f_{n,k}| \nonumber\\
%&\le \sup |f_m| \frac{1}{(1-p-z_n)}\sum^m_{k=n} |z_k|. \label{fdifference} 
%\end{align}

We use (\ref{Fmax}) to estimate the sup norms of the eigenfunctions: for all $n\geq n_0$,
\begin{eqnarray}
\|f_{n+1,k}\|_{n+1,\infty} &\leq& \|f_{n,k}\|_{n,\infty} |F(1,1,z_{n+1})| \nonumber \\
&\leq& \|f_{n,k}\|_{n,\infty} \left(\frac{ |z_{n+1}|  }{q-z_{n+1}} + |F(1,1,0)| \right) \nonumber \\
&\leq& \|f_{n,k}\|_{n,\infty} \left(\frac{z_{n+1}}{q-z_{n+1}}  + 1\right), \label{fineq}
\end{eqnarray}
Iterating the inequality (\ref{fineq}) and using the fact that $z_{n+m} \le \left(\frac{p}{2}\right)^m z_n$ (\ref{evineq}) gives
\begin{align}
\|f_{m,k}\|_{m,\infty} &\le \|f_{n,k}\|_{n,\infty} \prod^m_{j= n+1}\left(1+\frac{z_j}{q-z_j}\right) \nonumber\\
&\le \|f_{n,k}\|_{n,\infty} \prod^m_{j= n+1}\left(1+\left(\frac{p}{2}\right)^{j-n}\frac{z_n}{q-z_{n+1}}\right) \nonumber\\
&= \|f_{n,k}\|_{n,\infty} \prod^m_{j= n+1}\left(1+\left(\frac{p}{2}\right)^{j-n}\frac{\lambda_{n,k}}{q-\lambda_{n+1,k}}\right) \label{supfm}
\end{align}
for all $m>n \geq n_0$. This shows (\ref{eigf1}).

Next, using the triangle inequality and taking the supremum, we have
\begin{equation}
\left|\sup_{x\in I} |f_k(x)| - \sup_{x\in I} |f_{m,k}(x)|\right| \leq \sup_{x\in I} |f_k(x)-f_{m,k}(x)|.
\label{fkbound}
\end{equation}
Recall from Theorem \ref{thm:eigfconti} that the limit $\lim_{i\to\infty} f_{n_0+i,k}$ is uniform on $I$. This along with the bound (\ref{fkbound}) implies that
\begin{equation}
\|f_k\|_\infty = \lim_{m\to\infty}\|f_{m,k}\|_{m,\infty}
\end{equation}
So by taking the limit $m\to\infty$ on both sides of (\ref{lem:eigf}), we arrive at the estimate
\begin{eqnarray}
\|f_k\|_\infty &\leq &\|f_{n,k}\|_{n,\infty} \lim_{m\to\infty} \prod^m_{j= n+1}\left(1+\left(\frac{p}{2}\right)^{j-n}\frac{\lambda_{n,k}}{q-\lambda_{n+1,k}}\right) \nonumber \\
&\leq& \|f_{n,k}\|_{n,\infty} \exp\left(\sum_{j=n+1}^\infty\left(\frac{p}{2}\right)^{j-n}\frac{\lambda_{n,k}}{q-\lambda_{n+1,k}} \right)\nonumber \\
&=& \|f_{n,k}\|_{n,\infty} \exp\left(\frac{\lambda_{n,k}}{q-\lambda_{n+1,k}}\frac{p}{2-p} \right),
\end{eqnarray}
where in the second line we used the inequality $1+x\leq e^x$. This proves (\ref{eigf2}).
\end{proof}

The following result provides a quantitative estimate of the convergence of $f_{n,k}$ to $f_k$ in sup norm. As in the previous section, we harmonically extend $f_{n,k}$ from $V_n$ to $I$, and abuse notation by calling the extension $f_{n,k}$ still. Then $\Delta_\mu f_{n,k}(x)=0$ for all $x\in I\setminus V_n$.

\begin{theorem}\label{tfkbound}  
Fix $p\in (0,\frac{1}{2})$.
%Let $F_w$ be the projection function from $V \to V_n$ where $w$ is a word of length $n$.   In what follows we will take the supremum over words of length $n$ and $x \in V_n$.  Then one has the bound
Then
\begin{align}
\| f_k-f_{n,k}\|_\infty \le  \C^{-n}\lambda_k \|f_{n,k}\|_{n,\infty} \,\|g\|_\infty \exp\left(\frac{\lambda_{n,k}}{q-\lambda_{n+1,k}}\frac{p}{2-p}\right), \label{ineq:fkinftynorm}
\end{align}
where $g: I\times I \to\mathbb{R}_+$ is the Green's function associated with $\Delta_\mu$, and
\begin{equation}
\|g\|_\infty := \sup_{(x,y)\in I\times I} \,g(x,y).
\end{equation}
\end{theorem}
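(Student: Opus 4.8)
The plan is to represent the error $u := f_k - f_{n,k}$, with $f_{n,k}$ taken as its piecewise-linear (harmonic) extension to $I$, as a superposition of Green's potentials, one supported on each $n$-cell, and then to estimate each potential by rescaling the global Green's function $g$. Two structural facts make this possible. First, $f_k$ and $f_{n,k}$ agree on all of $V_n$, not merely on $\partial I = V_0$: each eigenfunction extension in Theorem \ref{thm:eigext} preserves the values on the coarser level, so $f_k|_{V_n} = f_{n,k}$ and therefore $u$ vanishes on $V_n$. Second, $f_k$ is an honest eigenfunction of $\Delta_\mu$ with eigenvalue $\lambda_k$: since $f_k|_{V_n} = f_{n,k}$, we have $\C^n \Delta_n f_k = \C^n \lambda_{n,k} f_{n,k}$ on $V_n$, and by Proposition \ref{prop:renormeigv} and Theorem \ref{thm:eigfconti} the right-hand side converges uniformly to $\lambda_k f_k$; the limit is continuous, so $f_k$ lies in the domain of $\Delta_\mu$ (as defined in Proposition \ref{prop16}) and $\Delta_\mu f_k = \lambda_k f_k$. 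Together with $\Delta_\mu f_{n,k} = 0$ on $I\setminus V_n$ (noted just before the theorem), this shows that on the interior of each $n$-cell $C$ one has $\Delta_\mu u = \lambda_k f_k$, while $u = 0$ on $\partial C$.

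Now fix an $n$-cell $C = F_w(I)$ with $|w| = n$. Since $u$ is continuous on $\overline{C}$, vanishes on $\partial C$, and $\Delta_\mu u$ is continuous on $C$, the Green's representation --- a consequence of the Gauss--Green formula of Proposition \ref{prop16} --- gives, for every $\xi\in C$, the bound $|u(\xi)| \le \int_C g_C(\xi,\eta)\,|\Delta_\mu u(\eta)|\,\mu(d\eta)$, where $g_C\ge 0$ is the Dirichlet Green's function of the cell $C$ (the normalization constant of the integration-by-parts formula is absorbed into the definition of $g$). By the self-similarity of $\mathcal{E}$ and of $\mu$ in Proposition \ref{prop16} --- equivalently, by the scaling $\Delta_\mu(v\circ F_w) = \C^{-|w|}(\Delta_\mu v)\circ F_w$ of Proposition \ref{prop:Lapss} together with $\int_{F_w(I)}\phi\,d\mu = m_w\int_I(\phi\circ F_w)\,d\mu$ --- the cell Green's function is a rescaled copy of the global one: $g_C(F_w(x), F_w(y)) = r_w\, g(x,y)$. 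Using $|\Delta_\mu u| = \lambda_k|f_k| \le \lambda_k\|f_k\|_\infty$ on $C$, the change of variables $\eta = F_w(y)$, the identities $r_j m_j = \frac{pq}{2+pq} = \C^{-1}$ (hence $r_w m_w = \C^{-n}$) and $\mu(I) = 1$, we obtain for every $\xi\in C$:
\begin{align*}
|u(\xi)| &\le \lambda_k\|f_k\|_\infty \int_C g_C(\xi,\eta)\,\mu(d\eta) \\
&= \lambda_k\|f_k\|_\infty\; r_w m_w \int_I g\bigl(F_w^{-1}(\xi),\,y\bigr)\,\mu(dy) \\
&\le \C^{-n}\,\lambda_k\,\|f_k\|_\infty\,\|g\|_\infty.
\end{align*}
Since the $n$-cells cover $I$, this gives $\|f_k - f_{n,k}\|_\infty \le \C^{-n}\lambda_k\|f_k\|_\infty\|g\|_\infty$, and substituting the bound $\|f_k\|_\infty \le \|f_{n,k}\|_{n,\infty}\exp\bigl(\frac{\lambda_{n,k}}{q-\lambda_{n+1,k}}\frac{p}{2-p}\bigr)$ from Lemma \ref{lem:eigf}, inequality \eqref{eigf2}, yields exactly \eqref{ineq:fkinftynorm}.

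I expect the main effort to lie in the self-similar rescaling of the Green's function and the cell-by-cell assembly: one must check that the Dirichlet Green's function of the sub-cell $F_w(I)$ is precisely $r_w$ times the pullback of $g$ --- which is where the scaling relation of Proposition \ref{prop:Lapss} and the measure scaling enter --- and that the localized potentials on adjacent $n$-cells patch together to the globally defined $u$, which is exactly why it matters that $u$ vanishes on all of $V_n$ rather than only on $V_0$. The remaining ingredients --- the eigenvalue identity $\Delta_\mu f_k = \lambda_k f_k$, the crude pointwise bound $|\Delta_\mu u| \le \lambda_k\|f_k\|_\infty$, and the appeal to Lemma \ref{lem:eigf} --- are immediate. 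The one bookkeeping point to watch is the normalization constant in the Gauss--Green formula of Proposition \ref{prop16}, which must be kept consistent with whatever normalization defines $g$; it affects neither the factor $\C^{-n}$ nor the overall shape of the estimate.
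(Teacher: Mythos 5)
Your proposal is correct and follows essentially the same route as the paper: both localize on the $n$-cells, use that $f_k-f_{n,k}$ vanishes on $V_n$ while $f_{n,k}$ is harmonic there, extract the factor $\C^{-n}$ from the self-similar scaling of Proposition \ref{prop:Lapss} combined with a Green's function representation, and finish by substituting the bound \eqref{eigf2} from Lemma \ref{lem:eigf}. The only cosmetic difference is that the paper pulls the equation back via $F_w$ and applies the global Green's operator $G_\mu$, whereas you work directly with the cell's Dirichlet Green's function and its scaling $g_C(F_w x,F_w y)=r_w\,g(x,y)$; these are equivalent, since $r_w m_w=\C^{-n}$ reproduces the same constant.
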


\begin{proof}
Since $f_k$ and $f_{n,k}$ agree on $V_n$, it is enough to estimate their difference on $I\setminus V_n$. Based on the construction described in Section \ref{section2eigenvalues}, $I\setminus V_n$ is the disjoint union of $\{F_w\{(0,1)\}: |w|=n\}$, where $w=w_1 w_2\cdots w_n$ is a word of length $n$ with $w_i\in \{1,2,3\}$ for $1\leq i\leq n$, and $F_w:= F_{w_1} \circ F_{w_2}\circ \cdots \circ F_{w_n}$. So our task is to show that for every word $w$ of length $n$, the function $|(f_k-f_{n,k})\circ F_w|$ on $(0,1)$ has a uniform upper bound.

Our strategy is to exploit the self-similarity of the fractal Laplacian $\Delta_\mu$ (Proposition \ref{prop:Lapss}), as well as properties of the corresponding Green's function. We remind the reader that $G_\mu = \Delta_\mu^{-1}$ is the Green's operator associated to $\Delta_\mu$. It admits an integral kernel $g(\cdot,\cdot)$ called the Green's function, defined by
\begin{equation}
(G_\mu u)(y) = \int_I \,g(y,y')u(y') \,d\mu(y').
\label{eq:Green}
\end{equation}
The existence of $G_\mu$ and $g$ follows from the theory of Kigami \cite[\S3.5-\S3.6]{Ki}. In particular, $G_\mu : L^1(I,\mu) \to C(I)$, and $g$ is a nonnegative continuous function on $I\times I$.

To begin the proof, we start with the self-similarity of $\Delta_\mu$ (Proposition \ref{prop:Lapss}):
\begin{equation}
\Delta_\mu ((f_k-f_{n,k})\circ F_w) = \C^{-|w|} \Delta_\mu(f_k-f_{n,k})\circ F_w \quad \text{on} ~I.
\label{eq:sslap}
\end{equation}
Since $f_{n,k}$ is harmonic with respect to $\Delta_\mu$ on $F_w\{(0,1)\}$ for every word $w$ of length $n$, it follows that
\begin{equation}
\Delta_\mu(f_k -f_{n,k})\circ F_w = \Delta_\mu f_k \circ F_w  = \lambda_k f_k \circ F_w \quad \text{on}~(0,1).
\label{eq:reduce}
\end{equation}
Combine (\ref{eq:sslap}) and (\ref{eq:reduce}) and we get
\begin{equation}
\Delta_\mu((f_k-f_{n,k})\circ F_w) = \C^{-n} \lambda_k f_k \circ F_w \quad \text{on}~(0,1).\label{eq:formeq}
\end{equation}
Now apply the Green's operator $G_\mu$ on both sides of (\ref{eq:formeq}) to get
\begin{equation}
(f_k-f_{n,k})\circ F_w = \C^{-n} \lambda_k G_\mu(f_k \circ F_w) \quad \text{on}~(0,1).
\end{equation}
Using the representation (\ref{eq:Green}) we obtain the estimate
\begin{align}
|(f_k - f_{n,k})\circ F_w | &\leq \C^{-n} \lambda_k \|G_\mu(f_k\circ F_w)\|_\infty \nonumber \\
&\leq \C^{-n} \lambda_k \|g\|_\infty \|(f_k\circ F_w)\|_\infty \nonumber \\
&\leq \C^{-n} \lambda_k \|g\|_\infty \|f_k\|_\infty \quad \text{on}~(0,1).
\end{align}
This proves that
\begin{equation}
\|f_k- f_{n,k}\|_\infty \leq  {\bf C}_0^{-n}\lambda_k \|g\|_\infty \|f_k\|_\infty. \label{fkdiffest}
\end{equation}

In order to turn the RHS of (\ref{fkdiffest}) into a more useful estimate, we use (\ref{eigf2}) to replace $\|f_k\|_\infty$ by $\|f_{n,k}\|_{n,\infty}\exp\left(\frac{\lambda_{n,k}}{q-\lambda_{n+1,k}} \frac{p}{2-p}\right)$, which yields (\ref{ineq:fkinftynorm}).
% 
%One now has an estimate of the difference $\lVert f_k- f_{n,k} \rVert$. 
%One can then use \ref{supfm} to arrive at the bound
%
%\begin{align}
%\lVert f_k-f_{n,k}\rVert &\le \lambda_k \C^{-n}\lVert f_{n,k}\rVert \lVert g \rVert\exp\left(\frac{1}{(1-p-\lambda_{n+1,k})}\frac{2}{2-p}|\lambda_{n,k}|\right)  
%\end{align}
\end{proof}

\section{Estimates on the solution of the wave equation}\label{sectionwaveequation} 

We now apply the results of Sections \ref{section3eigenfunctions} and \ref{section5estimates} to estimate the solution of the wave equation on the interval $I$ endowed with the fractal measure $\mu$.

Numerically, we can only compute the eigensolutions of the fractal Laplacian up to a finite level, so in practice we solve the ``approximate'' wave equation
\begin{equation}
\label{eq:approxwaveeqn}
\left\{ \begin{array}{ll} \partial_{tt} u_n = -\Delta_n u_n & \text{on}~ V_n \times [0,T],\\
u_n(\cdot,0) = \delta_0^{(n_0,n)} &\text{on}~V_n,\\
\partial_t u_n(\cdot,0)=0 & \text{on} ~V_n,
\end{array}\right.
\end{equation}
where $\displaystyle \delta_0^{(n,n_0)}=\sum_{k=0}^{3^{n_0}} \alpha_k f_{n,k}$ is the approximate $\delta$-function built up from the first $|V_{n_0}|=(3^{n_0}+1)$ eigenfunctions of $\Delta_n$ (with $\Delta_n f_{n,k} = \lambda_{n,k} f_{n,k}$), and $\alpha_k := \alpha_{n_0,k}\geq 0$ are the coefficients found in Section \ref{sec:specdecom}. Throughout the section $n_0$ will be fixed, and we will not mention $n_0$ explicitly unless the context demands it.

Following the exact same argument, the solution to (\ref{eq:approxwaveeqn}) has the series representation
\begin{equation}
\label{eq:unseries}
u_n(x,t) = \sum_{k=0}^{3^{n_0}} \alpha_k f_{n,k}(x) \cos\left(t\sqrt{\lambda_{n,k}}\right) \quad\text{for all}~x\in V_n ~\text{and}~t\in [0,T].
\end{equation}
For each $t$, we harmonically extend the function $x\mapsto u_n(x, t)$ from $V_n$ to $I$. This procedure allows us to compare $u_n(x,t)$ with
\begin{equation}
\label{eq:useries}
\tilde{u}(x,t) = \sum_{k=0}^{3^{n_0}} \alpha_k f_k(x) \cos\left(t\sqrt{\lambda_k}\right) \quad\text{for all}~ x\in I ~\text{and}~t\in [0,T],
\end{equation}
the solution of the wave equation on $(I,\mu)$ whose initial condition is the truncated series representation of the $\delta$-impulse.
We note that $\tilde{u}$ is 
differentiable in $t$ and continuous in $x$ because the eigenfunctions functions $f_k$ are continuous. 
However it is highly localized function 
at $t=0$, and therefore it 
mimics wave propagation from a delta function initial values.

\subsection{Convergence of approximate  solutions of wave equation.}

In this subsection we establish an upper bound on
\begin{equation}
\label{eq:wavediff}
\left|u_n \left(x,\frac{t\pi}{\sqrt{\lambda_{n,1}}}\right) - \tilde{u}\left(x,\frac{t\pi}{\sqrt{\lambda_1}}\right)\right|
\end{equation}
for all $n$ and $t$, uniform in $x$. This would then give us the convergence of $u_n$ to $\tilde{u}$ at fixed $t$ and uniformly in $x$. Note that we are normalizing $t$ in such a way that the orthogonal projection of the wave onto the lowest eigenfunction (corresponding to eigenvalue $\lambda_{*,0}$) propagates at speed $1$.

\begin{theorem}
Fix $p\in (0,\frac{1}{2})$. Let $u_n$ and $\tilde u$ be respectively defined as in (\ref{eq:unseries}) and (\ref{eq:useries}). Then there exists a positive constant $C=C(n_0, p)$ such that for each $t\in [0,T]$ and $n> n_0$,
\begin{equation}
\sup_{x\in I} \left|u_n \left(x,\frac{t\pi}{\sqrt{\lambda_{n,1}}}\right) - \tilde{u}\left(x,\frac{t\pi}{\sqrt{\lambda_1}}\right)\right| \leq C\left(t\vee 1\right)\C^{-n}.
\label{ineq:wavediff}
\end{equation}
\end{theorem}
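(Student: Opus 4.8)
The plan is to decompose the difference $u_n(x,\cdot)-\tilde u(x,\cdot)$ as a finite sum over the $3^{n_0}+1$ modes and estimate each term separately. For each $k$ with $0\le k\le 3^{n_0}$, the $k$-th contribution to \eqref{eq:wavediff} is
\[
\alpha_k\left| f_{n,k}(x)\cos\!\left(\tfrac{t\pi\sqrt{\lambda_{n,k}}}{\sqrt{\lambda_{n,1}}}\right) - f_k(x)\cos\!\left(\tfrac{t\pi\sqrt{\lambda_k}}{\sqrt{\lambda_1}}\right)\right|,
\]
and I would split this via the triangle inequality into (i) a term controlling the difference of the eigenfunctions, $\alpha_k|f_{n,k}(x)-f_k(x)|$ times a bounded cosine, and (ii) a term controlling the difference of the cosines, $\alpha_k\|f_k\|_\infty$ times $\left|\cos(t\pi\sqrt{\lambda_{n,k}/\lambda_{n,1}}) - \cos(t\pi\sqrt{\lambda_k/\lambda_1})\right|$. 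Since the sum is finite (only $n_0$ is fixed, not $n$), it suffices to get an $O(\C^{-n})$ bound for each fixed $k$, absorbing all $k$-dependent constants into $C=C(n_0,p)$.

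For term (i) I would invoke Theorem~\ref{tfkbound} directly: $\|f_k-f_{n,k}\|_\infty \le \C^{-n}\lambda_k\|f_{n,k}\|_{n,\infty}\|g\|_\infty\exp(\cdots)$. Here $\|f_{n,k}\|_{n,\infty}$ is bounded uniformly in $n$ by Lemma~\ref{lem:eigf} (or Lemma~\ref{bddlemma}), $\lambda_k$ is a fixed finite number, and $\lambda_{n+1,k}\to 0$ so the exponential factor is bounded; hence term (i) is $O(\C^{-n})$ with no $t$-dependence. For term (ii) I would use $|\cos a-\cos b|\le|a-b|$, so I need to bound
\[
t\pi\left|\sqrt{\tfrac{\lambda_{n,k}}{\lambda_{n,1}}} - \sqrt{\tfrac{\lambda_k}{\lambda_1}}\right|.
\]
Writing $\lambda_{n,j} = \C^{-n}(\lambda_j + o(1))$ from Proposition~\ref{prop:renormeigv}, the ratio $\lambda_{n,k}/\lambda_{n,1}$ converges to $\lambda_k/\lambda_1$; the quantitative rate comes again from Theorem~\ref{thm:eigv}, which gives $|\C^{n}\lambda_{n,k} - \lambda_k| = O(\C^{-n})$ after one more iteration step (the lower and upper bounds in \eqref{eigvest} applied with $n$ in place of $n_0$ pinch $\lambda_k/\C^n$ between $\lambda_{n,k}(1+O(\lambda_{n,k}))$ and $\lambda_{n,k}e^{O(\lambda_{n,k})}$, and $\lambda_{n,k}=O(\C^{-n})$). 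Thus $\left|\sqrt{\lambda_{n,k}/\lambda_{n,1}}-\sqrt{\lambda_k/\lambda_1}\right| = O(\C^{-n})$, and term (ii) is $\le (t\vee 1)\cdot O(\C^{-n})$, which produces the factor $t\vee 1$ in \eqref{ineq:wavediff}. Summing over $k$ and over $x$ (taking sup) gives the result.

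The main obstacle is the quantitative control of the ratio $\sqrt{\lambda_{n,k}/\lambda_{n,1}}$ near $t$-large: one must be careful that the square root does not spoil the rate. Since $\lambda_{n,1}\asymp\C^{-n}$ is bounded below (after renormalization) by a positive constant times $\C^{-n}$, and $\lambda_{n,k}\asymp\C^{-n}$ likewise, the ratio stays bounded away from $0$ and $\infty$, so $\sqrt{\cdot}$ is Lipschitz on the relevant range and the rate $O(\C^{-n})$ survives; this needs a brief lemma-style estimate using the two-sided bound in Theorem~\ref{thm:eigv}. A secondary point to check is that the constant $C$ genuinely depends only on $n_0$ and $p$: this is fine because every quantity indexed by $k\le 3^{n_0}$ ($\alpha_k$, $\lambda_{n_0,k}$, $\|f_{n_0,k}\|$, $\lambda_k$) is determined by $n_0$ and $p$, and the number of such terms is $3^{n_0}+1$.
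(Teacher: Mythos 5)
Your argument is correct and is essentially the paper's own proof: the same term-by-term splitting of the finite sum into an eigenfunction-difference part controlled by Theorem \ref{tfkbound} (with $\|f_{n,k}\|_{n,\infty}$ bounded via Lemma \ref{lem:eigf}) and an eigenvalue-ratio part controlled by applying Theorem \ref{thm:eigv} at level $n$ together with Proposition \ref{prop:renormeigv}, with all constants absorbed using the finiteness of the sum over $k\le 3^{n_0}$. The only cosmetic difference is that you bound the cosine difference by the Lipschitz estimate $|\cos a-\cos b|\le |a-b|$, whereas the paper uses the sum-to-product identity and a Taylor expansion of $\sin\left(t\pi \Lambda_{n,k}^-\right)$; this amounts to the same estimate and yields the same $C\left(t\vee 1\right)\C^{-n}$ bound.
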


\begin{proof}
Using the series representations (\ref{eq:unseries}) and (\ref{eq:useries}) and the triangle inequality, we find that (\ref{eq:wavediff}) is bounded from above by %{\color{blue} [JPC: Here I'm assuming that $\alpha_k\geq 0$ for all $k$. This needs justification.]}
\begin{equation}
\sum_{k=0}^{3^{n_0}} \alpha_k\left|f_{n,k}(x) \cos \left(t\pi \frac{\sqrt{\lambda_{n,k}}}{\sqrt{\lambda_{n,1}}}\right) -f_k(x) \cos \left(t\pi \frac{\sqrt{\lambda_k}}{\sqrt{\lambda_1}}\right) \right|,
\end{equation}
which, by a simple manipulation using the sum-to-product trigonometric rules, is equal to
\begin{eqnarray}
&& \sum_{k=0}^{3^{n_0}} \alpha_k \left|[f_{n,k}(x) - f_k(x)] \cos\left(t\pi \Lambda_{n,k}^+ \right) \cos\left( t\pi \Lambda_{n,k}^-\right)\right. \nonumber \\
&& \qquad \qquad \left. - [f_{n,k}(x) + f_k(x)] \sin \left(t\pi \Lambda_{n,k}^+\right) \sin\left(t\pi \Lambda_{n,k}^-\right)\right|,\label{eq:waveest2}
\end{eqnarray}
where $\Lambda_{n,k}^{\pm} := \frac{1}{2} \left(\sqrt{\frac{\lambda_{n,k}}{\lambda_{n,1}}} \pm \sqrt{\frac{\lambda_k}{\lambda_1}} \right)$.  Using again the triangle inequality, we can estimate (\ref{eq:waveest2}) from above by $I_1 + I_2$, where
\begin{eqnarray}
\label{eq:I1} I_1 &:=& \sum_{k=0}^{3^{n_0}} \alpha_k \left|f_{n,k}(x) - f_k(x)\right| \cdot\left|\cos\left(t\pi \Lambda_{n,k}^+ \right) \right|\left|\cos\left( t\pi \Lambda_{n,k}^-\right)\right|,\\
\label{eq:I2} I_2 &:=& \sum_{k=0}^{3^{n_0}} \alpha_k \left|f_{n,k}(x) + f_k(x)\right| \cdot \left|\sin \left(t\pi \Lambda_{n,k}^+\right) \right| \left|\sin\left(t\pi \Lambda_{n,k}^-\right)\right|.
\end{eqnarray}
The key term to control in $I_1$ is $|f_{n,k}(x) - f_k(x)|$, while in $I_2$ it is $\left|\sin\left(t\pi \Lambda_{n,k}^-\right)\right|$. For the former we invoke Theorem \ref{tfkbound},
%\begin{equation}
%|f_{n,k}(x) - f_k(x)| \leq \|f_{n,k} - f_k \|_\infty \leq \lambda_k \C^n \|f_{n,k}\|_\infty \|g\|_\infty \exp\left(\frac{1}{1-p-\lambda_{n+1,k}} \frac{2}{2-p}|\lambda_{n,k}|\right).\;
%\end{equation}
while for the latter we apply the Taylor expansion
\begin{equation}
\label{eq:sinest}
\left|\sin\left(t\pi \Lambda_{n,k}^-\right)\right| \leq t\pi \left|\Lambda_{n,k}^-\right| + O\left((t\Lambda_{n,k}^-)^3\right).
\end{equation}
For terms other than these two, we apply the simple minded estimates $
\left|\cos\left(t\pi \Lambda_{n,k}^{\pm}\right)\right|\leq 1$, $\left|\sin\left(t\pi \Lambda_{n,k}^+\right)\right|\leq 1$, and 
\begin{align}
\left|f_{n,k}(x) + f_k(x)\right| &\leq 2|f_{n,k}(x)| + \|f_{n,k}- f_k\|_\infty \nonumber \\
&\leq 2\|f_{n,k}\|_\infty + O\left(\C^{-n}\right).
\end{align}

First let us estimate $\Lambda_{n,k}^-$, which amounts to controlling the ratio $\frac{\lambda_{n,k}}{\lambda_{n,1}}$. By Theorem \ref{thm:eigv},
\begin{equation}
\label{ineq:eigv}
 \frac{\left(1+\frac{3pq}{(2+pq)^2}\lambda_{n,1}\right)} {\exp\left(\lambda_{n,k} \frac{p(2+q)}{2q(2-p)}\right)}\leq \frac{\lambda_{n,k}}{\lambda_{n,1}}\cdot \frac{\lambda_1}{\lambda_k} \leq \frac{\exp\left(\lambda_{n,1} \frac{p(2+q)}{2q(2-p)}\right)}{\left(1+\frac{3pq}{(2+pq)^2}\lambda_{n,k}\right)}.
\end{equation}
We note that in this case we have 
$ 0 \leqslant k \leqslant 3^{n_0} <n$ and 
so Theorem~\ref{thm:eigv} is applicable.

In what follows we denote $\D := \frac{p(2+q)}{2q(2-p)}$ and $\Di := \frac{3pq}{(2+pq)^2}$. Observe from the discussion in the proof of Theorem \ref{thm:eigv} that $\D > \Di$ whenever $p\in (0,1]$.

From Proposition \ref{prop:renormeigv}, we know that for each fixed $k$, $\lambda_{n,k}= O(\C^{-n})$ as $n\to\infty$. Thus upon expanding the LHS and the RHS of (\ref{ineq:eigv}) up to the $O(\C^{-n})$ terms, we get 
\begin{align}
 1 + \frac{1}{2}\left( \Di\lambda_{n,1} -  \D \lambda_{n,k}\right) +& o(\C^{-n}) \leq \sqrt{\frac{\lambda_{n,k}}{\lambda_{n,1}} \cdot \frac{\lambda_1}{\lambda_k}}  \nonumber \\
 \leq& ~1 + \frac{1}{2} \left(\D\lambda_{n,1} -  \Di \lambda_{n,k}\right) + o(\C^{-n}).
\end{align}
 It then follows that
\begin{equation}
\frac{1}{4}\left(\Di \lambda_{n,1} -\D\lambda_{n,k}\right) + o(\C^{-n}) \leq \Lambda_{n,k}^- \cdot \sqrt{\frac{\lambda_1}{\lambda_k}} \leq \frac{1}{4}\left(\D\lambda_{n,1} - \Di\lambda_{n,k}\right) +o(\C^{-n}).
\end{equation}
Plugging this into (\ref{eq:sinest}) yields
\begin{eqnarray}
&& \left|\sin\left(t\pi \Lambda_{n,k}^-\right)\right| \nonumber \\&\leq& \frac{t\pi }{4} \sqrt{\frac{\lambda_k}{\lambda_1}} \max\left(\left|\D \lambda_{n,1}-\Di\lambda_{n,k}\right|, \left|\D\lambda_{n,k} - \Di \lambda_{n,1}\right|\right) + o(t\C^{-n})\nonumber \\
&=& \frac{t\pi}{4} \sqrt{\frac{\lambda_k}{\lambda_1}} \left|\D \lambda_{n,k} - \Di\lambda_{n,1}\right| + o(t\C^{-n}).
\end{eqnarray}
Putting all the estimates into (\ref{eq:I1}) and (\ref{eq:I2}) gives
\begin{align}
I_1 &\leq \C^{-n} \|g\|_\infty\sum_{k=0}^{3^{n_0}} \alpha_k \lambda_k  \|f_{n,k}\|_{n,\infty} \exp\left(\frac{\lambda_{n,k}}{q-\lambda_{n+1,k}} \frac{2}{2-p}\right)   +  o\left(\C^{-n}\right).\nonumber \\
I_2 &\leq \frac{t\pi}{2} \sum_{k=0}^{3^{n_0}} \alpha_k  \|f_{n,k}\|_\infty \sqrt{\frac{\lambda_k}{\lambda_1}} \left|\D \lambda_{n,k}-\Di\lambda_{n,1}\right| + o(t\C^{-n}).
\end{align}
This means that (\ref{eq:wavediff}) is bounded above by
\begin{align}
\sum_{k=0}^{3^{n_0}} \alpha_k \|f_{n,k}\|_\infty \left[\lambda_k \C^{-n} \|g\|_\infty \exp\left(\frac{\lambda_{n,k}}{q-\lambda_{n+1,k}} \frac{2}{2-p}\right)\right. \nonumber \\ \left.+ \frac{t\pi}{2} \sqrt{\frac{\lambda_k}{\lambda_1}} \left|\D\lambda_{n,k} - \Di \lambda_{n,1}\right|\right] + o\left(\left(t\vee 1\right) \C^{-n}\right).
\label{eq:finalest}
\end{align}

It remains to explain how (\ref{ineq:wavediff}) follows from (\ref{eq:finalest}).
%:
%\begin{itemize} \item 
For $0\leq k \leq 3^{n_0}$, $\lambda_k \leq \lambda_{3^{n_0}}$. Also by Proposition \ref{prop:renormeigv}, there exists a positive constant $C$ independent of $n$ such that
\begin{equation}
|\D \lambda_{n,k} - \Di \lambda_{n,1}| \leq C \C^{-n} |\D \lambda_k- \Di \lambda_1|.
\end{equation}

%\item Also, since $\lambda_{n_0,k} \leq 2$ for all $k\leq |V_{n_0}|$, we find that $\lambda_k  = \lim_{n\to\infty} \C^{-n} \C^{n-n_0} \lambda_{n_0,k} \leq   \C^{-n_0}$ for all $k \leq |V_{n_0}|$.
%\item 

By an argument of Kigami \cite{Ki}, the Green's function corresponding to $\Delta_\mu$ on $I$ can be constructed independently of the measure $\mu$, whence independently of the value $p$. In particular when $p=\frac{1}{2}$, we recover the classical Green's function on $I$ with Lebesgue measure, $g(x,y) = (x\wedge y)((1-x) \wedge (1-y))$. Thus $\|g\|_\infty = \frac{1}{4}$.
%\item 

For the exponential in the first term, note that since $p\in (0,\frac{1}{2})$,
\begin{equation}
\frac{\lambda_{n,k}}{q-\lambda_{n+1,k}} \frac{2}{2-p} < \frac{\lambda_{n,k}}{\frac{1}{2}-\lambda_{n+1,k}} \frac{2}{2-\frac{1}{2}} = \frac{4}{3} \frac{\lambda_{n,k}}{\frac{1}{2}-\lambda_{n+1,k}}.
\label{eq:expest}
\end{equation}
By Proposition \ref{prop:renormeigv}, the RHS of (\ref{eq:expest}) is $O(\C^{-n})$ as $n\to\infty$. Thus the exponential is $\exp(O(\C^{-n}))=1+O(\C^{-n})$.
%\item 

\begin{figure}[htb]
\begin{center}
\includegraphics[scale = 0.2]{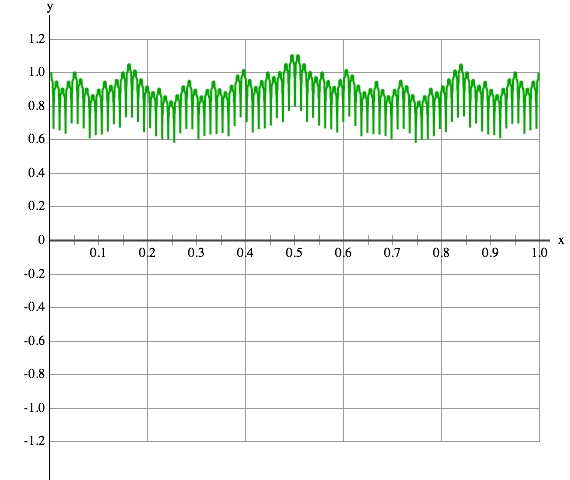}
\includegraphics[scale = 0.2]{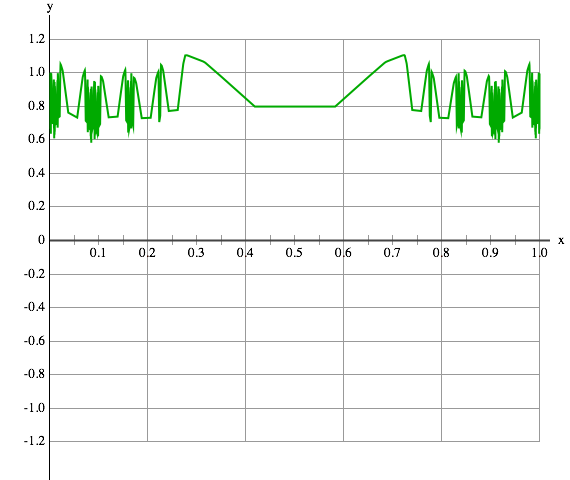}
\includegraphics[scale = 0.2]{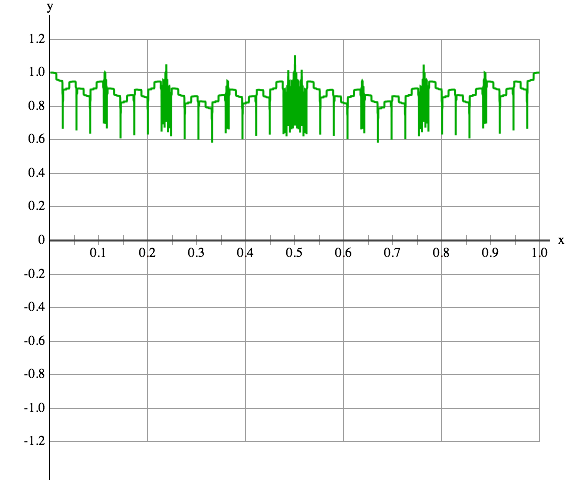}
\caption{A graph of $y(x) = \sum^{3^{n_0}}_{k=0} \alpha_k |f_{n,k}(x)|$ and in order: uniform spacing of points, uniform resistance between points, and uniform measure.}
\label{figure6}
\end{center}
\end{figure}

Finally, we claim that there exists a constant $C$ such that
$$\sum_{k=0}^{3^{n_0}} \alpha_k \|f_{n,k}\|_\infty \leq C$$
for all $n > n_0$. To explain this, note that $\sum_{k=0}^{3^{n_0}} \alpha_k =1$, by virtue of our choice of the initial eigenfunctions ($f_{0,1}$ and $f_{0,2}$) and initial weights $\alpha_{0,1}=\alpha_{0,2}=\frac{1}{2}$, and the matching condition (\ref{specdecomp1}). Then 
the rest of the proof follows from Section~8 of L.~Rogers' paper \cite{Ro}. 
Numerically, $C$ is slightly above $1$, see Figure \ref{figure6}. 
\end{proof}

\section{Numerical computation of 
eigenfunctions and solutions of the 
wave equation}\label{section4numerical}
We present some of the numerical results obtained by   our spectral decimation method.  The spectral decimation is an iterative method, and the code repeats the calculation done in section \ref{section3eigenfunctions}. This code, which is used to produce pictures and to perform the experiments, and a graphical user interface to recreate the results can be found at \url{http://homepages.uconn.edu/fractals/fractalwave/}.  
Here we give a representative variety of figures detailing some of the numerical calculations that have been performed.  Figures \ref{f1}, and \ref{f2} show the first 25 eigenfunctions, and, in particular, the ways in which the symmetry is broken for values of $p$ and $1-p$.  Figures \ref{fdelta1}, and \ref{fdelta2} show the quality of the approximation for the delta function for various values of $n_0$.  In particular one can see that for small values of $p$ (in our case $p = .2$) the approximation is significantly better than for the corresponding values  of $1-p$ (in our case $q = .8$).  This shows why our efforts focused on the cases where $p < .5$.  The next set of figures (\ref{figure1},   \ref{figure2}, \ref{figure3},   \ref{figure4}) heuristically suggest that the visible portion of the wave propagates at a speed proportional to $t^\frac{d_s}{2}$.  But, further investigation will be needed to show this more precisely. 
Figures~\ref{figure5} 
shows three different parametrizations of 
a representative eigenfunction.

\begin{center}
\begin{figure}[!]
\caption{Uniform spacing of points at time $t = 0.4$.  From left to right $p = 0.1, 0.2, 0.3, 0.4$.}
\includegraphics[width = 0.4\textwidth]{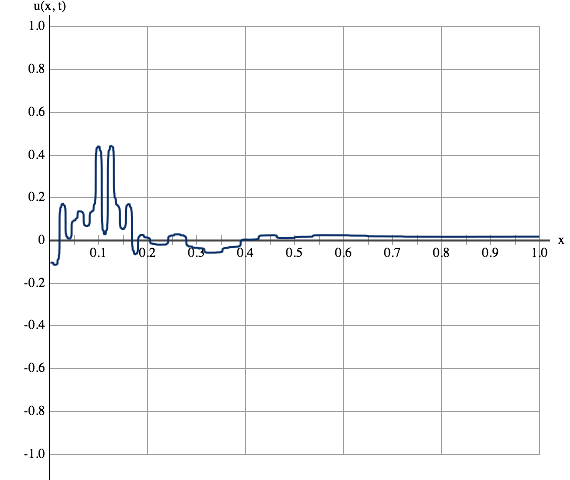}
\includegraphics[width = 0.4\textwidth]{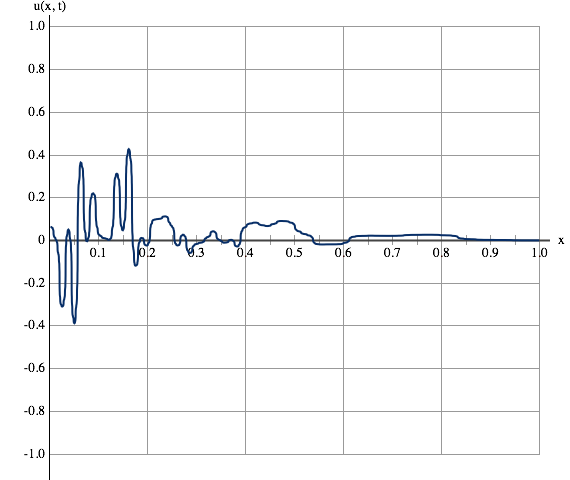}\\
\includegraphics[width = 0.4\textwidth]{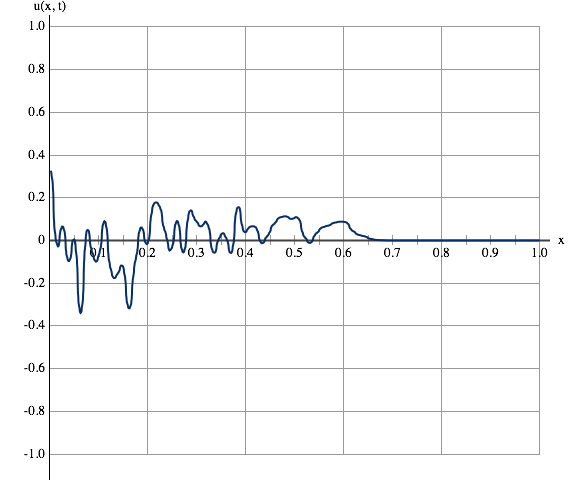}
\includegraphics[width = 0.4\textwidth]{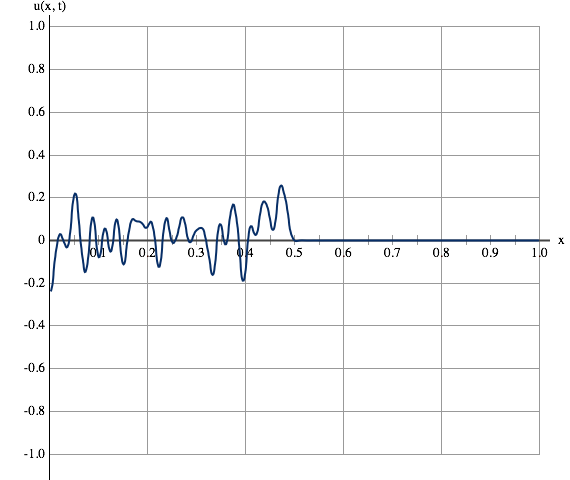}
\label{figure1}
\end{figure}
\end{center}

\begin{center}
\begin{figure}[!]
\caption{Uniform spacing of points at time $t = 0.1$.  From left to right $p = 0.1, 0.2, 0.3, 0.4$.}
\includegraphics[width = 0.4\textwidth]{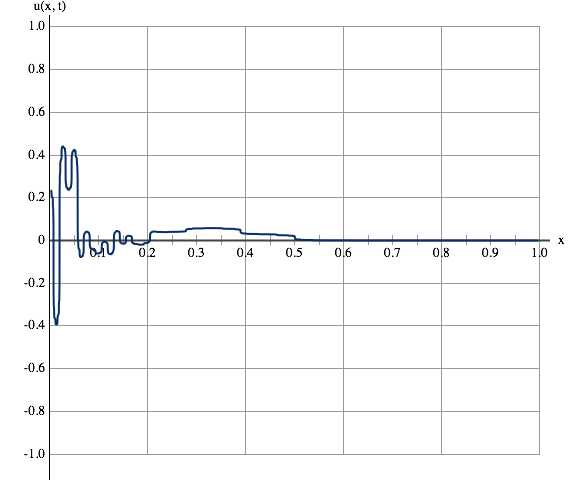}
\includegraphics[width = 0.4\textwidth]{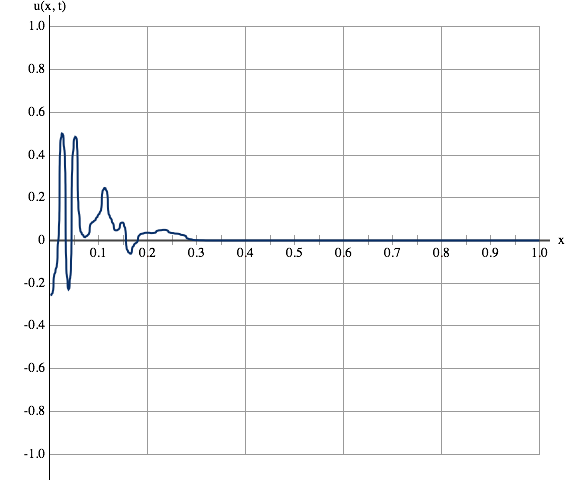}\\
\includegraphics[width = 0.4\textwidth]{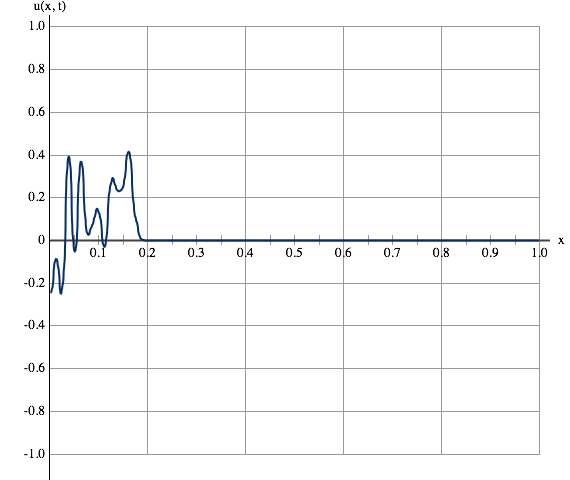}
\includegraphics[width = 0.4\textwidth]{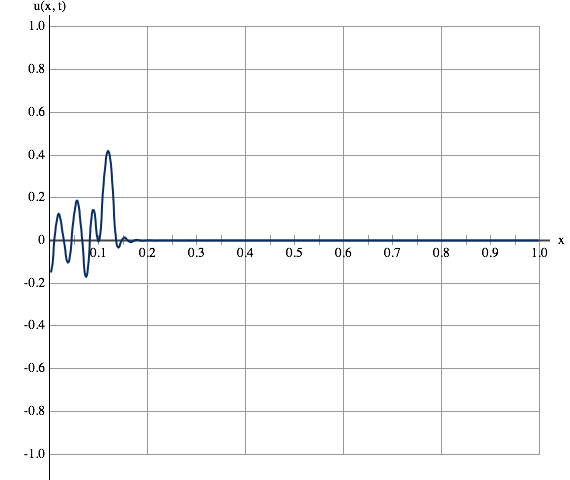}
\label{figure2}
\end{figure}
\end{center}

%\newpage

\begin{center}
\begin{figure}[!]
\caption{Uniform spacing of points and, from left to right: time t = 0.1, 0.2, 0.3, 0.4. with p = 0.1}
\includegraphics[width = 0.4\textwidth]{t1p1uni}
\includegraphics[width = 0.4\textwidth]{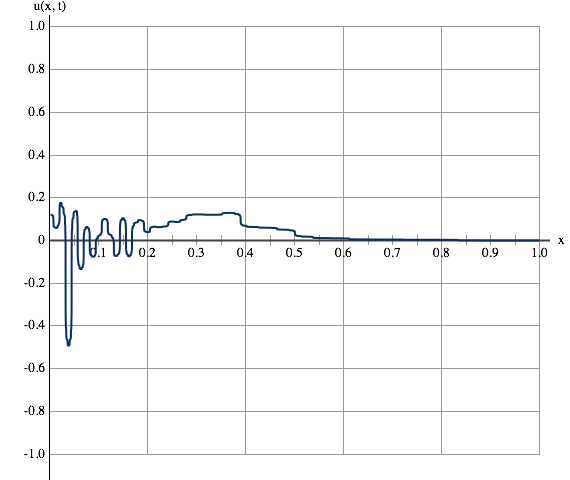}\\
\includegraphics[width = 0.4\textwidth]{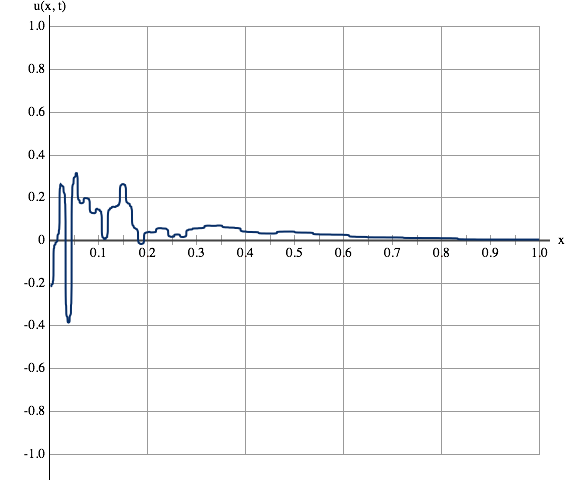}
\includegraphics[width = 0.4\textwidth]{t4p1uni}
\label{figure3}
\end{figure}
\end{center}

\begin{center}
\begin{figure}[!]
\caption{Time $t = 0.4$ and $p = 0.1$ and   from left to right: uniform spacing of points, uniform resistance between points, and uniform measure.}
\includegraphics[scale = 0.20]{t4p1uni}
\includegraphics[scale = 0.20]{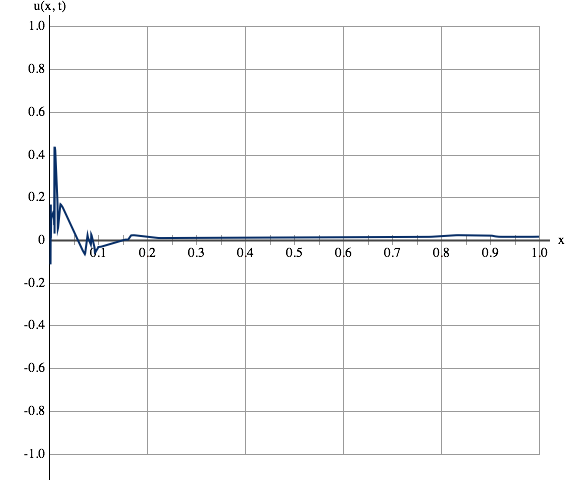}
\includegraphics[scale = 0.20]{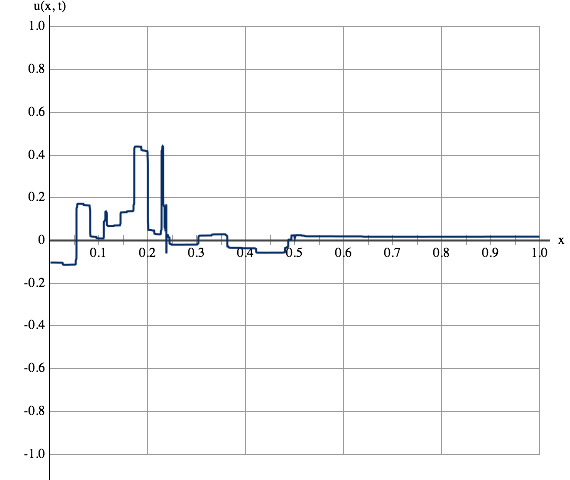}
\label{figure4}
\end{figure}
\end{center}

\begin{center}
\begin{figure}[!]
\caption{Eigenfunction 53, with $p=0.1$ and in order: uniform spacing of points, uniform resistance between points, and uniform measure.}
\includegraphics[scale = 0.20]{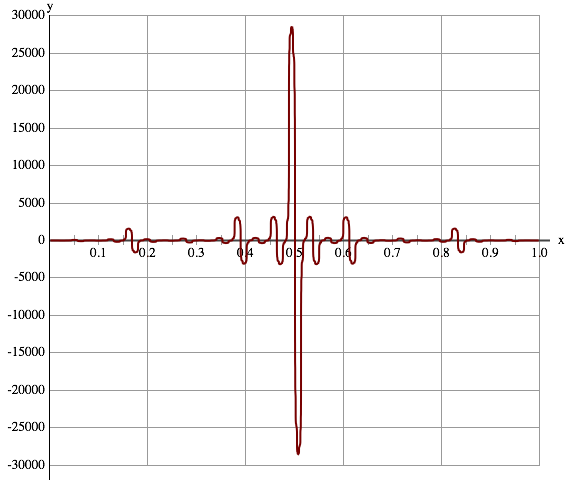}
\includegraphics[scale = 0.20]{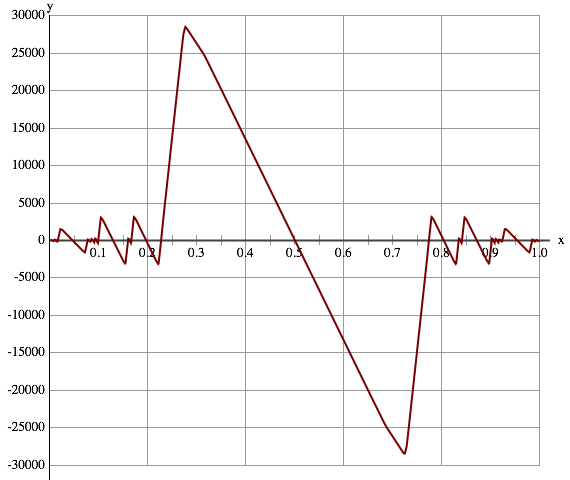}
\includegraphics[scale = 0.20]{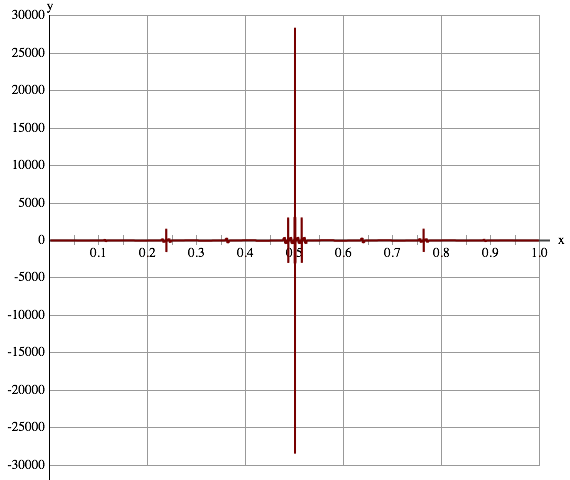}
\label{figure5}
\end{figure}
\end{center}

%\newpage

\end{document}